\newcounter{counter:k-attribute}
\newcounter{counter:1-attribute}
\newcounter{counter:1-attr-rot}
\newcounter{counter:k-Euclidean}
\newcounter{counter:save}
\newtheorem{theorem}{Theorem}
\newtheorem{lemma}[theorem]{Lemma}
\newtheorem{observation}[theorem]{Observation}
\newtheorem{definition}{Definition}[section]
\newenvironment{proof}{\noindent {\bf Proof:}\quad}
{\hfill$\Box$ \par \medskip}
\newcommand{\From}{From\ }
\def\RHPi{\mathrm{\#RH}\Pi_1}
\def\Sat{\textsc{Sat}}
\def\poly{\mathop{\mathrm{poly}}}
\def\APred{\leq_\mathrm{AP}}
\def\BIS{$\#BIS$}
\begin{document}
\title{{\Large \bf The Complexity of Approximately Counting
                   Stable Matchings\footnotetext{A preliminary version of this
                   paper appeared in {\em APPROX 2010,
                   Lecture Notes in Computer Science} {\bf 6302}, Springer, pp.\ 81--94.}}}
\author{Prasad Chebolu\footnote{Department of Computer Science, Ashton Bldg, Ashton St,
     University of Liverpool, Liverpool L69 3BX, United
     Kingdom.
}\
     \footnote{Research supported in part by EPSRC Grant EP/F020651/1.}
\and Leslie Ann Goldberg\footnotemark[1]
\and Russell Martin\footnotemark[1]\ \footnotemark[2]}

\date{}

\maketitle

\begin{abstract}
We investigate the complexity of {\em approximately counting} stable
matchings in the   $k$-attribute model,
where the preference lists are determined by dot products of
``preference vectors'' with ``attribute vectors'', or by
Euclidean distances between ``preference points`` and ``attribute points''.
Irving and Leather~\cite{IL} proved that counting the number of
stable matchings in the general case is $\#P$-complete.  Counting
the number of stable matchings is reducible to counting the number of
downsets in a (related) partial order~\cite{IL} and is interreducible,
in an approximation-preserving sense, to a class of problems that includes
counting the number of
independent sets in a bipartite graph ($\#BIS$)~\cite{DGGJ}.
 It is
conjectured that no FPRAS exists for this class of problems.
We show this approximation-preserving interreducibilty
remains even in the restricted $k$-attribute setting when $k \geq 3$ (dot
products) or $k \geq 2$ (Euclidean distances).  Finally, we show it is
easy to count the number of stable matchings in the $1$-attribute dot-product
setting.
\end{abstract}

\section{Introduction}\label{sect:introduction}

\subsection{Stable Matchings}\label{sect:intro-SM}
The {\em stable matching problem} (or {\em stable marriage problem}) is
a classical
combinatorics problem.  An instance of this problem consists
of $n$ men and $n$ women, where each man has his own
preference list (a total ordering) of the women, and, similarly,
each woman has her own preference list of the men.
A one-to-one pairing of the men with the
women is called a {\em matching} (or {\em marriage}).
Given a matching, if there exists a man $M$ and a woman $w$
in the matching who prefer each other over their partners in the
matching, then the matching is considered {\em unstable} and the
man-woman pair $(M,w)$ is called a {\em blocking pair}. ($M$ and
$w$ would prefer to drop their current partners and pair up with
each other.)  If a matching has no blocking pairs,
then we call it a {\em stable matching}. In 1962, Gale and Shapley
proved that every stable
matching instance has a stable matching, and described
an $O(n^2)$ algorithm for finding one~\cite{GS}.

The stable matching problem has many variants, where ties
in the preference lists could be allowed, where people
might have partial preference lists (i.e.\ someone might
prefer to remain single rather than be paired with certain
members of the opposite sex), generalizations to men/women/pets,
universities and applicants, students and projects, etc.
Some of these generalizations have
also been well-studied and, indeed, algorithms for finding
stable matchings are used for assigning residents to hospitals in
Scotland, Canada, and the USA~\cite{CRMS, NRMP, SFAS}.

In this paper, we concentrate solely
on the classical problem, so the term ``matching instance'' will
refer to one where the number of men is equal to the number of women,
and each man or women has their own full
totally-ordered (i.e.\ no ties allowed) preference list for the opposite
sex.

Irving and Leather~\cite{IL} demonstrated that counting the {\em number}
of stable matchings for a given instance is $\#P$-complete.  This
completeness result relies on the connection between stable
marriages and {\em downsets} in a related partial order (explained
in more detail in Section~\ref{sect:combinatorics}), as counting the
number of downsets in a partial order is another classical $\#P$-complete
problem~\cite{PB}.

Knowing that exactly counting stable matchings is difficult (under
standard complexity-theoretic assumptions), one might
turn to methods for {\em approximately} counting this number.  In
particular, we would like to find a {\em fully-polynomial randomized
approximation scheme} (an {\em FPRAS})
for this task, i.e.\ an algorithm
that provides an arbitrarily close approximation in time polynomial
in the input size and the desired error --- see Section~\ref{sect:AP} for a formal definition.
One method that has proven
successful for other counting problems is the Markov Chain Monte
Carlo (MCMC) method.  This technique exploits a
relationship between counting and sampling described by Jerrum, Valiant,
and Vazirani~\cite{JVV}, namely, for {\em self-reducible} combinatorial
structures, the existence of an FPRAS is computationally equivalent to
a polynomial-time algorithm for approximate sampling from the set of
structures.  Although the set of stable matchings for an instance
does not obviously fit into the class of self-reducible problems,
an efficient algorithm for (approximately) sampling a random
stable matching can be transformed into a method for (approximately)
counting this number.

Bhatnagar, Greenberg, and Randall~\cite{Randall} considered this problem
of sampling a random stable matching using the MCMC method.
They examined a natural Markov chain that uses ``male-improving''
and ``female-improving'' {\em rotations} (see Section~\ref{sect:rotations})
to define a random walk on the state space of stable matchings for a given
instance.  In the most general setting, matching instances can be exhibited for
which the {\em mixing time} of the random walk has an exponential lower
bound, meaning that it will take an exponential amount of time to
(approximately) sample a random stable matching.  This exponential mixing time is
due to the existence of a ``bad cut'' in the state space.
Bhatnagar, et al.\ considered several restricted settings for matching
instances and were still able to show instances for which such a bad
cut exists in the state space, implying an exponential mixing time
in these restricted settings.

Of particular interest to us in this paper, Bhatnagar et al.\ examined
the so-called {\em $k$-attribute model}.  In this setting each man
and woman has two $k$-dimensional vectors associated with them, a
``preference'' vector and a ``position'' (or ``attribute'') vector.
A man $M_i$ has a preference vector denoted by $\hat{M}_i$,
and a position vector denoted by $\bar{M}_i$ (similarly denoted
for the woman $w_j$).  Then, $M_i$ prefers $w_j$ over
$w_\ell$
(i.e.\
$w_j$ appears higher on his preference list than
$w_\ell$) if and
only if
$\hat{M}_i \cdot \bar{w}_j > \hat{M}_i\cdot \bar{w}_\ell$,
where
$\hat{M}_i \cdot \bar{w}_j$ denotes the usual $k$-dimensional dot
product of vectors.  Since we assume that each man has a total order
over the women (and vice-versa), we note that
$\hat{M}_i \cdot \bar{w}_j \not= \hat{M}_i \cdot \bar{w}_\ell$
whenever
$j \not= \ell$
(and analogously for the women's preference vectors/men's position vectors).

Even in this restricted $k$-attribute setting (not every matching
instance can be represented in this manner if $k$ is small~\cite{BL}),
Bhatnagar, Greenberg, and Randall were still able to demonstrate
examples of matching instances having a ``bad cut'' where the
Markov chain has an exponential mixing time.  Bhatnagar et al.\
also considered two other restricted settings, the so-called
{\em $k$-range} and {\em $k$-list} models, but we will not be
considering those cases here.  (Again, they gave instances
having an exponential mixing time for the Markov chain.)

It must be noted that even though the male-improving/female-improving
Markov chain might have an exponential mixing time, this does not
necessarily imply the non-existence of an FPRAS for the corresponding
counting problems.  However, Dyer et al.~\cite{DGGJ}
give evidence suggesting that even {\em approximately} counting the
number of stable matchings is itself difficult, i.e.\
suggesting that an FPRAS may not exist.  They do this
by demonstrating {\em approximation-preserving reductions} amongst several
counting problems, one being that of counting downsets in a partial
order (once again, the connection to stable matchings is outlined in
Section~\ref{sect:combinatorics}).  Relevant background
about approximation-preserving reductions is discussed
in Section~\ref{sect:AP}.  The main point is that the existence of
an FPRAS for one problem would imply the existence of an FPRAS for this
entire class of counting problems.  Currently, the existence of such an FPRAS
remains an open question.

It is precisely the goal of this
paper to consider the complexity of the approximate counting problem
for the $k$-attribute model.

Before we continue, let us formally define some counting problems.
Two counting problems relevant to us are:

{
\noindent
{\em Name:} $\#SM$.

\parskip 0.5ex

\noindent
{\em Instance:} A stable matching instance with $n$ men and $n$ women.

\noindent
{\em Output:} The number of stable matchings.

\bigskip

\noindent
{\em Name:} $\#SM(k\textrm{-attribute})$.

\noindent
{\em Instance:} A stable matching instance with $n$ men and $n$ women
in the $k$-attribute setting, i.e.\ preference lists are determined
using dot products between $k$-dimensional preference and
position vectors as mentioned above.

\noindent
{\em Output:} The number of stable matchings.
}

\medskip

As we stated previously, if $k$ is small (relative to $n$), there exist
preference lists that are not realizable in the $k$-attribute setting~\cite{BL}.
On the other hand, if $k = n$ then we can clearly represent any set
of $n$ preference lists by simply using a separate coordinate for each
person to rank the members of the opposite sex.

Another counting problem we consider in this paper is the following one:

{

\noindent
{\em Name:} $\#SM(k\textrm{-Euclidean})$.

\parskip 0.5ex
\noindent
{\em Instance:} A stable matching instance with $n$ men and $n$ women
in the $k$-dimensional Euclidean setting.  In this setting, men and
women each have a ``preference point'' and ``position point''.  Preference
lists are determined using Euclidean distances between preference points
and position points.

\noindent
{\em Output:} The number of stable matchings.
}

\medskip

In other words, for a $k$-Euclidean stable matching instance man
$M_i$ prefers woman $w_j$ to woman
$w_\ell$
if and only if
$d(\hat{M}_i, \bar{w}_j) < d(\hat{M}_i, \bar{w}_\ell)$,
where
$d(x,y)$ is the Euclidean
distance between points $x$ and $y$.  Once again, ties
are not allowed in the preference lists.

Before we describe our results, let us give a brief introduction to
approximation-preserving (AP) reductions and AP-reducibility.  Further
details can be found in Section~\ref{sect:AP}.

\subsection{AP-reducibility (a brief introduction)}\label{sect:AP-short}

Approximate counting problems have been of increasing interest
in recent years.  Some success has been demonstrated by finding
fully-polynomial randomized approximation schemes for some
$\#P$-complete problems.  Likewise, there are some (but fewer)
problems known to
not admit an FPRAS under usual complexity-theoretic assumptions.

AP-reducibility (for {\em approximation-preserving reducibility}) is
similar in nature to reductions used in showing problems are
$NP$-complete.  Broadly speaking, if $g$ is an integer-valued
function (that counts some type of combinatorial structure) and
there is an\\ approximation-preserving reduction from another
integer-valued function $f$ (counting something else) to $g$, then
an FPRAS for $g$ gives us an FPRAS for $f$.  (Similar to Turing
reductions, the problem sizes are polynomially related and the error
terms of the approximations are also polynomially related.) In this
case we would write $f \leq_{AP} g$ to mean that $f$ has an
AP-reduction to $g$.  Similarly we write $f \equiv_{AP} g$ to mean
that $f \leq_{AP} g$ {\em and} $g \leq_{AP} f$, or that $f$ and $g$
are AP-interreducible. Definitions are provided in
Section~\ref{sect:AP}.

This kind of AP-reduction allows us to study the relative complexity
of approximate counting problems, just as
polynomial many-one
reductions allow us
to compare complexity of decision problems such as graph coloring and
satisfiability.

The complexity class $\RHPi$ of counting problems was
introduced by Dyer, Goldberg, Greenhill and Jerrum~\cite{DGGJ} as a
means to classify a wide class of approximate counting problems
that were previously of indeterminate computational complexity.
The problems in $\RHPi$ are those that can be expressed in terms of
counting the number of models of a logical formula from a
certain syntactically restricted class.  Although the authors
were not aware of it at the time, this syntactically
restricted class had already been studied under the title
``restricted Krom SNP''~\cite{Dalmau05}. The complexity class
$\RHPi$ has a completeness class (with respect to AP-reductions)
which includes a wide and ever-increasing range of natural
counting problems, including: independent sets in a bipartite
graph, downsets in a partial order, configurations in
the Widom-Rowlinson model (all~\cite{DGGJ}) and the partition
function of the ferromagnetic Ising model with mixed external
field~\cite{ising}. Either all of these problems have an
FPRAS, or none do.  No FPRAS is currently known for any of them,
despite much effort having been expended on finding one.

All the problems in the completeness class mentioned above are inter-reducible
via AP-reductions, so  any
of them could be said to exemplify the completeness class.
However, mainly for historical reasons,
the following   problem tends to be taken as a key example in the class,
much in the same way that $\Sat$ has
a privileged status in the theory on NP-completeness.

{

\noindent
{\em Name:} $\#BIS$.

\parskip 0.5ex

\noindent
{\em Instance:} A bipartite graph $B$.

\noindent
{\em Output:} The number of independent sets in $B$.
}

Ge and \v Stefankovi\v c~\cite{BISpoly}
recently proposed an interesting new MCMC algorithm for sampling independent sets
in bipartite graphs. Unfortunately, however, the relevant Markov chain mixes slowly~\cite{slow}
so even this interesting new idea does not give an FPRAS for $\#BIS$.
In fact, Goldberg and Jerrum~\cite{FerroPotts} conjecture that no
FPRAS exists for $\#BIS$ (or for the other problems in the
completeness class). We make this conjecture on empirical grounds,
namely that the problem has survived its first decade
despite considerable efforts to find an FPRAS and
the collection of known \BIS-equivalent problems is growing.

Since Dyer et al.\ show that $\#BIS$ and
counting downsets are both complete in this class,
and it is known that counting downsets is equivalent to counting stable matchings,
the result of Dyer et al.\ implies  $\#BIS \equiv_{AP} \#SM$.

The goal of this paper is to demonstrate AP-interreducibility of $\#BIS$
with the two restricted stable matching problems defined in
Section~\ref{sect:intro-SM}.

\subsection{Our results}\label{sect:results}

In this paper we prove the following results:

\setcounter{counter:k-attribute}{\value{theorem}}
\begin{theorem}\label{thm:3-attr-dot}
$\#BIS \equiv_{AP} \#SM(k\rm{-attribute})$ when $k \geq 3$.
\end{theorem}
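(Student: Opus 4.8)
The plan is to prove the two directions of the AP-interreducibility separately. The direction $\#SM(k\text{-attribute}) \APred \#BIS$ for every fixed $k$ is immediate: a $k$-attribute instance is in particular a stable matching instance, and we already know $\#SM \equiv_\mathrm{AP} \#BIS$, so counting stable matchings in any restricted family AP-reduces to $\#SM$ and hence to $\#BIS$. The real work is the reverse direction, $\#BIS \APred \#SM(k\text{-attribute})$ for $k \geq 3$. Since $\#BIS \equiv_\mathrm{AP} \#SM$ via the downset machinery, and (as the excerpt recalls) counting stable matchings of an instance is the same as counting downsets of the associated rotation poset, a natural route is: given a bipartite graph $B$, build (or invoke the known construction of) a stable matching instance whose rotation poset encodes the independent-set structure of $B$, and then show that this instance — or an AP-equivalent modification of it — can be realized in the $3$-attribute dot-product model. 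The key technical claim to establish is therefore a \emph{realizability lemma}: the particular gadget instances arising in the $\#BIS$-hardness reduction have preference lists expressible as $\hat M_i \cdot \bar w_j$ for suitable $3$-dimensional vectors.

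Concretely, I would first isolate exactly which preference patterns the reduction needs. The standard $\#BIS$-to-stable-matching reductions are built from a few simple gadgets wired together according to the edges of $B$; each gadget involves only a bounded number of people with highly structured preferences (e.g.\ chains, or ``one special partner then everyone else in a fixed background order''). I would then design, for each gadget type, an explicit assignment of $3$-dimensional preference and attribute vectors. A useful trick here is that with one coordinate one can implement a global ``default'' order on all women, a second coordinate can be used to locally perturb the ranking seen by the men of a single gadget (placing the vectors of that gadget's women in general position so the required within-gadget order is obtained), and the third coordinate gives enough freedom to keep different gadgets from interfering and to break any remaining ties. One must check that the perturbations are small enough not to disturb the coarse order, which is a routine but careful interval/genericity argument; the strict-inequality (no ties) requirement is handled by a generic choice of the vectors.

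Two points deserve care. First, AP-reductions allow us to \emph{modify} the instance — e.g.\ pad it with dummy men and women, or duplicate gadgets — as long as the number of stable matchings changes in a controlled, efficiently invertible way; this flexibility is what lets us massage an arbitrary reduction-output instance into one that happens to be $3$-realizable, and I would exploit it rather than trying to realize the raw reduction verbatim. Second, I expect the \textbf{main obstacle} to be precisely the dimension bound $k=3$: one must verify that three coordinates genuinely suffice to simultaneously encode (i) a consistent global backbone order, (ii) the independent-set gadgetry, and (iii) enough genericity for strictness, and that no gadget secretly requires a preference pattern provably unrealizable in dimension $3$ (recall from \cite{BL} that small $k$ cannot realize all preference profiles). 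Showing that the gadgets stay on the realizable side of that boundary — and that the $k=3$ bound is tight in the sense that the companion Euclidean result only needs $k=2$ — is the crux of the argument; the rest is bookkeeping to confirm the reduction is approximation-preserving in both directions.
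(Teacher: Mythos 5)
Your first direction is fine and matches the paper: $\#SM(k\text{-attribute}) \leq_{AP} \#BIS$ follows from $\#SM \leq_{AP} \#Downsets$ (via the rotation poset) and $\#Downsets \leq_{AP} \#BIS$. Your overall strategy for the hard direction is also the one the paper uses --- view the bipartite graph $G$ as the Hasse diagram of a rotation poset, build a matching instance realizing that poset, and exhibit $3$-dimensional vectors inducing its preference lists; even your intuition about how the three coordinates are spent (one global ``backbone'' order, the plane for local gadget structure) roughly matches the paper, which places all position vectors on a unit circle in the $x$--$y$ plane, grouped into arcs according to the cycles of two permutations $\rho,\sigma$ read off from $G$, and uses exponentially growing $z$-coordinates $4^i$ to force a fixed global order of the $b$-women on every $B$-man's list.

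The gap is that everything after ``the key technical claim is a realizability lemma'' is deferred, and it is precisely where all the work lies. There is no off-the-shelf gadget reduction that you can ``invoke'' and then separately realize in dimension $3$: because $3$-attribute instances cannot realize arbitrary preference profiles, the instance and its vectors must be co-designed, and the paper does this from scratch (three men $A_i,B_i,C_i$ and three women $a_i,b_i,c_i$ per edge of $G$, with $\rho$-rotations encoding $V_1$-vertices and $\sigma$-rotations encoding $V_2$-vertices). Moreover, realizing the \emph{intended} partial preference lists is only half the job: the dot products also determine the tails of every preference list, which you do not control, so one must prove that those tails are irrelevant --- the paper does this by computing the male- and female-optimal matchings, showing every stable pair lies on the explicitly constructed prefixes, running Gusfield's rotation-extraction algorithm to enumerate all rotations, and then verifying via the ``explicitly precedes'' relation that the rotation poset has height one and is isomorphic to $G$ (Lemmas~\ref{rho-minimal}--\ref{rho-sigma}). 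None of these steps is routine genericity or bookkeeping, and without them the claimed bijection between stable matchings and independent sets of $G$ --- which is what makes the reduction approximation-preserving with a single oracle call --- is not established. Your remark about padding/modifying the instance is a red herring here: the paper's reduction is an exact bijection and needs no such correction.
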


In other words, $\#BIS$ is AP-interreducible with counting
stable matchings in the $k$-attribute setting when $k\geq 3$,
so this problem is equivalent in terms of approximability to the complete problems in  the
complexity class $\RHPi$.

\setcounter{counter:1-attribute}{\value{theorem}}
\begin{theorem}\label{thm:1-attribute}
$\#SM(1\rm{-attribute})$ is solvable in polynomial time.
\end{theorem}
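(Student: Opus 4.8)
The plan is to show that in the $1$-attribute dot-product model the preference structure is so rigid that the stable matching is essentially unique, so the counting problem is trivial. First I would analyze what a $1$-attribute instance looks like: each man $M_i$ has a scalar preference weight $\hat M_i \in \mathbb{R}$ and a scalar position $\bar M_i \in \mathbb{R}$, and similarly each woman $w_j$ has $\hat w_j$ and $\bar w_j$. Man $M_i$ prefers $w_j$ to $w_\ell$ iff $\hat M_i \bar w_j > \hat M_i \bar w_\ell$. The key observation is that this ordering of the women depends only on the \emph{sign} of $\hat M_i$: if $\hat M_i > 0$ then $M_i$ ranks the women in decreasing order of $\bar w_j$; if $\hat M_i < 0$ he ranks them in increasing order of $\bar w_j$; and $\hat M_i = 0$ is impossible since it would force ties. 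So there are only two possible preference lists among the men --- call them the ``$+$ list'' (women sorted by $\bar w_j$ descending) and the ``$-$ list'' (ascending) --- and symmetrically only two possible preference lists among the women, determined by the signs of the $\hat w_j$.

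Next I would exploit this to pin down the stable matchings. Since all men with $\hat M_i > 0$ share the identical preference list, and likewise all men with $\hat M_i < 0$, and similarly for women, the instance is described by a constant amount of combinatorial data (the sorted order of the $\bar w_j$'s, the sorted order of the $\bar M_i$'s, and which men/women are ``$+$'' versus ``$-$''). I would argue that in this situation the set of stable matchings is a single matching (or can be enumerated in polynomial time directly). The cleanest route: recall (Section~\ref{sect:combinatorics}, available to us) that the set of stable matchings forms a distributive lattice and that the man-optimal and woman-optimal stable matchings are computable in $O(n^2)$ time by Gale--Shapley~\cite{GS}. It then suffices to show the man-optimal and woman-optimal matchings coincide, which forces uniqueness, hence the answer is always $1$. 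To see they coincide, I would show no blocking-pair structure can support two distinct stable matchings: because men fall into only two preference classes and women into only two, any pair of distinct stable matchings would have to ``rotate'' a set of men within one class against a set of women within one class, and the linear (sign-determined) structure of preferences makes every such candidate rotation contain a blocking pair. Concretely, I would take two stable matchings $S \ne S'$, pick $M_i$ with different partners, follow the standard argument that some woman $w_j$ is matched to $M_i$ in one and strictly prefers $M_i$ in the other, and derive a blocking pair from the monotonicity of the position values.

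The main obstacle I anticipate is the last step: carefully ruling out all nontrivial rotations. One has to handle the interaction of the two men's classes with the two women's classes --- a man in the ``$+$'' class and a man in the ``$-$'' class disagree on the order of the women, so a rotation could in principle involve men from both classes, and one must check each combination of signs. I expect a short case analysis (on the four sign-combinations, using the total orders of $\bar M_i$ and $\bar w_j$ to locate a blocking pair) to dispose of this. If uniqueness turns out to fail in some degenerate sign-configuration, the fallback is that the stable matchings still form a tiny, explicitly describable family (parametrized by a single threshold index), which is enumerable in polynomial time; either way the counting problem is in $P$.
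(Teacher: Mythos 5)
Your opening observation is correct and matches the paper's starting point: in the $1$-attribute model each man's ranking of the women depends only on the sign of his preference scalar, so the men have at most two possible preference lists (one the reverse of the other), and likewise for the women. However, the central claim of your proof --- that the man-optimal and woman-optimal stable matchings coincide, so the count is always $1$ --- is false, and not merely in degenerate configurations. Take two men with $\hat{M}_1>0$, $\hat{M}_2<0$ and two women with positions $\bar{w}_1>\bar{w}_2$, so that $M_1$ has list $w_1 w_2$ and $M_2$ has list $w_2 w_1$; give the women preference scalars so that $w_1$ has list $M_2 M_1$ and $w_2$ has list $M_1 M_2$. Then both $\{(M_1,w_1),(M_2,w_2)\}$ and $\{(M_1,w_2),(M_2,w_1)\}$ are stable: this is exactly a rotation of size $2$, and such rotations arise generically whenever a ``$+$'' man and a ``$-$'' man are matched across a ``$+$'' woman and a ``$-$'' woman. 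So the number of stable matchings can be strictly greater than $1$, and the case analysis you hope will ``dispose of'' all nontrivial rotations cannot succeed.

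Your fallback sentence --- that the stable matchings form a family parametrized by a single threshold index --- is in fact the correct statement, and it is what the paper proves: the rotation poset of a $1$-attribute instance is a chain (a path), so the number of stable matchings is the number of rotations plus one, which is computable in polynomial time by finding the male- and female-optimal matchings and extracting the rotations. But this is the entire content of the theorem, not a remark; the paper establishes it via three nontrivial lemmas (every rotation has even size with alternating preference lists, every rotation has size exactly $2$, every person participates in at most one rotation, and any two rotations are comparable under the precedence relation), each requiring a careful blocking-pair analysis of the kind you only sketch. As written, your proposal proves the wrong statement in its main line and asserts the right statement without proof in its fallback, so there is a genuine gap.
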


We can also prove AP-interreducibility
with $\#BIS$
in the $k$-Euclidean
setting (when $k \geq 2$) in a similar manner.
Recall that in the $k$-Euclidean
setting, preference lists are determined by (closest)
Euclidean distances between the ``preference points''
and ``position points''.

\setcounter{counter:k-Euclidean}{\value{theorem}}
\begin{theorem}\label{thm:k-Euclidean}
$\#BIS \equiv_{AP} \#SM(k\rm{-Euclidean})$ when $k \geq 2$.
\end{theorem}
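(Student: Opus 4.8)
The plan is to prove Theorem~\ref{thm:k-Euclidean} by mimicking the structure of the proof of Theorem~\ref{thm:3-attr-dot}, transferring the construction from the dot-product model to the Euclidean model. The easy direction, $\#SM(k\textrm{-Euclidean}) \APred \#BIS$, should follow essentially for free: a $k$-Euclidean instance is in particular an ordinary stable matching instance, so we already have $\#SM(k\textrm{-Euclidean}) \APred \#SM \equiv_{AP} \#BIS$ by the remarks at the end of Section~\ref{sect:AP-short}. (One must check that the instance produced is a legitimate $\#SM$ instance, i.e.\ that the Euclidean distances are all distinct so that the preference lists are genuine total orders; this is a non-degeneracy condition that can be enforced by a tiny perturbation of the points.) So the real content is the reverse direction $\#BIS \APred \#SM(k\textrm{-Euclidean})$ for $k \geq 2$.

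For the hard direction, I would take whatever gadget is used in the proof of Theorem~\ref{thm:3-attr-dot} --- which reduces from $\#BIS$ (or from counting downsets, or from some other known $\RHPi$-complete problem) by building a $3$-attribute instance whose stable matchings are in approximation-preserving correspondence with the combinatorial objects being counted --- and re-realize the same preference lists using points in the Euclidean plane. The key observation is the standard one that in $\mathbb{R}^2$, for a fixed ``preference point'' $p$, ranking a set of ``position points'' by Euclidean distance to $p$ is equivalent to ranking them by a linear functional after the lifting map $x \mapsto (x, \|x\|^2)$: indeed $d(p,q)^2 = \|p\|^2 - 2\,p\cdot q + \|q\|^2$, so comparisons $d(p,q) < d(p,q')$ reduce to comparisons of the affine form $-2\,p\cdot q + \|q\|^2$, which is an affine function of the $3$-dimensional lifted vector $(q, \|q\|^2)$ with coefficient vector $(-2p, 1)$. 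Thus any preference order achievable with $3$-dimensional dot products whose ``position'' vectors happen to lie on the paraboloid and whose ``preference'' vectors happen to have last coordinate $1$ is achievable with $2$-Euclidean distances, and conversely. So the plan is: (i) recall the $3$-attribute gadget from Theorem~\ref{thm:3-attr-dot}; (ii) show that the particular preference and position vectors used there can be taken (after an affine change of coordinates) to have the special form required by the lifting, or else re-engineer the gadget directly in the Euclidean picture so that the same qualitative blocking-pair structure holds; (iii) conclude that the stable matchings of the $2$-Euclidean instance coincide with those of the $3$-attribute instance, so the same approximation-preserving bijection applies.

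The main obstacle I expect is step (ii): the gadget from Theorem~\ref{thm:3-attr-dot} was presumably designed for convenience in the dot-product model and its position vectors will not automatically sit on a paraboloid, nor will its preference vectors have the normalized last coordinate. One cannot in general move an arbitrary configuration of $3$-dimensional vectors onto the paraboloid while preserving all the relevant sign conditions. So rather than a black-box transfer, I would likely redo the construction from scratch in the plane: place the men's and women's preference points and position points explicitly as points in $\mathbb{R}^2$ (for instance, with position points clustered near a few locations and preference points positioned so that the induced distance orders replicate the needed local preference patterns), and then re-verify the blocking-pair analysis with Euclidean distances in place of dot products. The bookkeeping --- checking that every intended stable matching is stable and every unintended matching has a blocking pair --- is routine but must be carried out afresh, and the non-degeneracy (all pairwise distances distinct) must be maintained, for which a generic infinitesimal perturbation argument suffices. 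Once the Euclidean gadget is in hand, the approximation-preserving correspondence and the counting argument are identical to those in the proof of Theorem~\ref{thm:3-attr-dot}, and the fact that $2$-Euclidean gadgets embed into $k$-Euclidean gadgets for all $k \geq 2$ (just pad the extra coordinates with a common constant) gives the full range of $k$.
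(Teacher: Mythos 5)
Your proposal matches the paper's approach: the easy direction follows from $\#SM \leq_{AP} \#Downsets \leq_{AP} \#BIS$, and for the hard direction the paper does exactly what you ultimately settle on --- it abandons any black-box transfer and instead places explicit position and preference points in $\mathbb{R}^2$ (the $b$-women and $C$-men on one axis, the rest on the other, with carefully spaced coordinates), then verifies by direct distance computations that the resulting partial preference lists coincide with the lists~(\ref{men}) and~(\ref{women}) from the $3$-attribute construction, after which the rotation-poset analysis of Section~\ref{sect:k-attribute} applies verbatim. Your recognition that the $3$-attribute gadget cannot simply be lifted onto the paraboloid, and that the construction must be redone in the plane so that the same preference-list structure (and hence the same bijection with independent sets) emerges, is precisely the content of Section~\ref{sect:k-Euclidean}.
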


The rest of the paper is laid out as follows:

We review further background on approximation schemes and AP-reductions in
Section~\ref{sect:AP}.

Section~\ref{sect:combinatorics} reviews some combinatorics
of the stable matching problem that is relevant for our purposes in
this paper.

Section~\ref{sect:k-attribute} demonstrates Theorem~\ref{thm:3-attr-dot}
and Section~\ref{sect:1-attribute} is devoted to proving
Theorem~\ref{thm:1-attribute}.

We give the construction required to demonstrate
Theorem~\ref{thm:k-Euclidean} in Section~\ref{sect:k-Euclidean}.
This construction ends up giving us identical preference lists as
those for the $k$-attribute ($k \geq 3$) model. Thus, the remainder
of the proof to show AP-interreducibility between\\
$\#SM(k\textrm{-Euclidean})$ for $k \geq 2$ and $\#BIS$ is identical
to that for the $3$-attribute setting and is not repeated.

\section{Randomized Approximation Schemes and\\ Approximation-preserving reductions}\label{sect:AP}

A \emph{randomized approximation scheme\/} is an algorithm for
approximately computing the value of a function~$f:\Sigma^*\rightarrow
\mathbb{R}$.
The
approximation scheme has a parameter~$\varepsilon>0$ which specifies
the error tolerance.
A \emph{randomized approximation scheme\/} for~$f$ is a
randomized algorithm that takes as input an instance $ x\in
\Sigma^{* }$ (e.g., for the problem $\#SM$, the
input would be an encoding of a stable matching instance)
and a rational error
tolerance $\varepsilon >0$, and outputs a rational number $z$
(a random variable of the ``coin tosses'' made by the algorithm)
such that, for every instance~$x$,
\begin{equation}
\label{eq:3:FPRASerrorprob}
\Pr \big[e^{-\epsilon} f(x)\leq z \leq e^\epsilon f(x)\big]\geq \frac{3}{4}\, .
\end{equation}
The randomized approximation scheme is said to be a
\emph{fully polynomial randomized approximation scheme},
or \emph{FPRAS},
if it runs in time bounded by a polynomial
in $ |x| $ and $ \epsilon^{-1} $.
Note that the quantity $3/4$ in
Equation~(\ref{eq:3:FPRASerrorprob})
could be changed to any value in the open
interval $(\frac12,1)$ without changing the set of problems
that have randomized approximation schemes~\cite[Lemma~6.1]{JVV}.

We now define the notion of an approximation-preserving (AP) reduction.
Suppose that $f$ and $g$ are functions from
$\Sigma^{* }$ to~$\mathbb{R}$. As mentioned before,
an AP-reduction from~$f$ to~$g$ gives a way to turn an FPRAS for~$g$
into an FPRAS for~$f$. Here is the formal definition. An {\it approximation-preserving reduction\/}
from $f$ to~$g$ is a randomized algorithm~$\mathcal{A}$ for
computing~$f$ using an oracle for~$g$. The algorithm~$\mathcal{A}$ takes
as input a pair $(x,\varepsilon)\in\Sigma^*\times(0,1)$, and
satisfies the following three conditions: (i)~every oracle call made
by~$\mathcal{A}$ is of the form $(w,\delta)$, where
$w\in\Sigma^*$ is an instance of~$g$, and $0<\delta<1$ is an
error bound satisfying $\delta^{-1}\leq\poly(|x|,
\varepsilon^{-1})$; (ii) the algorithm~$\mathcal{A}$ meets the
specification for being a randomized approximation scheme for~$f$
(as described above) whenever the oracle meets the specification for
being a randomized approximation scheme for~$g$; and (iii)~the
run-time of~$\mathcal{A}$ is polynomial in $|x|$ and
$\varepsilon^{-1}$.

According to the definition, approximation-preserving reductions may use randomization
and may make multiple oracle calls. Nevertheless,
the reductions that we present in this paper are deterministic.
Each reduction makes a single oracle call (with $\delta=\epsilon$) and returns the
result of that oracle call.
A word of warning about terminology:
Subsequent to~\cite{DGGJ}, the notation $\APred$ has been
used
to denote a different type of approximation-preserving reduction which applies to
optimization problems.
We will not study optimization problems in this paper, so hopefully this will
not cause confusion.

\section{Combinatorics of the stable matching problem}\label{sect:combinatorics}

The (classical) stable matching problem has a rich combinatorial
structure which has been widely studied.  We relate some aspects
of this structure that we will need in this paper.  Many of the
definitions and results that follow can be found, for example,
in~\cite{Knuth, IL, GI, Gusfield}.

\subsection{The Gale-Shapley algorithm}\label{sect:GS}

In their seminal paper on the stable matching problem, Gale and
Shapley~\cite{GS} gave a polynomial-time algorithm for constructing
a stable matching.  This is generally referred to as
the ``proposal algorithm'' and bears the names of Gale and Shapley
in all of the literature on stable matchings.  One sex (typically the men) make
proposals to members of the other, forming ``engagements''.
Once all the ``proposers'' are engaged, the algorithm terminates
with a stable matching.

A description of this algorithm follows.

\begin{algorithm}
\caption{Gale-Shapley Algorithm}\label{GS}
\begin{itemize}
\item {Initially every man and every woman is free.}

\item {Repeat until all men are engaged:}

\item {A free man $M$ proposes to $w$, the highest woman on his list
    who has not already rejected him.
       \begin{itemize}
       \item {If $w$ is free, then she accepts the proposal and $M$ and $w$
       become engaged.}
       \item{ If $w$ is engaged to $M'$, then
              \begin{itemize}
              \item Let $M^+$ be the favorite of $w$ between men
              $M$ and $M'$.
              \item Let $M^-$ be the least favorite of $w$ between men
              $M$ and $M'$.
              \item $M^+$ and $w$ become engaged to each other.
              \item $w$ rejects $M^-$ and $M^-$ is set free.
              \end{itemize}}
       \end{itemize}}
\end{itemize}
\end{algorithm}

As noted by Gale and Shapley (and others), the above algorithm
computes the male-optimal stable
matching, which is optimal in the very
strong sense that every man likes his partner in
this
matching at least as much as his partner in any other stable matching.
Given an instance with $n$~men and $n$~women, the algorithm computes
the male-optimal stable matching in time $O(n^2)$.

During the algorithm, after a woman becomes ``engaged'' she never
becomes free,
though she might be engaged to different men at different times
during the execution of the algorithm. On the other hand, a man
could oscillate
between being free and being engaged.

It is well-known (see, e.g.~\cite{GS, Knuth})
that the male-optimal matching may be obtained by
taking {\em any} ordering of the men and have them make
proposals in that order, i.e.\ when ``a free man $M$ proposes...''
we can take the highest free man in our ordering of the men to perform
the next proposal.

By reversing
the roles of men and women (i.e.\ the women are the ``proposers''), we
can obtain the female-optimal stable matching.

\subsection{Stable matching lattice}\label{sect:lattice}

Given a matching instance and two stable matchings ${\cal M}$
and ${\cal M'}$ where
\begin{eqnarray*}
{\cal M} & = & \{(M_1,w_1), (M_2,w_2),\cdots,(M_n,w_n)\},\; \\
{\cal M'} & = & \{(M'_1,w_1), (M'_2,w_2),\cdots,(M'_n,w_{n})\},
\end{eqnarray*}
we define $\max\{M_i,M'_i\}$,
$\min\{M_i,M'_i\}$, $\max\{{\cal M},{\cal M'}\}$ and $\min\{{\cal
M},{\cal M'}\}$ as follows:
\begin{align*}
\max\{M_i,M'_i\} &= \text{ favorite choice of woman } w_i \text{
between men } M_i \text{ and } M'_i\\
\min\{M_i,M'_i\} &= \text{ least preferred choice of woman } w_i
\text{ between men } M_i \text{ and } M'_i\\
\max\{{\cal M},{\cal M'}\} &= \{(\max\{M_1,M'_1\},w_1),
(\max\{M_2,M'_2\},w_2),\cdots,(\max\{M_n,M'_n\},w_n)\}\\
\min\{{\cal M},{\cal M'}\} &= \{(\min\{M_1,M'_1\},w_1),
(\min\{M_2,M'_2\},w_2),\cdots,(\min\{M_n,M'_n\},w_n)\}
\end{align*}

Note
that in the expression $\max\{M_i,M'_i\}$, the
woman $w_i$ can deduced from the arguments since she is
the only woman married to $M_i$ in $\cal M$ and in $M'_i$ in
$\cal M'$.  From~\cite{Knuth}, we have that $\max\{{\cal M},{\cal M'}\}$ and
$\min\{{\cal M},{\cal M'}\}$ are themselves stable matchings.
Further, we define the relation
${\cal M} \leq {\cal M'}$ if and only if
${\cal M'} = \max\{{\cal M},{\cal M'}\}$. It is clear that the
relation $\leq$ is reflexive,
antisymmetric, and transitive. Hence, the stable matchings of a
stable matching instance form a lattice under the $\leq$ relation.

In fact, this lattice is a {\em distributive lattice} under the
``max'' and ``min'' operations defined above~\cite{Knuth}.
The male-optimal matching is the minimum element in this lattice
(under the $\leq$ relation), while the female-optimal matching is
the maximum element.

It is well-known (see, for instance,~\cite{Davey}) that a
finite distributive lattice is isomorphic to the lattice of
{\em downsets} of another partial order (ordered by subset inclusion).
We shall shortly see
how this other downset lattice arises in the context of stable matchings,
and its connection to the stable matching lattice.

\subsection{Stable pairs and rotations}\label{sect:rotations}

\begin{definition}
A pair $(M,w)$ is called {\em stable} if and only if $(M,w)$ is a pair in
some stable matching ${\cal M}$.
A pair $(M,w)$ that is not stable is called an {\em unstable pair}.
\end{definition}

\begin{definition} Let ${\cal M}$ be a stable matching. For any
man $M$ (woman $w$), let $sp_{{\cal M}}(M)$ ($sp_{{\cal M}}(w)$)
denote the spouse of man $M$ (woman $w$) in the matching ${\cal M}$.
\end{definition}

\begin{definition}\cite{Randall}
Let ${\cal M}$ be a stable matching.
The {\em suitor} of a man $M$ is defined to be the first woman $w$ on
$M$'s preference list such that (i) $M$ prefers
his spouse over $w$ and (ii) $w$ prefers $M$ over her spouse.
The suitor of man $M$ is denoted by $S_{{\cal M}}(M)$.\end{definition}

We note that $S_{{\cal M}}(M)$ may not exist for every man. For
instance,
if $\cal M$ is the female-optimal stable matching,
then $S_{{\cal M}}(M)$ would
not exist.

\begin{definition}\cite{IL} Let ${\cal M}$ be a stable matching.
Let $R=\{(M_0,w_0), (M_1,w_1), \cdots,$ $(M_{k-1},w_{k-1})\}$ be an
ordered list of pairs from ${\cal M}$ such that for every $i$, $0
\leq i \leq k-1$, $S_{{\cal M}}(M_i)$ is $w_{i + 1 (\!\!\!\mod k)}$.
Then $R$ is a {\em rotation} (exposed in the matching ${\cal M}$).
\end{definition}

A stable matching may have many or no exposed rotations.
Applying an exposed rotation to a stable matching
(i.e.\ breaking the pairs $(M_i,w_i)$
and forming the new pairs $(M_i, w_{i+1})$) gives a new stable
matching in which the women are ``happier" and the men are less
happy. In other words, after a rotation, every woman (respectively, man)
involved in the rotation is married to someone higher (resp.\ lower)
on her (resp.\ his) preference list than her (resp.\ his) partner
in the rotation.

We can similarly define suitors for the
women, given some stable matching ${\cal M}$.  We do not
need to do so for the purposes of this paper, but the Markov
chain that Bhatnagar, et al.\ examine in~\cite{Randall} consists
of moves that are ``male-improving'' and ``female-improving''
rotations.  Starting from any stable matching, it is possible
to obtain any other stable matching using some (appropriately
chosen) sequence of male-improving and/or female-improving
rotations~\cite{IL}.

\begin{definition}\cite{Gusfield}\label{eliminated}
A pair $(M,w)$, not necessarily stable, is said to be
{\em eliminated by the rotation $R$} if $R$ moves $w$ from $M$ or
below on her preference list to a man strictly above $M$.
\end{definition}

Note that if a stable pair $(M,w)$ in a rotation $R$ is eliminated by
$R$, and if $(M,w')$ is any other pair eliminated by $R$, then man
$M$ prefers $w$ over $w'$, for otherwise no matching that has $R$
exposed in it could be stable.

\begin{lemma}\cite{IL}
No pair is eliminated by more than one rotation, and for any pair
$(M,w)$, at most one rotation moves $M$ to $w$.
\end{lemma}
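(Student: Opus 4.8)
The plan is to deduce both assertions from the Irving--Leather description of the set of stable matchings of an instance as the set of downsets of a ``rotation poset'', combined with the monotonicity of each person's partner along the stable-matching lattice that was recorded in Sections~\ref{sect:lattice} and~\ref{sect:rotations}. The one substantial structural ingredient I would use, citing it from~\cite{IL} (or~\cite{Gusfield}), is: there is a partial order $\Pi$ whose elements are the rotations, such that the stable matchings are in bijection with the downsets of $\Pi$, the male-optimal matching corresponds to the empty downset, the female-optimal matching to all of $\Pi$, and a cover in the lattice corresponds to adding a single exposed rotation to the corresponding downset. Two consequences I would record are: (a)~whether ``$R$ moves $M$ to $w$'' or ``$R$ eliminates $(M,w)$'' holds depends only on the ordered list $R$, not on the matching in which $R$ happens to be exposed, since applying $R$ breaks exactly the pairs of $R$ and re-forms them cyclically; and (b)~every maximal chain of stable matchings from the male-optimal matching up to the female-optimal matching has length equal to the number of rotations and applies each rotation exactly once (a maximal chain in the downset lattice of $\Pi$ is a linear extension of $\Pi$).

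Next I would fix such a maximal chain ${\cal M}_0 < {\cal M}_1 < \cdots < {\cal M}_m$, with ${\cal M}_0$ male-optimal, ${\cal M}_m$ female-optimal, and ${\cal M}_i$ obtained from ${\cal M}_{i-1}$ by applying an exposed rotation $R_i$, so that $\{R_1,\dots,R_m\}$ is exactly the set of all rotations, each appearing once. By the facts recorded in Section~\ref{sect:rotations}, applying a rotation leaves unchanged the partner of every person not in the rotation, and moves every man (respectively, woman) in the rotation strictly down (respectively, up) his (respectively, her) preference list. Hence along the chain a fixed person's partner changes precisely at the steps whose rotation involves that person, and each such change is strictly monotone. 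So if $t_0,t_1,t_2,\dots$ are the successive \emph{distinct} partners of a fixed man $M$ along the chain (with $t_0$ his male-optimal partner), then $M$ strictly prefers $t_0$ to $t_1$ to $t_2$ to $\cdots$, and the rotation carrying him from $t_j$ to $t_{j+1}$ is the $(j{+}1)$-st rotation of the chain that involves $M$; dually, if $q_0,q_1,q_2,\dots$ are the successive distinct partners of a fixed woman $w$, then $q_0,q_1,q_2,\dots$ occur strictly higher and higher on $w$'s list, and the rotation carrying her from $q_j$ to $q_{j+1}$ is the $(j{+}1)$-st rotation of the chain that involves $w$.

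The two claims then fall out. A rotation moving $M$ to $w$ is one whose pairs include $(M,\cdot)$ and which, when exposed, has $M$ married to that woman; since it appears once in the chain and $M$ must already be married to the relevant woman at that point, it occurs as a step where $M$'s partner strictly changes, changing it to $w$, so $w=t_{j+1}$ for some $j$. As $t_1,t_2,\dots$ are pairwise distinct there is at most one such $j$, hence at most one such rotation --- the second claim. A rotation eliminating $(M,w)$ must move $w$, so it occurs as a step where $w$'s partner strictly changes, say from $q_j$ to $q_{j+1}$; by the definition of ``eliminated'', $q_j$ is $M$ or lies below $M$ on $w$'s list while $q_{j+1}$ lies strictly above $M$, i.e.\ $M$ lies at least as high as $q_j$ and strictly below $q_{j+1}$ on $w$'s list. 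As $j$ varies these ``windows'' of $w$'s list are pairwise disjoint, because $q_0,q_1,q_2,\dots$ lie strictly higher and higher on her list, so $M$ falls in at most one window, and hence $(M,w)$ is eliminated by at most one rotation --- the first claim.

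The part I expect to be the genuine obstacle is the structural input (b) (together with the well-definedness in (a)): that a rotation may be regarded as a single matching-independent object and that a maximal chain of stable matchings applies every rotation exactly once. This is precisely the Irving--Leather rotation-poset theorem; a self-contained proof would show that every cover of the distributive stable-matching lattice is realised by a single exposed rotation (and by no other sequence of rotations) and then appeal to Birkhoff's representation theorem to identify the lattice with the downsets of the induced precedence order on rotations, which every maximal chain traverses exactly once. Granted that input, the monotonicity argument above is short and routine.
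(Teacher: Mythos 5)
The paper offers no proof of this lemma --- it is imported verbatim from Irving and Leather~\cite{IL} --- so there is no in-paper argument to compare yours against; judged on its own terms, your derivation is correct. The structural input you isolate as the real obstacle (that every maximal chain from the male-optimal to the female-optimal matching applies every rotation exactly once) is exactly Theorem~\ref{thm:gusfield}, which the paper quotes from Gusfield, and granting it your monotonicity argument goes through: along such a chain a fixed man's successive distinct partners are pairwise distinct because they strictly descend his list, which yields the second claim, while a fixed woman's successive partners $q_0 <_w q_1 <_w \cdots$ cut her list into pairwise disjoint half-open windows $[q_j,q_{j+1})$, the rotation at her $j$-th step eliminating $(M,w)$ precisely when $M$ lies in window $j$, which yields the first. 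Your observation~(a), that ``moves $M$ to $w$'' and ``eliminates $(M,w)$'' depend only on the ordered list $R$ and not on the matching in which it is exposed, is also correct and is genuinely needed for the quantification over rotations to reduce to quantification over steps of the chain. The one caveat is a matter of logical order rather than validity: in the literature this lemma is normally established \emph{before} the rotation poset and Gusfield's path theorem (it is an ingredient in building them), via a direct local argument about blocking pairs; deriving it \emph{from} Theorem~\ref{thm:gusfield} is unobjectionable in the present paper, where both statements are taken as black boxes, but it would be circular in a self-contained development --- a point you yourself anticipate in your closing paragraph.
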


We can now define a relation on rotations.

\begin{definition}\cite{IL}\label{def:explicitly}
Let $R$ and $R'$ be two distinct rotations. Rotation $R$ is said to
{\em explicitly precede} $R'$ if and only if $R$ eliminates a pair
$(M,w)$, and $R'$ moves $M$ to a woman $w'$ such that $M$ (strictly)
prefers $w$ to $w'$. The relation ``precedes'' is defined as the
transitive closure of the ``explicitly precedes'' relation.
\end{definition}

If a rotation $R$ explicitly precedes $R'$ then there is no stable matching
with $R'$ exposed such that applying $R'$ results in a stable matching
with $R$ exposed --- the intermediate matching would have a blocking pair
(hence would not be stable).
The relation {\it
precedes} ($\leq$) defines a partial order on the set of rotations of
the stable matching instance. We call the partial order on the set
of rotations the {\em rotation poset} of the instance and denote it
$(P,\leq)$.

The following theorem relates the rotations in the
rotation poset to the stable matchings of the instance via the
downsets of $P$.

\begin{theorem}\cite[Theorem 4.1]{IL}\label{thm:sm-downsets}
For any stable matching instance, there is a one-to-one
correspondence between the stable matchings of that instance and
the downsets of its rotation poset.
\end{theorem}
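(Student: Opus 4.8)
The plan is to construct the bijection explicitly. Write $\mathcal{M}_0$ for the male-optimal matching, i.e.\ the minimum element of the stable-matching lattice of Section~\ref{sect:lattice}, and for a stable matching $\mathcal{M}$ define $D(\mathcal{M})$ to be the set of rotations applied in passing from $\mathcal{M}_0$ to $\mathcal{M}$ along some sequence of rotation applications. First I would establish a \emph{progress lemma}: whenever $\mathcal{M} < \mathcal{M}'$ in the lattice, some rotation is exposed in $\mathcal{M}$ whose application produces a stable matching $\mathcal{M}''$ with $\mathcal{M} < \mathcal{M}'' \leq \mathcal{M}'$. Iterating this with the target $\mathcal{M}$ held fixed and starting from $\mathcal{M}_0$ shows that every stable matching is reachable from $\mathcal{M}_0$ by a finite sequence of rotation applications: the sequence terminates because each application strictly increases the matching in the lattice, and no rotation can recur, because once a rotation is applied the pairs it eliminates stay eliminated as one moves up the lattice (using the cited elimination lemma), so that rotation can never again be exposed. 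Hence $D(\mathcal{M})$ is at least well-defined as ``the set of rotations on some path.''

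The technical heart, and the step I expect to be the main obstacle, is showing that $D(\mathcal{M})$ does not depend on the chosen path, so that it is genuinely a function of $\mathcal{M}$. I would handle this by giving an intrinsic characterization: a rotation $R = \{(M_0,w_0),\dots,(M_{k-1},w_{k-1})\}$ lies in $D(\mathcal{M})$ precisely when each pair $(M_i,w_i)$ has been eliminated by the time we reach $\mathcal{M}$, i.e.\ each $M_i$ is matched in $\mathcal{M}$ to a woman he likes no more than $w_{i+1 \,(\mathrm{mod}\, k)}$. Equivalently, one can argue local confluence: if two distinct rotations $R$ and $R'$ are both exposed in a stable matching, then $R'$ remains exposed after applying $R$ and vice versa, and the two orders of application yield the same matching; an induction on path length then forces any two sequences from $\mathcal{M}_0$ to $\mathcal{M}$ to use the same multiset, hence the same set, of rotations. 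Either route leans heavily on the rotation structure already recalled: the remarks following Definition~\ref{eliminated}, the lemma that no pair is eliminated by more than one rotation, and the ``explicitly precedes'' relation of Definition~\ref{def:explicitly}.

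Granted path-independence, the remaining pieces are comparatively routine. To see that $D(\mathcal{M})$ is a downset of $(P,\leq)$, I would show it is closed under ``explicitly precedes'': if $R$ is applied along a path to $\mathcal{M}$ and $R^\ast$ explicitly precedes $R$, then $R^\ast$ must have been applied earlier, since otherwise applying $R$ would pass through a matching containing a blocking pair, contradicting stability (this is exactly the remark following Definition~\ref{def:explicitly}); closure under the transitive closure ``precedes'' then follows. Injectivity of $\mathcal{M} \mapsto D(\mathcal{M})$ is immediate once one notes, using path-independence again, that applying the rotations of $D(\mathcal{M})$ to $\mathcal{M}_0$ in any order extending $(P,\leq)$ recovers $\mathcal{M}$, so the matching can be read off from its downset.

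For surjectivity I would induct on $|D|$ for an arbitrary downset $D$ of $P$: if $D$ is empty take $\mathcal{M}_0$, and otherwise pick a maximal element $R$ of $D$, so that $D \setminus \{R\}$ is again a downset and, by induction, equals $D(\mathcal{M})$ for some stable matching $\mathcal{M}$. The key sub-claim is that $R$ is exposed in $\mathcal{M}$: every rotation that explicitly precedes $R$ lies strictly below $R$ in $P$, hence in $D \setminus \{R\} = D(\mathcal{M})$, and one checks that ``explicitly precedes'' captures precisely the obstruction to $R$ being exposed, so with all those obstructions removed $R$ is exposed in $\mathcal{M}$. Applying $R$ to $\mathcal{M}$ then gives a stable matching $\mathcal{M}'$ with $D(\mathcal{M}') = D(\mathcal{M}) \cup \{R\} = D$. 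Combining injectivity and surjectivity yields the claimed one-to-one correspondence.
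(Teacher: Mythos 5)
First, note that the paper does not prove this statement at all: Theorem~\ref{thm:sm-downsets} is imported verbatim from Irving and Leather~\cite{IL}, so there is no in-paper argument to compare yours against. What you have written is an outline of the classical Irving--Leather proof (rotations applied along a path from the male-optimal matching; path-independence; downset closure; induction on the size of a downset for surjectivity), and at the level of architecture it is the right one.

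The substantive concern is that the two places where the theorem's actual content lives are asserted rather than argued. The ``progress lemma'' (if $\mathcal{M}<\mathcal{M}'$ then some rotation exposed in $\mathcal{M}$ moves you strictly up but not past $\mathcal{M}'$) is a real lemma requiring proof, not a consequence of anything recalled in Section~\ref{sect:rotations}. More seriously, in the surjectivity step the sub-claim that ``explicitly precedes captures precisely the obstruction to $R$ being exposed'' is the crux of the whole theorem, and ``one checks'' hides it. Concretely, you must show that if every rotation explicitly preceding $R=\{(M_0,w_0),\dots,(M_{k-1},w_{k-1})\}$ has been applied and $R$ has not, then in the resulting matching each $M_i$ is still matched to $w_i$ and his suitor is exactly $w_{i+1\,(\mathrm{mod}\,k)}$ --- i.e.\ every woman strictly between $w_i$ and $w_{i+1}$ on $M_i$'s list already prefers her current partner to $M_i$. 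That each such intermediate pair is eliminated by some rotation, and that this rotation is an \emph{explicit predecessor} of $R$ (so that Definition~\ref{def:explicitly} together with Definition~\ref{eliminated} really does enumerate all obstructions), is exactly what Irving and Leather prove and what your outline takes for granted. Your intrinsic characterization of $D(\mathcal{M})$ also quietly uses that the rotations involving a fixed man form a chain and that each non-female-optimal stable pair lies in exactly one rotation; these are true but should be stated, since the lemma quoted in the paper only gives uniqueness of the rotation moving $M$ \emph{to} $w$, not \emph{from} $w$. With those pieces filled in, the rest of your plan (downset closure via the blocking-pair remark following Definition~\ref{def:explicitly}, injectivity from path-independence) goes through.
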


Every stable matching of the instance can be obtained by starting
with the male-optimal stable matching and performing the rotations
in the corresponding downset (ensuring that a rotation is performed before
any rotation that succeeds it is performed).
Note that the downsets corresponding
to the male-optimal stable matching and the female-optimal stable
matching are $\emptyset$ and $P$, respectively.

To construct the rotation poset, we need (i) the rotations and
(ii) the precedence relations between them. We note that once we have
all the rotations in the poset, we can establish the precedence
relations using the ``explicitly precedes'' relation, i.e.\ by determining
which (stable or unstable) pairs are eliminated by each rotation.

\subsection{Gusfield's algorithm for finding all rotations}\label{sect:poset}

Given a stable matching instance,
let $H$ be the Hasse diagram of the stable matching lattice defined in
Section~\ref{sect:lattice}. That is, $H$ is the transitive reduction of
the relation $\leq$ on the set of stable matchings.
Gusfield~\cite{Gusfield} gave a fast algorithm for finding all
rotations of a stable matching instance. His algorithm is a refinement
of successive applications of the ``{\em breakmarriage}'' procedure of
McVitie and Wilson~\cite{MW}.
The key ingredient in Gusfield's proof that his algorithm is
correct is the following.

\begin{theorem} \cite[Theorem 6]{Gusfield}
\label{thm:gusfield}
Let $\Phi$ be any path in~$H$ from the male-optimal stable matching to
the female-optimal stable matching. Then any two consecutive matchings
on~$\Phi$ differ by a single rotation, and the set of all rotations
between matchings along~$\Phi$ contains every rotation exactly once.
\end{theorem}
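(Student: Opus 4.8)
The plan is to exploit the distributive-lattice structure of stable matchings together with the correspondence (Theorem~\ref{thm:sm-downsets}) between stable matchings and downsets of the rotation poset $(P,\le)$. Recall that in the Hasse diagram $H$ of the stable-matching lattice, each edge corresponds to a covering relation ${\cal M} \lessdot {\cal M'}$. First I would establish the local claim: if ${\cal M}\lessdot{\cal M'}$ is a covering pair, then ${\cal M'}$ is obtained from ${\cal M}$ by applying a single exposed rotation. This should follow by translating to downsets: a covering relation in the lattice of downsets of $P$ corresponds exactly to adding a single minimal (available) element $\rho$ to the downset, and "performing the rotation $\rho$ on ${\cal M}$'' is precisely the operation realising that step. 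One must verify that $\rho$ is exposed in ${\cal M}$ (its predecessors in $P$ are already in the downset for ${\cal M}$, so nothing blocks it) and that applying it yields exactly ${\cal M'}$; this is the content of the Irving--Leather machinery already quoted, in particular Theorem~\ref{thm:sm-downsets} and the remarks following it.

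Next I would set up the correspondence between a path $\Phi$ in $H$ from the male-optimal matching $M_0$ (downset $\emptyset$) to the female-optimal matching $M_z$ (downset $P$) and a maximal chain in the lattice of downsets of $P$. A path $\Phi = ({\cal M}_0 = M_0, {\cal M}_1, \dots, {\cal M}_t = M_z)$ with each ${\cal M}_{j}\lessdot{\cal M}_{j+1}$ gives a strictly increasing sequence of downsets $\emptyset = D_0 \subsetneq D_1 \subsetneq \dots \subsetneq D_t = P$ in which consecutive terms differ by one element; hence $t = |P|$ and the single element $\rho_j := D_{j+1}\setminus D_j$ added at step $j$ ranges over all of $P$, each exactly once, as $j$ runs from $0$ to $t-1$. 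By the local claim of the previous paragraph, $\rho_j$ is precisely the single rotation by which ${\cal M}_j$ and ${\cal M}_{j+1}$ differ. Combining these two facts gives exactly the statement: consecutive matchings on $\Phi$ differ by a single rotation, and the multiset of these rotations is all of $P$ with no repetitions.

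The main obstacle is the local claim — showing that a cover step in $H$ corresponds to a \emph{single} exposed rotation, rather than merely to "some sequence of male-improving rotations.'' The subtlety is ruling out that a covering pair in the stable-matching lattice could require two or more rotations to traverse; equivalently, one must know that the lattice of stable matchings is isomorphic (not just order-embeddable) to the lattice of downsets of $P$, so that covers match covers. This is exactly what Theorem~\ref{thm:sm-downsets} provides, together with the fact (from Section~\ref{sect:lattice}) that the stable-matching lattice is distributive and hence, by Birkhoff's representation theorem (cited via~\cite{Davey}), isomorphic to the downset lattice of its poset of join-irreducibles, which is $(P,\le)$. Once that isomorphism is in hand the argument is essentially bookkeeping; the one remaining routine check is that the order operation realising a single downset-cover coincides with "apply the newly-available rotation,'' which is already implicit in the discussion following Theorem~\ref{thm:sm-downsets}.
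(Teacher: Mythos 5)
The paper does not actually prove Theorem~\ref{thm:gusfield}: it is imported verbatim from Gusfield \cite[Theorem 6]{Gusfield} as a black box, so there is no in-paper proof to compare against. Judged on its own, your lattice-theoretic route is sound, and it is genuinely different from Gusfield's original argument, which works directly with the breakmarriage procedure and properties of exposed rotations rather than through Birkhoff's theorem. Your reduction is: an order isomorphism between the stable-matching lattice and the downset lattice of $(P,\le)$ sends covers to covers; a cover in a downset lattice adjoins exactly one element $\rho$, necessarily minimal in $P\setminus D$ and maximal in $D\cup\{\rho\}$; and a maximal chain from $\emptyset$ to $P$ has length $|P|$ and adjoins each element exactly once. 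That is all correct, and the bookkeeping in your second paragraph is fine. The one caveat is that the two facts you defer to ``the Irving--Leather machinery'' are genuine inputs, not routine checks, and they are strictly stronger than what Theorem~\ref{thm:sm-downsets} literally asserts (a one-to-one correspondence): you need (i) that the correspondence is an order isomorphism, equivalently that the join-irreducibles of the stable-matching lattice form a poset isomorphic to $(P,\le)$, and (ii) that the minimal elements of $P\setminus D$ are exactly the rotations exposed in the matching $M_D$ and that applying such a rotation realises the downset cover $D\mapsto D\cup\{\rho\}$. Both are standard results of Irving and Leather (and of Gusfield--Irving's book), so citing them is legitimate here; your approach buys a short, conceptual proof at the price of assuming that full structure theory, whereas Gusfield's proof is self-contained at the level of rotations. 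One further small point: the statement implicitly takes $\Phi$ to be a monotone (upward) path in $H$, i.e.\ a maximal chain, which is how you read it; an undirected path would falsify the ``exactly once'' count, so your reading is the intended one.
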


Gusfield presented a well-tuned version of his algorithm that runs in
$O(n^2)$ time. For the sake of presentation, we use the following,
slower,
variant of his algorithm (the variant still runs in polynomial time,
which suffices for our purposes).

\begin{algorithm}
\caption{Find-All-Rotations Algorithm}\label{FR}
\begin{itemize}
\item {Initially we start with the male-optimal
stable matching ${\cal M}_0$, and some ordering of the
men.}

\item {In the current matching ${\cal M}_i$, among the ordered men, pick
the first man who
has a suitor. (If there are no men that have a suitor, then ${\cal M}_i$
is the female-optimal
matching and the algorithm stops.)  Let man $M_1$ be the first man who has a suitor,
namely $S_{{\cal M}_i}(M_1)$.
}

\item {Start constructing the sequence $(M_1,w_1), (M_2,w_2), \cdots $ where
$w_1 = sp_{{\cal M}_{i}}(M_1)$, for $l = 2,3,\cdots $, $w_l =
S_{{\cal M}_i}(M_{l-1})$ and $M_l = sp_{{\cal M}_{i}}(w_l)$.}

\item {If there exists a $t \in \{1,\cdots,l-1 \}$ such that
$w_t = w_l$, then return the rotation $(M_t,w_t),\cdots,(M_l,w_l)$, and apply
the rotation to ${\cal M}_i$ to get a new stable matching ${\cal M}_{i+1}$.
Otherwise, increment $l$ and continue constructing the sequence.}
\end{itemize}
\end{algorithm}

In Algorithm~\ref{FR},
the existence of $t$ is guaranteed by the
fact that the current stable matching ${\cal M}_i$ is different from
the female-optimal stable matching and, hence, has a rotation exposed
in it.
The correctness of the algorithm follows from
Theorem~\ref{thm:gusfield}:
Starting at ${\cal M}_i$, the algorithm applies a rotation to
obtain~${\cal M}_{i+1}$. Since rotations improve the utility of
the women involved,
${\cal M}_{i+1}\geq {\cal M}_i$.
To apply Theorem~\ref{thm:gusfield} we need only
argue that the step from ${\cal M}_i$ to ${\cal M}_{i+1}$ is a single
step in~$H$ rather than multiple steps.
Equivalently, we need to argue that the rotation between ${\cal M}_i$
and ${\cal M}_{i+1}$ cannot be decomposed as a sequence of smaller
rotations (where these smaller rotations correspond to individual steps in~$H$).
This follows by the definition of ``suitor'' --- if ${\cal M}_i$ has
rotation~$R$ exposed and applying rotation~$R$ yields a stable
matching ${\cal M}'$ with $R'$ exposed and applying $R'$ yields~${\cal
  M}_{i+1}$
then either $R$ and $R'$ are disjoint (contradicting the fact that the
transformation from
${\cal M}_i$ to ${\cal M}_{i+1}$ can be accomplished with a single
rotation)
or  $R$ and $R'$ share a man (in which case the transformation
from~${\cal M}_i$ to~${\cal M}_{i+1}$ does not move this man to his
suitor in~${\cal M}_i$, so is not a rotation).

Once we find all of the rotations using Algorithm~\ref{FR},
we can order them to find the rotation poset $P$ using
the relation given in Definition~\ref{def:explicitly}.

\subsection{$\#BIS$, independent sets, and stable matchings}\label{sect:idea}
The rotation poset for a matching instance plays a key role in what
follows.  To prove Theorem~\ref{thm:3-attr-dot}, we take a
$\#BIS$ instance $G=(V_1\cup V_2, E)$ and view this as the rotation
poset of a matching instance.  In particular, $G$ is the Hasse diagram
of the poset when we draw $G$ with the set $V_2$ ``above'' $V_1$.

Each independent set in the bipartite graph
naturally corresponds to a downset in the partial order,
and vice-versa.
See Figure~\ref{fig:downsets}
for an example.  An independent set, namely $\{d, f, g\}$,
is shown in the left of that figure.
The corresponding downset is shown on the right.  This downset
is obtained by taking the set $\{d, f, g\}$ and adding the
two elements $a$ and $b$,
as $a < f$ and $b < f$ (and $b < g$) in the Hasse diagram.
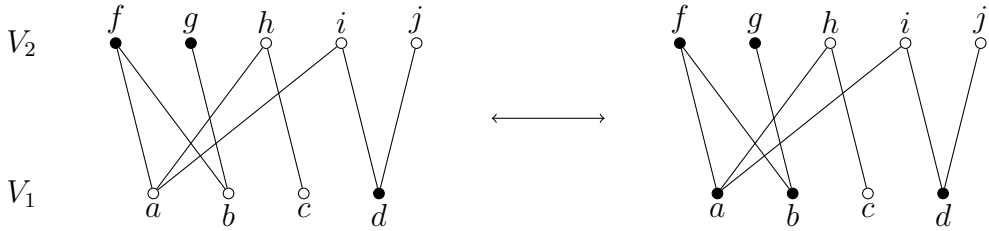
\begin{figure}[ht]
\begin{center}
\begin{tikzpicture}
     [inner sep=0.5mm]
   \node (A) at (0,0) [circle,draw,label=below:$a$] {};
  \node (B) at (1,0) [circle,draw,label=below:$b$] {};
  \node (C) at (2,0) [circle,draw,label=below:$c$] {};
  \node (D) at (3,0) [circle,draw,fill=black,label=below:$d$] {};
   \node (F) at (-.5,2) [circle,draw,fill=black,label=above:$f$] {};
  \node (G) at (.5,2) [circle,draw,fill=black,label=above:$g$] {};
  \node (H) at (1.5,2) [circle,draw,label=above:$h$] {};
  \node (I) at (2.5,2) [circle,draw,label=above:$i$] {};
  \node (J) at (3.5,2) [circle,draw,label=above:$j$] {};

  \node (A') at (7.5,0) [circle,draw,fill=black,label=below:$a$] {};
  \node (B') at (8.5,0) [circle,draw,fill=black,label=below:$b$] {};
  \node (C') at (9.5,0) [circle,draw,label=below:$c$] {};
  \node (D') at (10.5,0) [circle,draw,fill=black,label=below:$d$] {};

  \node (F') at (7,2) [circle,draw,fill=black,label=above:$f$] {};
  \node (G') at (8,2) [circle,draw,fill=black,label=above:$g$] {};
  \node (H') at (9,2) [circle,draw,label=above:$h$] {};
  \node (I') at (10,2) [circle,draw,label=above:$i$] {};
  \node (J') at (11,2) [circle,draw,label=above:$j$] {};

  \draw[-] (A) to (F);
  \draw[-] (A) to (H);
  \draw[-] (A) to (I);

  \draw[-] (B) to (F);
  \draw[-] (B) to (G);

  \draw[-] (C) to (H);

  \draw[-] (D) to (I);
  \draw[-] (D) to (J);

  \draw[-] (A') to (F');
  \draw[-] (A') to (H');
  \draw[-] (A') to (I');

  \draw[-] (B') to (F');
  \draw[-] (B') to (G');

  \draw[-] (C') to (H');

  \draw[-] (D') to (I');
  \draw[-] (D') to (J');

  \draw (-1.75,2) node {$V_2$};
  \draw (-1.75,0) node {$V_1$};

  \draw[<->] (4.5,1) -- (6,1);
\end{tikzpicture}\caption{The correspondence between independent
                  sets and downsets}\label{fig:downsets}
\end{center}
\end{figure}
Conversely, given a downset, such as the one on the right of the diagram,
we can find the corresponding independent set in $G$ by taking the set
of maximal elements of the downset.

So given $G$, we then construct a matching instance (using
$3$-dimensional preference and attribute vectors) whose rotation
poset is (isomorphic to) $G$, giving a $1$-$1$ correspondence for our
AP-reduction from $\#BIS$ to $\#SM(k\textrm{-attribute})$, showing
that $\#BIS \leq_{AP} \#SM(k\textrm{-attribute})$.  This construction,
and the proof of the correspondence, is in Section~\ref{sect:k-attribute}.

The reverse implication $\#SM(k\textrm{-attribute}) \leq_{AP} \#BIS$
follows from the two results that
$\#SM \leq_{AP} \#Downsets$ (Theorem~\ref{thm:sm-downsets}, quoted
here from~\cite{IL})
and $\#Downsets \leq \#BIS$~\cite[Lemma 9]{DGGJ},
where $\#Downsets$ is the problem of counting the number of downsets in
a partial order.

\section{Stable matchings in the k-attribute model ($k \geq 3$)}\label{sect:k-attribute}

In this section we give our construction to show AP-reducibility from
$\#BIS$ to the $k$-attribute stable matching model when $k \geq 3$.

Given our previous remarks about the relation between $\#BIS$, independent
sets, and stable matchings, our procedure is as follows:
\begin{enumerate}
\item Let $G=(V_1 \cup V_2, E)$ denote a bipartite graph where $|E| = n$.
Our goal will be to construct a $k$-attribute stable matching instance for which we can
show that the Hasse diagram of its rotation poset is~$G$.
This will give a bijection between stable matchings and downsets of~$G$,
hence a bijection between stable matchings and independent sets of~$G$.
\item Using $G$, in the manner to be specified
in Section~\ref{sect:construction},
we construct preference lists for a $3$-attribute
stable matching instance with $3n$ men and $3n$ women.
\item Given this matching instance, we find the male-optimal
and female-optimal matchings.
\item Using the {\bf Find-All-Rotations} algorithm, we extract the
rotations from our stable matching instance.
\item Having these rotations, we construct the partial order, $P$, on
these rotations (specified by the transitive closure of the
``explicitly precedes'' relation).
\item  We finally show that $P$ is isomorphic to
$G$ (when $G$ is viewed as
a partial order), thereby showing our construction is an
approximation-preserving reduction from $\#BIS$ to
$\#SM(3\rm{-attribute})$.
\end{enumerate}

\subsection{Construction of the stable matching instance}\label{sect:construction}

\subsubsection{$BIS$ and permutations}\label{sect:permutations}

Let $G= (V_1\cup V_2,E)$ denote our $BIS$ instance, where
$E \subseteq V_1 \times V_2$ and $|E| = n$.

Using $G$ we will construct a $3$-attribute stable matching instance
with $3n$ men and $3n$ women. The men and women of the instance are
denoted $\{A_1,\ldots, A_n, B_1, \ldots, B_n, C_1, \ldots,$ $C_n\}$
and $\{a_1,\ldots, a_n, b_1, \ldots, b_n, c_1, \ldots, c_n\}$,
respectively. To describe our construction, we label the edges of
$G$ $B_1$ through $B_n$ from ``left-to-right'' with respect to the
vertices ($V_1$) on the bottom. This becomes more clear from the
example in Figure~\ref{fig:BIS-labeling}.  We refer to edge $B_i$ as
man $B_i$, and this will be clear from the context.

\begin{figure}[ht]
\begin{center}
\begin{tikzpicture}[inner sep=0.6mm]

  \node (A) at (0,0) [circle,draw] {};
  \node (B) at (3,0) [circle,draw] {};
  \node (C) at (6,0) [circle,draw] {};

  \node (F) at (-1,3) [circle,draw] {};
  \node (G) at (2,3) [circle,draw] {};
  \node (H) at (5,3) [circle,draw] {};
  \node (I) at (8,3) [circle,draw] {};

  \draw (A) to node[midway, below, left] {$B_1$} (F);
  \draw (A) to node[near start, above left] {$B_2$} (G);
  \draw (A) to node [very near start, below right] {$B_3$} (I);

  \draw (B) to node[near end, above right] {$B_4$} (G);
  \draw (B) to node[near end, above left] {$B_5$} (H);
  \draw (B) to node[midway, below right] {$B_6$} (I);

  \draw (C) to node[very near end, above right] {$B_7$} (F);
  \draw (C) to node[midway, below right] {$B_8$} (I);

  \draw (-2,2.8) node {$V_2$};
  \draw (-2,.2) node {$V_1$};
\end{tikzpicture}\caption{A $BIS$ instance and our labeling of its edges}\label{fig:BIS-labeling}
\end{center}
\end{figure}
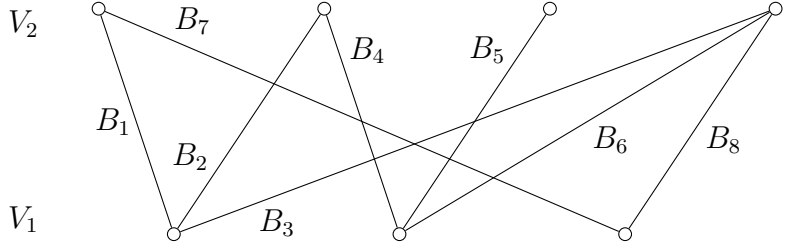

For our construction we
associate two permutations, $\rho$ and $\sigma$, of $[n]=\{1, \ldots, n\}$ with
the BIS instance. The cycles of $\rho$ correspond to vertices in $V_1$
and those of $\sigma$ correspond to vertices in $V_2$.
In other words, if the edges incident to a vertex in $V_2$
are $B_{i_1}, B_{i_2}, \ldots, B_{i_d}$, then $(i_1, i_2, \ldots, i_d)$
is a $\sigma$-cycle. We define $\rho$-cycles in a similar fashion.
If $G$ has $k =|V_1|$ vertices on the bottom and
$l=|V_2|$ vertices on the top, then the permutations $\rho$ and
$\sigma$ have $k$ and $l$ cycles, respectively.  Since the graph $G$
will turn out to be isomorphic to
a rotation poset, every vertex in $G$
will represent a rotation in
the stable matching instance. The
rotations of the stable matching instance will be governed by the
$\rho$- and $\sigma$-cycles in a manner to be specified.
The rotations corresponding to the
$\rho$-cycles will be called $\rho$-rotations and those
corresponding to the $\sigma$-cycles will be called $\sigma$-rotations.

In the example of Figure~\ref{fig:BIS-labeling}, the three
$\rho$-cycles are $\rho_1 = (1, 2, 3), \rho_2 = (4, 5, 6)$, and
$\rho_3 = (7, 8)$.  The four $\sigma$-cycles are $\sigma_1 = (1, 7),
\sigma_2 = (2,4), \sigma_3 = (5)$, and $\sigma_4 = (3, 6, 8)$.

Here is a brief overview of how we go about constructing a stable matching
instance from a given bipartite graph.

First of all, the male-optimal stable matching in our matching instance
we construct will consist of the pairs
$(A_i, a_i), (B_i, b_i), (C_i, c_i)$ for all $i\in [n]$.  (We must
show later this is indeed the case for the construction
we describe.)

A $\rho$-cycle of the form $(i_1, i_1+1, \ldots, i_2)$ will
correspond to the $\rho$-rotation, $R$, of the form
$$\{(B_{i_1},b_{i_1}), (A_{i_1},a_{i_1}), (B_{i_1 + 1},b_{i_1 +1}),
(A_{i_1+1},a_{i_1+1}), \ldots, (B_{i_2},b_{i_2}), (A_{i_2},a_{i_2})\}.$$
This rotation $R$ arises from a vertex $v \in V_1$ with edges
$B_{i_1}, B_{i_1+1}, \ldots, B_{i_2}$ incident to it.

We will later show that a $\sigma$-rotation $R'$ is of the form
$$\{(B_{i_1},a_{i_1}), (C_{i_1},c_{i_1}), (B_{i_2},a_{i_2}),
(C_{i_2},c_{i_2}),
\ldots, (B_{i_p},a_{i_p}), (C_{i_p},c_{i_p})\},$$
where $(i_1, i_2, \ldots, i_p)$ is the corresponding $\sigma$-cycle,
and that the rotation $R'$ corresponds to the vertex $v' \in V_2$ with edges
$B_{i_1},B_{i_2},\ldots,B_{i_p}$ incident to it.

In this manner,
every rotation in the rotation poset is defined in terms of the men
involved in them, the women being the (then-current) partners of
the men that are in the rotation.
Assuming that the above two claims regarding
rotations are valid (as we will show below), we make the following observation.

\begin{observation}\label{obs:sig-rho}
A $\rho$-cycle and a $\sigma$-cycle can have at most one
element in common. (This is because $G$ is a graph and not a
multi-graph.)  This means that a $\rho$-rotation and a $\sigma$-rotation
can have at most one man in common.  This similarly holds for the women.
\end{observation}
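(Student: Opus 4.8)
The plan is to trace through the definitions of $\rho$-cycles, $\sigma$-cycles, and their associated rotations, reducing everything to the simple-graph property of $G$. First I would recall that a $\rho$-cycle is, by construction, the list of edge-labels $B_i$ incident to a single vertex $v \in V_1$, and a $\sigma$-cycle is the list of edge-labels incident to a single vertex $v' \in V_2$. An index $i \in [n]$ lies in both the $\rho$-cycle of $v$ and the $\sigma$-cycle of $v'$ precisely when the edge labelled $B_i$ is incident to both $v$ and $v'$, i.e.\ when $B_i = \{v, v'\}$. Since $G$ is a simple graph rather than a multigraph, there is at most one edge joining $v$ and $v'$, hence at most one such index $i$ can be shared; this proves the first sentence.

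For the statement about rotations, I would invoke the explicit forms of the $\rho$- and $\sigma$-rotations stated just above the observation (the observation is explicitly conditional on those forms, which are verified later in Section~\ref{sect:k-attribute}). The men appearing in the $\rho$-rotation attached to a $\rho$-cycle are exactly the $A_i$ and $B_i$ over the indices $i$ of that cycle, while the men appearing in the $\sigma$-rotation attached to a $\sigma$-cycle are exactly the $B_j$ and $C_j$ over the indices $j$ of that cycle. A man common to both rotations must therefore be of the form $B_i$ — no $A$-man appears in a $\sigma$-rotation and no $C$-man appears in a $\rho$-rotation — and $B_i$ is common exactly when $i$ lies in both the $\rho$-cycle and the $\sigma$-cycle, which by the first part happens for at most one $i$. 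The argument for women is identical: the women in the $\rho$-rotation are the $a_i$ and $b_i$, the women in the $\sigma$-rotation are the $a_j$ and $c_j$, so a shared woman must be some $a_i$ with $i$ in both cycles, again at most one.

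There is essentially no obstacle here beyond bookkeeping; the only point requiring care is that the claimed shapes of the $\rho$- and $\sigma$-rotations are used as hypotheses (to be discharged later), so the argument is purely a matter of reading off index sets. Accordingly I would keep the write-up to a few lines.
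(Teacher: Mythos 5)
Your argument is correct and matches the paper's own (very brief) justification: the paper simply notes parenthetically that the claim follows because $G$ is a simple graph, and the reduction from rotations to cycles via the stated forms of the $\rho$- and $\sigma$-rotations is exactly the intended reading. Your only elaboration --- identifying that a shared man must be a $B_i$ and a shared woman must be an $a_i$ --- is a faithful unpacking of what the paper leaves implicit, so nothing differs in substance.
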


In the next section we start by assigning preference vectors and
position vectors to
the men and women in our stable matching instance.
Following that, we construct the initial
portion of their preference lists.  We then find the male- and the
female-optimal matchings using the Gale-Shapley algorithm. After
finding the male- and the female-optimal matchings, we extract all
the rotations of the rotation poset using the {\bf Find-All-Rotations}
algorithm. Finally, we obtain the rotation poset by ordering
rotations using the {\it explicitly precedes} relation.  As we
stated earlier, we will find this rotation poset is isomorphic to $G$,
showing our construction is a mapping from the set of $\#BIS$
instances to $\#SM(3\textrm{-attribute})$ instances.

\subsubsection{Assigning preference and position vectors}\label{sect:preferences}

Suppose $D_1, \ldots, D_l$ are the $l$ cycles of $\sigma$ of
lengths $p_1, \ldots, p_l$, respectively.
Let $e_i$ be a representative element of
cycle $D_i$. In other words, we can represent the $\sigma$-cycle $D_i$ as
$D_i = \left(e_i,\sigma(e_i),\ldots,\sigma^{p_i-1}(e_i)\right)$.
(We may, for example,
select $e_i$ to be the smallest number in the cycle, and we will do so here).
In what follows we will often abbreviate
$\sigma x = \sigma(x), \sigma^2 x = \sigma^2(x), \sigma^{-1} x = \sigma^{-1}(x)$, etc,
and, similarly, $\rho x = \rho(x)$, etc.

Let $Rep(\sigma) = \{e_1,e_2\cdots,e_l\}$ be the set of
representative elements we choose for the cycles of $\sigma$. Let $W_i =
\{a_x:x\in D_i\}\cup\{b_{\rho x}:x\in D_i\}\cup\{c_{\sigma^{-1} x}:x\in D_i\}$.
Let $T(x)= \{c_{\sigma^{-1} x},a_x,b_{\rho x}\}$ where $x\in D_i$. It follows
that
$W_i = \cup_{x\in D_i}T(x)$ and $T(i) \cap T(j) = \emptyset$
for $i \neq j$.

Using the definitions above, we begin to create a stable
matching instance in the $3$-attribute model whose rotation
poset is the graph $G$.  As a reminder, every man, say $A_i$, is
associated with two vectors: (i) a position vector denoted by
$\bar{A}_i$, and (ii) a preference vector denoted by $\hat{A}_i$.
Every woman similarly has her own position and preference vectors.
Each man ranks the women based on the dot
product of his preference vector with their position vectors. In
other words, if $\hat{A}_i \cdot \bar{b} > \hat{A}_i \cdot \bar{c}$,
then man $A_i$ prefers woman $b$ over $c$.
Note that we can always assign preference vectors so that $|\hat{A}_i| = 1$
(by normalizing those vectors).

Our task, therefore, is to specify the position and preference vectors
for all the men and women in our matching instance.

First we fix the position vectors of the women. The $z$-coordinate
of women $a_i$ and $c_i$ is set to 0 for $1 \leq i \leq n$. The
$z$-coordinate of woman $b_i$ is set to $4^{i}$ for $1\leq i \leq
n$. The $x$- and $y$-coordinates of $a_i$, $b_i$, and $c_i$ are such
that the projection of each women's position vector onto the $x$-$y$
plane lies on the unit
circle $x^2 + y^2 = 1$.  Furthermore, we group the projections
according to the sets $W_i$. In other words, all women in $W_i$ are
embedded in an angle of $\epsilon$ on the unit circle, where
$\epsilon = 2\pi/n^2$. These groups
are embedded around the circle in the order $W_1$ through $W_l$, and
the angle between two adjacent groups is $(2\pi-l\epsilon)/l$. Note
that $W_l$ is adjacent to $W_{l-1}$ and $W_1$.
Group $W_i$ starts at angle $2\pi(i-1)/l$ and ends at
$2\pi(i-1)/l + \epsilon$.

Within the group $W_i$, the women are
further sub-grouped into triplets
$$T(e_{i}), T(\sigma(e_{i})),\ldots,
T(\sigma^{p_i-1}(e_i)).$$
Within the angle of size $\epsilon$, the sub-groups
are embedded in the order
$T(e_{i})$ through $T(\sigma^{p_i-1}(e_{i}))$,
with each $T(\cdot)$ spanning an angle of
$6\theta_i$.
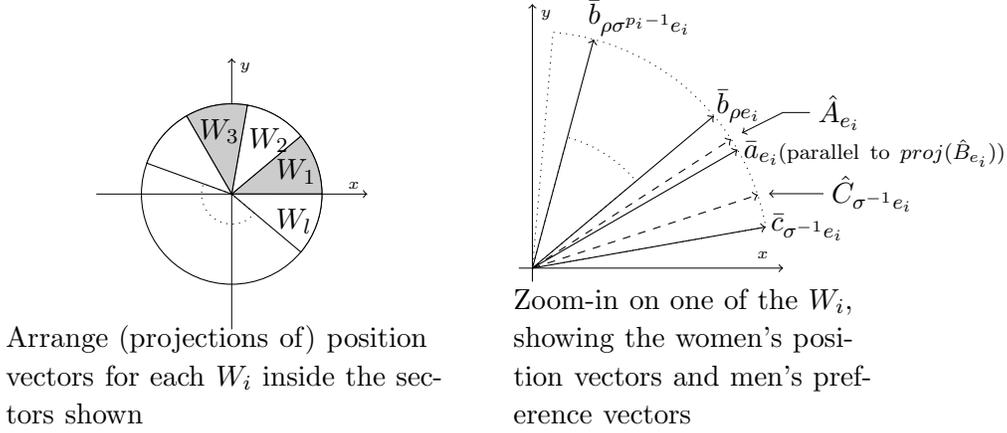
\begin{figure}[ht]
\begin{center}
\begin{tikzpicture}[scale=1.2]

  \draw[->, very thin,name=x-axis] (-1.5, 0) -- (1.5, 0);
  \draw[->, very thin,name=y-axis] (0, -1.5) -- (0, 1.5);
  \draw (1.5,.1) node[text width=.5cm] {{\tiny $x$}};
  \draw (.3,1.4) node[text width=.5cm] {{\tiny $y$}};

  \draw (0,0) circle (1cm);
  \filldraw[fill=black!20!white] (0,0) -- (1cm,0cm) arc (0:40:1cm) -- cycle;
  \filldraw[fill=white!20!white] (0,0) -- (0.766cm,0.643cm) arc (40:80:1cm) -- cycle;
  \filldraw[fill=black!20!white] (0,0) -- (0.174cm,0.985cm) arc (80:120:1cm) -- cycle;
  \filldraw[fill=white!20!white] (0,0) -- (-.5cm, 0.866cm) arc (120:160:1cm) -- cycle;
  \filldraw[fill=white!20!white] (0,0) -- (0.766cm,-0.643cm) arc (320:360:1cm) -- cycle;
  \draw[dotted] (0,0) -- (-0.313cm,0.114cm) arc (160:320:.333cm) -- cycle;

  \draw (.7cm, .25cm) node[text width=.5cm] {{\small $W_1$}};
  \draw (.4cm, .6cm) node[text width=.5cm] {{\small $W_2$}};
  \draw (-.15cm, .7cm) node[text width=.5cm] {{\small $W_3$}};
  \draw (.7cm, -.3cm) node[text width=.5cm]  {{\small $W_l$}};

  \node[text width=6cm] at (0,-2) {{\small Arrange (projections of) position
  vectors for each $W_i$ inside the sectors shown}};
\end{tikzpicture}
\hspace*{0.2cm}
\begin{tikzpicture}[scale=.9]
  \draw[->, very thin, name=x-zoomed] (.3, -.7) -- (4.2, -.7);
  \draw[->, very thin, name=y-zoomed] (.5, -.9) -- (.5, 3.2);
  \draw (4.1,-.5) node[text width=.5cm] {{\tiny $x$}};
  \draw (.9,3.05) node[text width=.5cm] {{\tiny $y$}};

  \draw[dotted] (.5,-.7) -- (3.947,-0.092) arc (10:85:3.5cm) -- cycle;

  \draw[->] (.5, -.7) -- (3.947,-0.092);
  \draw[->] (.5, -.7) -- (3.531,1.05);
  \draw[->] (.5, -.7) -- (3.181,1.550);

  \draw (4.3, -.1) node[text width=.5cm] {{\small $\bar{c}_{\sigma^{-1}e_i}$}};
  \draw (3.9, 1) node[text width=.5cm] {{\small $\bar{a}_{e_i}$}};
  \draw (3.5, 1.7) node[text width=.5cm] {{\small $\bar{b}_{\rho e_i}$}};

  \draw[dotted] (.5, -.7) -- (2.032, 0.586) arc(40:75:2cm) -- cycle;

  \draw[->] (.5, -.7) -- (1.406, 2.681);
  \draw (1.6, 3) node[text width=.5cm] {{\small $\bar{b}_{\rho\sigma^{p_i-1} e_i}$}};

  \draw[->,dashed] (.5, -.7) -- (3.829, 0.382);
  \draw[->,dashed] (.5, -.7) -- (3.435, 1.206);

  \draw (5.2,.4) node[text width=.5cm] {{\small $\hat{C}_{\sigma^{-1}e_i}$}};
  \draw[->, very thin] (4.8,.4) -- (4,.4);
  \draw (5,1.6) node[text width=.5cm] {{\small $\hat{A}_{e_i}$}};
  \draw[->, very thin] (4.6, 1.6) -- (4.2,1.6) -- (3.6, 1.3);

  \draw (5.8, 1) node[text width=3cm]
        {{$\scriptstyle \textrm{(parallel to }proj(\hat{B}_{e_i})\textrm{)}$}};

  \node[text width=5cm] at (3,-2){{\small Zoom-in on one of the $W_i$,
  showing the women's position vectors and men's preference vectors}};

\end{tikzpicture}
\end{center}\caption{Placement of the women's position vectors and
        men's preference vectors}\label{fig:women-positions}
\end{figure}
The angle between two adjacent $T(\cdot)$'s is
$\theta_i$, where $\theta_i = \epsilon/(7p_i - 1)$. Within each
$T(x)$, the women appear in the order $c_{\sigma^{-1} x}, a_x$, and
$b_{\rho x}$, and the angle between $\bar{c}_{\sigma^{-1} x}$ and
$\bar{a}_x$ is $4\theta_i$, and the angle between $\bar{a}_x$ and
$\bar{b}_{\rho x}$ is $2\theta_i$. We summarize the above
description by giving the exact coordinates for the position vector
for the women.
\begin{align*}
\textrm{Let}\ \epsilon &= \frac{2\pi}{n^2}.\\
\textrm{For}\  e_i &\in Rep(\sigma), \; \textrm{let}\ \theta_i = \epsilon/(7p_i-1).\;\; \textrm{Then for}\ 0\leq m\leq p_i-1\ \textrm{define}\\
\bar{a}_{\sigma^{m}{e_i}} &= \left(\cos(2\pi(i-1)/l + 7m\theta_i +
4\theta_i), \;\;\sin(2\pi(i-1)/l + 7m\theta_i +
4\theta_i), \;\;0\right), \\
\bar{b}_{\rho\sigma^{m}{e_i}} &= (\cos(2\pi(i-1)/l + 7m\theta_i +
6\theta_i), \;\;\sin(2\pi(i-1)/l + 7m\theta_i + 6\theta_i),
\;\;
4^{\rho\sigma^{m}{e_i}}),\; \textrm{and} \\
\bar{c}_{\sigma^{m-1}{e_i}} &= \left(\cos(2\pi(i-1)/l + 7m\theta_i),
\;\;\sin(2\pi(i-1)/l + 7m\theta_i), \;\;0\right).
\end{align*}

Next we define the preference vectors of the men. The $z$-coordinates of
all $\hat{A}_i$ and $\hat{C}_i$ are set to 0. We place $\hat{A}_i$
between $\bar{a}_i$ and the projection onto the $x$-$y$ plane of
$\bar{b}_{\rho i}$.  If the angle
between $\bar{a}_i$ and (the projection of) $\bar{b}_{\rho i}$ is $\alpha$,
then the angle between $\bar{a}_i$ and $\hat{A}_i$ is $\frac{1}{3}\alpha$,
and the angle between $\hat{A}_i$ and (the projection of) $\bar{b}_{\rho i}$
is $\frac{2}{3}\alpha$. This will ensure that $A_i$ prefers $a_i$ over
$b_{\rho i}$. We will later show that the preference list of $A_i$
starts with $a_i b_{\rho i}$. We place $\hat{C}_i$ between
$\bar{c}_i$ and $\bar{a}_{\sigma i}$ such that if the angle between
$\bar{c}_i$ and $\bar{a}_{\sigma i}$ is $\beta$, then the angle
between $\bar{c}_i$ and $\hat{C}_i$ is $\frac{2}{5}\beta$ and the
angle between $\hat{C}_i$ and $\bar{a}_{\sigma i}$ is
$\frac{3}{5}\beta$. This will ensure that $C_i$ prefers $c_i$ over
$a_{\sigma i}$. We will later show that the preference list of $C_i$
starts with $c_i a_{\sigma i}$.

Finally, we place $\hat{B}_i$, which is of unit length, such that
$\hat{B}_i$ makes an angle of $\phi = 2\pi/100$ with the vertical
axis ($z$-axis) and its projection on the $x$-$y$ plane is parallel
to $\bar{a}_i$. In other words, the projection of $\hat{B}_i$ on the
$z=0$ plane is $\sin \phi\; \bar{a}_i$. We summarize the above
discussion by providing the exact coordinates of $\hat{A}_i$, $\hat{B}_i$, and
$\hat{C}_i$.
\begin{align*}
\textrm{Let}\ \phi &= 2\pi /100\ \textrm{and}\ \epsilon = \frac{2\pi}{n^2}.\;\; \\
\textrm{For}\ e_i &\in Rep(\sigma),\ \textrm{let}\ \theta_i = \epsilon/(7p_i-1).\;\; \textrm{Then for}\ 0\leq m\leq p_i-1\ \textrm{define}\\
\hat{A}_{\sigma^{m}{e_i}} &= \left(\cos(2\pi(i-1)/l + 7m\theta_i +
(14/3)\theta_i), \;\;\sin(2\pi(i-1)/l + 7m\theta_i +
(14/3)\theta_i), \;\;0\right), \\
\hat{B}_{\sigma^{m}{e_i}} &= (\sin \phi \cos(2\pi(i-1)/l +
7m\theta_i + 4\theta_i), \;\;\sin \phi \sin(2\pi(i-1)/l + 7m\theta_i
+ 4\theta_i),
\;\;\cos \phi ),\\
\textrm{and\;\;\;}& \\
\hat{C}_{\sigma^{m-1}{e_i}} &= \left(\cos(2\pi(i-1)/l + 7m\theta_i +
(8/5)\theta_i), \;\;\sin(2\pi(i-1)/l + 7m\theta_i + (8/5)\theta_i),
\;\;0\right).
\end{align*}

In Section~\ref{sect:pref-lists} we establish the preference lists
of the men and women.  The vectors given above let us determine the
preference lists of the men, so we now specify the position vectors
of the men and the preference vectors of the women.  This proceeds
in a similar manner as above.

Suppose $E_1$ through
$E_k$ are the
$k$ cycles of $\rho$ of lengths $q_1$ through $q_k$, respectively.
As above, let $f_i$ be a
representative element of cycle $E_i$, so that we can
write the $\rho$-cycle as
$\left(f_i, \rho(f_i), \ldots, \rho^{q_i-1}(f_i)\right)$.
Let
$Rep(\rho) = \{f_1,f_2\cdots,f_k\}$ be the set of representative
elements we select for the cycles of $\rho$.
Let $U_i = \{A_{\rho^{-1}r}:r\in E_i\}\cup\{B_r:r\in E_i\}\cup\{C_r:r\in
E_i\}$.
Let $S(r) = \{A_{\rho^{-1} r},B_r,C_{r}\}$ where $r\in E_i$.
It follows that $U_i = \cup_{r\in E_i}S(r)$ and $S(i) \cap S(j) =
\emptyset$ for $i \neq j$.

We fix the position vectors of the men. The placement of the men is
similar to that of the women. The $z$-coordinate of the men $A_i$
and $B_i$ is set to 0 for $1 \leq i \leq n$. The $z$-coordinate of
man $C_{i}$ is set to $4^{i}$ for $1\leq i \leq n$. The $x$- and
$y$-coordinates of $A_i$, $B_i$ and $C_i$ are such that the
projection of the men onto the $z=0$ plane lies on the unit circle
$x^2 + y^2 = 1$.  Similar to above, the projections are grouped
according to the sets $U_i$. In other words, with
$\epsilon=2\pi/n^2$, all men in $U_i$ are embedded in an angle of
$\epsilon$ on the unit circle. The groups are embedded around the
circle in the order $U_1$ through $U_k$ and the angle between two
adjacent groups is $(2\pi-k\epsilon)/k$. Note that $U_k$ is adjacent
to $U_{k-1}$ and $U_1$. The group $U_i$ starts at angle
$2\pi(i-1)/k$ and ends at $2\pi(i-1)/k + \epsilon$. Within the group
$U_i$, the men are further sub-grouped into triplets $S(f_i),
S(\rho(f_i)),\cdots, S(\rho^{q_i-1}(f_i))$. Within the angle of
$\epsilon$, the sub-groups are embedded in the order $S(f_i)$
through $S(\rho^{q_i-1}(f_i))$ with each $S(\cdot)$ spanning an
angle of $6\omega_i$, where the angle between two adjacent
$S(\cdot)$'s is $\omega_i = \epsilon/(7q_i-1)$. Within each $S(j)$,
the men appear in the order $A_{\rho^{-1} j}, B_j$ and $C_{j}$, and
the angle between $\bar{A}_{\rho^{-1} j}$ and $\bar{B}_j$ is
$4\omega_i$ and the angle between $\bar{B}_j$ and $\bar{C}_{j}$ is
$2\omega_i$. Here are the exact co-ordinates for the position vector
of each man.
\begin{align*}
\textrm{Let}\ \epsilon &= \frac{2\pi}{n^2}. \\
\textrm{For}\ f_i &\in Rep(\rho),\; \textrm{let}\ \omega_i = \epsilon/(7q_i-1).\; \textrm{Then for}\ 0\leq m\leq q_i-1\;\textrm{we define}\\
\bar{A}_{\rho^{m-1}{f_i}} &= \left(\cos(2\pi(i-1)/k + 7m\omega_i),
\;\;\sin(2\pi(i-1)/k + 7m\omega_i), \;\;0\right), \\
\bar{B}_{\rho^{m}{f_i}} &= \left(\cos(2\pi(i-1)/k + 7m\omega_i +4\omega_i),
\;\;\sin(2\pi(i-1)/k + 7m\omega_i +
4\omega_i), \;\;0\right), \; \textrm{and} \\
\bar{C}_{\rho^{m}{f_i}} &= (\cos(2\pi(i-1)/k + 7m\omega_i +
6\omega_i), \;\;\sin(2\pi(i-1)/k + 7m\omega_i + 6\omega_i),
\;\;4^{\rho^{m}{f_i}}).
\end{align*}

Finally, we define the preference vectors of the women.
The $z$-coordinates of
$\hat{b}_i$ and $\hat{c}_i$ are set to 0. Suppose the angle
between $\bar{B}_i$ and (the projection onto the $x$-$y$ plane of)
$\bar{C}_i$ is $\alpha$.  Then we
place $\hat{c}_i$ in the $x$-$y$ plane between
$\bar{B}_i$ and (the projection of) $\bar{C}_{i}$ such that the angle between
$\bar{B}_i$ and $\hat{c}_i$ is $\frac{1}{3}\alpha$, and the angle
between $\hat{c}_i$ and (the projection of) $\bar{C}_{i}$
is $\frac{2}{3}\alpha$. We
place $\hat{b}_i$ between $\bar{A}_{\rho^{-1}i}$ and
$\bar{B}_{i}$ such that if the angle between $\bar{A}_{\rho^{-1}i}$
and $\bar{B}_{i}$ is $\beta$, then the angle between
$\bar{A}_{\rho^{-1}i}$ and $\hat{b}_i$ is $\frac{2}{5}\beta$ and the
angle between $\hat{b}_i$ and $\bar{B}_{i}$ is $\frac{3}{5}\beta$.

We place $\hat{a}_i$, which is of unit length, such that
$\hat{a}_i$ makes an angle of $\phi = 2\pi/100$ with the vertical
axis ($z$-axis) and its projection on the $z = 0$ plane is parallel
to $\bar{B}_i$. In other words, the projection of $\hat{a}_i$ on the
$z=0$ plane is $\sin \phi\; \bar{B}_i$. Therefore, the exact
coordinates of
the preference vectors $\hat{c}_i$, $\hat{a}_i$ and $\hat{b}_i$ are
as follows.
\begin{align*}
\textrm{Let}\ \phi &= 2\pi /100\ \textrm{and}\ \epsilon = \frac{2\pi}{n^2}.\\
\textrm{For}\ f_i &\in Rep(\rho), \;\textrm{let}\ \omega_i = \epsilon/(7q_i-1).\; \textrm{Then for}\ 0\leq m\leq q_i-1\; \textrm{we define}\\
\hat{a}_{\rho^{m}{f_i}} &= (\sin \phi \cos(2\pi(i-1)/k + 7m\omega_i
+ 4\omega_i), \;\sin \phi \sin(2\pi(i-1)/k + 7m\omega_i + 4\omega_i), \;\cos \phi),\;  \\
\hat{b}_{\rho^{m}{f_i}} &= \left(\cos(2\pi(i-1)/k + 7m\omega_i + (8/5)\omega_i),
\;\;\sin(2\pi(i-1)/k + 7m\omega_i + 8/5\omega_i), \;\;0\right),\; \textrm{and}\\
\hat{c}_{\rho^{m}{f_i}} &= \left(\cos(2\pi(i-1)/k + 7m\omega_i +
14/3\omega_i), \;\;\sin(2\pi(i-1)/k + 7m\omega_i + (14/3)\omega_i), \;\;0\right). \\
\end{align*}

\subsubsection{Constructing (partial) preference lists}\label{sect:pref-lists}

Using the vectors defined in the previous section, we now examine the
preference lists of the men and women of our constructed instance.

First we will establish that the preference lists of $A_i$ and $C_i$
start with $a_i b_{\rho i}$ and $c_i a_{\sigma i}$, respectively.
Since $\hat{A}_i$ and $\hat{C}_i$ have a $z$-component that is equal to
zero, it is enough to consider the projections of $\bar{a}_i$, $\bar{b}_i$ and
$\bar{c}_i$ on the $x$-$y$ plane. Furthermore, since $\hat{A}_i$,
$\hat{C}_i$, and the projections of $\bar{a}_i$, $\bar{b}_i$ and
$\bar{c}_i$ are all of unit length, the dot product is essentially a
function of the angle between the two vectors. In other words, if
the angle between $\hat{A}_i$ and $\bar{b}$ is greater than the
angle between $\hat{A}_i$ and $\bar{c}$, then $\hat{A}_i\cdot
\bar{b} < \hat{A}_i\cdot \bar{c}$. Recalling that $\hat{A}_i$ lies
between $\bar{a}_i$ and $\bar{b}_{\rho i}$, and is closer to $\bar{a}_i$,
then ${a}_i$ will appear first on the preference list of $A_i$. The
women positioned next to $a_i$ on the unit circle are $b_{\rho i}$
and $c_{\sigma^{-1} i}$. Since the angle between $\bar{a}_i$ and
$\bar{c}_{\sigma^{-1} i}$ is twice the angle between $\bar{a}_i$ and
$\bar{b}_{\rho i}$ (and $\hat{A}_i$ lies between $\bar{a}_i$ and
$\bar{b}_{\rho i}$), we see that woman $b_{\rho i}$ appears second on the
preference list of $A_i$.

Since $\hat{C}_i$ lies between $\bar{c}_i$ and $\bar{a}_{\sigma i}$
and is closer to $\bar{c}_i$, we see that ${c}_i$ will appear first on the
preference list of $C_i$.
By construction, the angle between $\bar{c}_i$ and
$\bar{a}_{\sigma i}$ is
$4\theta_i$.
We note that the angle between
$\hat{C}_i$ and $\bar{c}_i$ is
$2/5 (4\theta_i) = 8/5\ \theta_i$
and that
between $\hat{C}_i$ and $\bar{a}_{\sigma i}$ is
$3/5 (4\theta_i) = 12/5\ \theta_i$.
We consider two cases (i) $c_i$ is not the first woman
in $W_{j}$ for $1 \leq j \leq l$, i.e.\
$\sigma i \notin Rep(\sigma)$,\;
(ii) $c_i$ is the first woman in some $W_{j}$ for $1\leq j \leq l$,
i.e.\ $\sigma i = e_j$ for some $1\leq j\leq l$.

Case(i) As $c_i$ is not the first woman in $W_{j}$
for $1 \leq j \leq l$,
the women positioned next to $c_i$ are $a_{\sigma i}$ and
$b_{\rho i}$ where $c_i \in T(\sigma i)$ and $b_{\rho i} \in T(i)$.
The angle between (the projection of) $\bar{b}_{\rho i}$
and $\bar{c}_i$ is the angle
between two adjacent $T(\cdot)$'s, which is one-fourth of the angle
between $\bar{c}_i$ and $\bar{a}_{\sigma i}$, i.e.\
$1/4 (4\theta_i) = \theta_i$.
Hence, the angle between $\hat{C_i}$ and $b_{\rho i}$ is
$\theta_i + 8/5\ \theta_i = 13/5\ \theta_i > 12/5\ \theta_i$.
Hence, $a_{\sigma i}$ appears second on the preference list of $C_i$.

Case(ii) As $c_i$ is the first woman in some $W_{j}$ for $1\leq j
\leq l$, the women positioned next to $c_i$ are $a_{\sigma i}$ and
$b_{x}$
where $c_i,a_{\sigma i}\in W_j$ and
$b_{x}\in W_{j-1}$. Note
that $j-1 \stackrel{\textrm{\tiny def}}{=} l$ if $j = 1$.
The angle between $\bar{c}_i$ and
$\bar{a}_{\sigma i}$ is at most $\epsilon$ and the angle between
$\bar{c}_i$ and
$\bar{b}_{x}$
is the angle between $W_{j-1}$ and $W_j$ which is
$(2\pi-l\epsilon)/l = 2\pi/l - \epsilon > \epsilon$. Hence, $a_{\sigma i}$
appears second on the preference list of $C_i$.

Lastly we examine the preference list of $B_i$. We will show that
the relative order of the $b$- women on the preference list of $B_i$
is $b_n b_{n-1} b_{n-2} \cdots b_1$ for all $1\leq i \leq n$ and that
$B_i$ prefers $b_1$ over any woman $w \notin \{b_1,\ldots,b_n\}$.
This will imply that the preference list of $B_i$ starts with
$b_n b_{n-1}\cdots b_1$ for all $1\leq i \leq n$. The dot product of
$\hat{B}_i$ and $\bar{b}_j$ is
\begin{align*}
\hat{B}_i\cdot \bar{b}_j &= \sin \phi \;\bar{a}_i\cdot\bar{b}_j + \cos \phi \;4^{j}.\\
\textrm{Hence,}\  \cos \phi \;4^{j}  -\sin \phi &\leq  \hat{B}_i\cdot
\bar{b}_j \leq \cos \phi \;4^{j}  + \sin \phi,\; \textrm{and}\\
\cos \phi \;4^{j}  -\phi &\leq  \hat{B}_i\cdot
\bar{b}_j \leq \cos \phi\; 4^{j}  + \phi.
\end{align*}

Comparing $\hat{B}_i\cdot \bar{b}_j$ with $\hat{B}_i\cdot
\bar{b}_{j+1}$, we observe that
\begin{align*}
\hat{B}_i\cdot\bar{b}_j \leq \cos \phi \;4^{j}  + \phi < \cos \phi\;
4^{j+1}  -\phi \leq  \hat{B}_i\cdot \bar{b}_{j+1} \hspace{0.1in}
(\mbox{since } \phi = 2\pi/100, \;\cos \phi > 3/4 ).
\end{align*}
This implies that the relative order of the $b$-women on the
preference list of $B_i$ is $b_n b_{n-1} b_{n-2}$ $\cdots b_1$ for
all $1\leq i \leq n$.

Next we show that $B_i$ prefers $b_1$ over any
woman $w \notin \{b_1,\cdots,b_n\}$. Every woman $w \notin
\{b_1,\cdots,b_n\}$ lies in the $x$-$y$ plane. Hence, it is enough to
consider the projection of $\hat{B}_i$ in the plane which is
$\sin\phi\; \bar{a}_i$. Comparing $\hat{B}_i\cdot \bar{b}_1$ with
$\hat{B}_i\cdot \bar{a}_x$ and $\hat{B}_i\cdot \bar{c}_x$ for
$1\leq x \leq n$, we observe that
\begin{align*}
\hat{B}_i\cdot\bar{b}_1 \geq \cos \phi \;4  - \phi > (3/4)\cdot4
-\phi > 2, \\
\hat{B}_i\cdot \bar{a}_x = \sin \phi \; \bar{a}_i\cdot \bar{a}_x
\leq
\sin \phi \leq \phi < 1 < \hat{B}_i\cdot\bar{b}_1,\; \textrm{and} \\
\hat{B}_i\cdot \bar{c}_x = \sin \phi \; \bar{a}_i\cdot \bar{c}_x
\leq \sin \phi \leq \phi < 1 < \hat{B}_i\cdot\bar{b}_1.
\end{align*}
Hence, we have that the preference list of $B_i$ starts with $b_n
b_{n-1} b_{n-2} \cdots b_1$ for $1\leq i \leq n$. Next we show that
$B_i$ prefers $a_i$ over any woman $w \notin \{a_i,b_1,\cdots,b_n\}$.
Comparing $\hat{B}_i\cdot \bar{a}_i$ with
$\hat{B}_i\cdot \bar{a}_x$, where $x \neq i$ and $\hat{B}_i\cdot
\bar{c}_j$ for $1\leq j \leq n$, we find that
\begin{align*}
\hat{B}_i\cdot\bar{a}_i &= \sin \phi \;\bar{a}_i\cdot\bar{a}_i =
\sin \phi, \\
\hat{B}_i\cdot \bar{a}_x &= \sin \phi \; \bar{a}_i\cdot \bar{a}_x <
\sin \phi = \hat{B}_i\cdot\bar{a}_i \;\;(\mbox{since } \bar{a}_i\cdot \bar{a}_x < 1 \text{ for } i \neq x),\; \textrm{and}\\
\hat{B}_i\cdot \bar{c}_x &= \sin \phi \; \bar{a}_i\cdot \bar{c}_x <
\sin \phi = \hat{B}_i\cdot\bar{a}_i.
\end{align*}

Now the preference list of $B_i$ reads $b_n b_{n-1} b_{n-2} \cdots
b_1 a_i$ for $1\leq i \leq n$. Finally, we consider two cases -
(i) $a_i$ is not the last $a$-woman in any $W_j$ where $1\leq j\leq l$
(ii) $a_i$ is the last $a$-woman in some $W_j$ where $1\leq j\leq l$.

Case (i) Suppose $a_i \in W_j$ for some $j\in\{1,\cdots,l\}$. As
$a_i$ is not the last $a$-woman in $W_j$, the next $a$-woman in
$W_j$ is $a_{\sigma i}$. In other words, $a_i \in T(i)$ and
$a_{\sigma i} \in T(\sigma i)$ where $T(i) = \{c_{\sigma^{-1} i},
a_i, b_{\rho i}\}$ and
$T(\sigma i) = \{c_i, a_{\sigma i}, b_{\rho \sigma i}\}$. The angle
between $\bar{c}_{\sigma^{-1} i}$ and $\bar{a}_i$ is $4\theta_j$, and
that between $\bar{a}_i$ and (the projection of) $\bar{b}_{\rho i}$
is $2\theta_j$. The
angle between $\bar{b}_{\rho i}$ and $\bar{c}_i$ is the angle
between $T(i)$ and $T(\sigma i)$ which is $\theta_j$. Hence, the
angle between $\bar{a}_i$ and $\bar{c}_{i}$ is $2\theta_j + \theta_j
= 3\theta_j$. Note that the projection of the $b$-women onto the unit
circle is irrelevant as they have already been ranked by $B_i$. Hence, we
need only consider the $a$-women and the $c$-women. Given the
placement of the preference vector $\hat{B}_i$, after $a_i$, $B_i$
will prefer either $c_{\sigma^{-1} i}$ or $c_{i}$. Comparing the
dot product of $\hat{B_i}$ with $\bar{c}_{\sigma^{-1}i}$ and with
$\bar{c}_i$, we get
\begin{align*}
\hat{B}_i\cdot\bar{c}_i &= \sin \phi \;\bar{a}_i\cdot\bar{c}_i =
\sin \phi \cos 3\theta_j,\; \textrm{and}\\
\hat{B}_i\cdot \bar{c}_{\sigma^{-1}i} &= \sin \phi \; \bar{a}_i\cdot
\bar{c}_{\sigma^{-1}i} = \sin \phi \cos 4\theta_j < \sin \phi \cos
3\theta_j = \hat{B}_i\cdot\bar{c}_i.
\end{align*}
Hence, the preference list of $B_i$ reads $b_n b_{n-1} b_{n-2}
\cdots b_1 a_i c_i$.

Case (ii) Suppose $a_i \in W_j$ for some $j\in\{1,\cdots,l\}$ {\em and}
$a_i$ is the last $a$-woman in the group. This implies that $a_i \in
T(i) = \{c_{\sigma^{-1} i},a_i,b_{\rho i}\}$ and $T(i)$ is the last
sub-group of $W_j$. Since $T(i)$ is the last sub-group of $W_j$,
$W_j$ starts with the sub-group $T(\sigma i) = \{c_i,a_{\sigma i},
b_{\rho \sigma i}\}$ followed by $T(\sigma^2 i),\cdots,
T(\sigma^{p_j-1} i), T(\sigma^{p_j} i) = T(i)$. The angle subtended
by the group $W_j$ at the origin is $\epsilon$ and
$(2\pi - l\epsilon)/l$
is the angle between two adjacent $W$ groups. Hence, comparing the
dot product of $\hat{B}_i$ with any $a$-woman or $c$-woman in $W_j$
with any $a$-woman or $c$-woman from $W_x$ where $x\neq j$, we
obtain
\begin{align*}
w_1&\in W_j\;\;,\;w_2\in W_x, x\neq j,
\;\;w_1,w_2\notin\{b_1,\cdots,b_n\}\\
\hat{B}_i\cdot\bar{w}_1 &= \sin \phi \;\bar{a}_i\cdot\bar{w}_1 \geq
\sin \phi\; \cos \epsilon,\; \textrm{and} \\
\hat{B}_i\cdot \bar{w}_2 &= \sin \phi \; \bar{a}_i\cdot \bar{w}_2
\leq \sin \phi \;\cos ((2\pi - l\epsilon)/l) < \sin \phi\; \cos
\epsilon \leq \hat{B}_i\cdot\bar{w}_1.
\end{align*}

We conclude that $B_i$ prefers the $a$-women and $c$-women in $W_j$
over any $a$ or $c$-woman in any other group. Within $W_j$, the
$T(\cdot)$ sub-groups occur in the order $T(i), T(\sigma i),
\ldots,$ $T(\sigma^{p_j-1} i)$ and the projection of $\hat{B}_i$
lies inside $T(i)$. We remind the reader that within
$T(\sigma^{m}i)$, where $1\leq m \leq p_j$, the angle between
$\bar{c}_{\sigma^{m-1}i}$ and $\bar{a}_{\sigma^{m}i}$ is
$4\theta_j$, the angle between $\bar{a}_{\sigma^{m}i}$ and
$\bar{b}_{\rho \sigma^{m}i} $ is $2\theta_j$, and that between two
adjacent $T(\cdot)$'s is $\theta_j$, where $(7p_j -1)\theta_j =
\epsilon$. This implies that for $1\leq m \leq p_j$ the angle
between $\bar{a}_{\sigma^{m} i}$ and the projection of $\hat{B}_i$
(which is the angle between $\bar{a}_{\sigma^{m} i}$ and
$\bar{a}_i$) is $(p_j - m)7\theta_j$. For $1\leq m \leq p_j$, the
angle between $\bar{c}_{\sigma^{m-1} i}$ and the projection of
$\hat{B}_i$ (which is the angle between $\bar{c}_{\sigma^{m-1} i}$
and $\bar{a}_i$) is $(p_j - m)7\theta_j + 4\theta_j = (7(p_j - m)+
4)\theta_j $. Now computing the dot product of $\hat{B}_i$ with
$\bar{a}_{\sigma^{m}i}$ and $\bar{c}_{\sigma^{m-1}i}$, we see that
\begin{align*}
\textrm{for}\ 1\leq m &\leq p_j\;\;,\;\;\;\theta_j = \epsilon/(7p_j -1),\; \textrm{we have}\\
\hat{B}_i\cdot\bar{a}_{\sigma^{m}i} &= \sin \phi
     \;\bar{a}_i\cdot\bar{a}_{\sigma^{m}i} = \sin \phi \;\cos
     (7(p_j-m)\theta_j), \\
\hat{B}_i\cdot \bar{c}_{\sigma^{m-1}i} &= \sin \phi \;
     \bar{a}_i\cdot \bar{c}_{\sigma^{m-1}i}
     = \sin \phi \;\cos ((7(p_j - m)+ 4)\theta_j), \\
\hat{B}_i\cdot \bar{c}_{\sigma^{m-1}i} &=
     \sin \phi \;\cos ((7(p_j -m)+ 4)\theta_j)
     < \sin \phi \;\cos (7(p_j-m)\theta_j) \\
  & =\hat{B}_i\cdot\bar{a}_{\sigma^{m}i} \;\;,\;\;1\leq m \leq p_j,\; \textrm{and} \\
\hat{B}_i\cdot\bar{a}_{\sigma^{m}i} &=  \sin \phi \;\cos
     (7(p_j-m)\theta_j) < \sin \phi \;\cos ((7(p_j - m-1)+ 4)\theta_j) \\
  & = \hat{B}_i\cdot \bar{c}_{\sigma^{m}i}\;\;,\;\;1\leq m \leq p_j-1.
\end{align*}

Combining the above inequalities
and using the fact that $i=\sigma^{-1} e_j = \sigma^{p_j-1} e_j$, we
get
\begin{align*}
\hat{B}_i\cdot \bar{c}_{i} & < \hat{B}_i\cdot\bar{a}_{\sigma i}<
\hat{B}_i\cdot\bar{c}_{\sigma i} <\cdots < \hat{B}_i\cdot
\bar{c}_{\sigma^{(p_j-1)}i} <\hat{B}_i\cdot\bar{a}_{\sigma^{p_j}i}
= \hat{B}_i\cdot\bar{a}_{i}.
\end{align*}
Hence, the preference list of $B_i$ is $b_n b_{n-1} b_{n-2} \cdots
b_1 a_i c_{\sigma^{(p_j-1)} i} a_{\sigma^{(p_j-1)} i}\cdots
a_{\sigma^{2} i}c_{\sigma i} a_{\sigma i}c_{i}$.

We remind the reader that $e_i \in Rep(\sigma)$ is a representative element of
cycle $D_i$ and $W_i$ is partitioned into $T(e_i), T(\sigma
e_i),\ldots,T(\sigma^{(p_i-1)} e_i)$ where the sub-groups are
embedded on the unit circle in the order $T(e_i)$ through
$T(\sigma^{(p_i-1)} e_i)$ with $T(\sigma^{(p_i-1)} e_i)$ being the
last sub-group in $W_i$. We now have the initial part of the
preference lists of $A_i,\;C_i$ and $B_i$. They are as follows:
\begin{align}
\label{mennopi}
\textrm{for}\ e_i&\in Rep(\sigma),\\
\nonumber
A_{\sigma^m e_i} \;\;&:\;\;\; a_{\sigma^m e_i} b_{\rho \sigma^m
  e_i}\;\;,\;\;\;0\leq m \leq p_i-1, \\
\nonumber
C_{\sigma^{(m-1)} e_i} \;\;&:\;\;\; c_{\sigma^{(m-1)} e_i}
a_{\sigma^{m} e_i}\;\;,\;\;\;0\leq m \leq p_i-1, \\
\nonumber
B_{\sigma^m e_i} \;\;&:\;\; b_n b_{n-1} b_{n-2} \cdots b_1 a_{\sigma^m
  e_i} c_{\sigma^m e_i}\;\;,\;\;\;0\leq m \leq p_i-2,\; \textrm{and}
\\
\nonumber
B_{\sigma^{(p_i-1)} e_i} \;\;&:\;\; b_n b_{n-1} b_{n-2} \cdots b_1
a_{\sigma^{(p_i-1)} e_i} c_{\sigma^{(p_i-2)} e_i}
a_{\sigma^{(p_i-2)} e_i}\cdots a_{\sigma e_i}c_{e_i}
a_{e_i}c_{\sigma^{(p_i-1)} e_i}.
\end{align}

The partial preference lists of the women can be obtained by
arguments similar to those used for obtaining the men's preference
lists. The partial preference lists for the women are as follows:
\begin{align}
\label{women}
\textrm{for}\ f_i&\in Rep(\rho),\\
\nonumber
b_{\rho^m f_i} \;\;&:\;\;\; A_{\rho^{(m-1)} f_i} B_{\rho^m
  f_i}\;\;,\;\;\;0\leq m \leq q_i-1, \\
\nonumber
c_{\rho^{m} f_i} \;\;&:\;\;\; B_{\rho^{m} f_i} C_{\rho^{m}
  f_i}\;\;,\;\;\;0\leq m \leq q_i-1, \\
\nonumber
a_{\rho^m f_i} \;\;&:\;\; C_{n} C_{n-1} \cdots C_{1}
B_{\rho^m f_i} A_{\rho^m f_i}
\;\;,\;\;\;0\leq m \leq q_i-2,\; \textrm{and}\\
\nonumber
a_{\rho^{(q_i-1)} f_i} \;\;&:\;\;
C_{n} C_{n-1} \cdots C_{1}
\\
\nonumber
&\;\;
B_{\rho^{(q_i-1)} f_i} A_{\rho^{(q_i-2)} f_i} B_{\rho^{(q_i-2)}
f_i}\cdots B_{\rho^{2} f_i}A_{\rho f_i} B_{\rho f_i}
A_{f_i} B_{f_i}
A_{\rho^{(q_i-1)}f_i}.
\end{align}

We note that we have not specified the entire preference lists for the men
and women.  The remaining portion of each preference list appears {\em after} the part
that we have given above, and there will never be any stable pairs
involving a man/woman pair that is not shown on the partial preference lists given.
The partial lists we have given are sufficient to find
the male- and female-optimal matchings, and they contain the necessary
information to generate {\em all} of the stable matchings for our
constructed instance, or equivalently, to find all of the rotations for this
instance.

\subsection{Male- and female-optimal matchings}
\label{sect:mfo}

The rest of our analysis will use the partial preference lists
in~(\ref{mennopi}) and~(\ref{women})
and will not otherwise depend upon the position and preference
vectors.
In order to re-use our analysis in Section~\ref{sect:k-Euclidean}, we
will be
less specific about the men's partial preference
lists~(\ref{mennopi}).
Let $\tau$ be a permutation of $\{1,\ldots,n\}$.
Note that the men's partial preference lists from~(\ref{mennopi})
correspond to the following partial preference lists by taking the
permutation~$\tau$
to be the identity permutation.
\begin{align}
\label{men}
\textrm{For}\ e_i&\in Rep(\sigma),
\\
\nonumber
A_{\sigma^m e_i} \;\;&:\;\;\; a_{\sigma^m e_i} b_{\rho \sigma^m
  e_i}\;\;,\;\;\;0\leq m \leq p_i-1, \\
\nonumber
C_{\sigma^{(m-1)} e_i} \;\;&:\;\;\; c_{\sigma^{(m-1)} e_i}
a_{\sigma^{m} e_i}\;\;,\;\;\;0\leq m \leq p_i-1, \\
\nonumber
B_{\sigma^m e_i} \;\;&:\;\; b_{\tau(n)} b_{\tau(n-1)} \cdots b_{\tau(1)}
a_{\sigma^m
  e_i} c_{\sigma^m e_i}\;\;,\;\;\;0\leq m \leq p_i-2,\; \textrm{and}
  \\
\nonumber
B_{\sigma^{(p_i-1)} e_i} \;\;&:\;\; b_{\tau(n)} b_{\tau(n-1)} \cdots
b_{\tau(1)}
a_{\sigma^{(p_i-1)} e_i} c_{\sigma^{(p_i-2)} e_i}
a_{\sigma^{(p_i-2)} e_i}\cdots a_{\sigma e_i}c_{e_i}
a_{e_i}c_{\sigma^{(p_i-1)} e_i}.
\end{align}

The rest of our analysis will use the partial preference
lists~(\ref{women}) and~(\ref{men}), We will not make any assumptions
about the permutation~$\tau$ even though, for the purposes of this
section, we could assume that it is the identity permutation.

We will find the male-optimal and female-optimal stable matchings
using the Gale-Shapley algorithm. Recall that the order in which the
men proposes does not matter and any order always leads to the
male-optimal matching (provided a man proposes to the highest-ranked
woman (on his preference list) who hasn't yet rejected him).
Therefore, we may suppose the men propose in the order $\{A_1,
\ldots, A_n, C_1, \ldots, C_n, B_{\tau(n)}, \ldots, B_{\tau(1)}\}$.

For $1\leq i \leq n$, men $A_i$ and $C_i$ are paired up with their
first choices, women $a_i$ and $c_i$ respectively, as each of these
women will receive exactly one proposal during the algorithm. Man
$B_{\tau(n)}$ is paired with his first choice, woman $b_{\tau(n)}$.
Man $B_{\tau(n-1)}$ proposes to woman $b_{\tau(n)}$ and gets rejected
as woman $b_{\tau(n)}$ prefers man $B_{\tau(n)}$ over $B_{\tau(n-1)}$.
Man $B_{\tau(n-1)}$ then proposes to woman $b_{\tau(n-1)}$ and gets
accepted. In this manner, man $B_{\tau(i)}$'s proposals to women
$b_{\tau(n)}, b_{\tau(n-1)}, \cdots, b_{\tau(i+1)}$ are all rejected as
woman $b_{\tau(j)}$, $i+1 \leq j \leq n$, prefers man $B_{\tau(j)}$
over man $B_{\tau(i)}$. Hence, $B_{\tau(i)}$ is paired up with woman
$b_{\tau(i)}$ for $1 \leq i \leq n$. Therefore, the male-optimal
matching matches men $A_i$, $C_i$ and $B_i$ with women $a_i$, $c_i$
and $b_i$ for $1 \leq i \leq n$.

We find the female-optimal matching by reversing the roles of men
and women. In other words, women make proposals and men accept or
reject them. Suppose the women propose in the order $\{b_1, \ldots,
b_n, c_1, \ldots, c_n, a_{\sigma n},\ldots, a_{\sigma 1}\}$. Women
$b_i$ and $c_i$ are paired up with their first choices, namely, men
$A_{\rho^{-1} i}$ and $B_i$. Woman $a_{\sigma n}$ is paired with her
first choice, namely, man $C_{n}$. Woman $a_{\sigma (n-1)}$ proposes
to man $C_{n}$ and gets rejected as man $C_{n}$ prefers woman
$a_{\sigma n}$ over $a_{\sigma (n-1)}$. Woman $a_{\sigma (n-1)}$
then proposes to man
$C_{n-1}$
and gets accepted. In
this manner, woman $a_{\sigma i}$'s proposals to men $C_{n},
C_{n-1}, \cdots, C_{i+1}$ are all rejected as man $C_{j}$, $i+1 \leq
j \leq n$, prefers woman $a_{\sigma j}$ over woman $a_{\sigma i}$.
Hence, $a_{\sigma i}$ is paired up with man $C_{i}$ for $1 \leq i
\leq n$. Therefore, the female-optimal matching matches women
$b_i$,$c_i$ and $a_i$ with men $A_{\rho^{-1}i}$, $B_i$ and
$C_{\sigma^{-1}i}$, respectively, for $1 \leq i \leq n$.

As is always the case, as we move from the male-optimal to the
female-optimal matching (by performing a sequence of rotations), the
men go down their preference lists starting from their male-optimal
matching partner (their best possible partner) and ending at their
female-optimal matching partner (their worst possible partner) while
the women go up their preference lists starting from their
male-optimal matching partner (their worst) and ending at their
female-optimal matching partner (their best). Hence, a man will
never be paired with a woman who appears either ahead of his
male-optimal matching partner or after his female-optimal matching
partner on his preference list. Similarly, in any stable matching a
woman will never be paired with a man who appears either ahead of
her female-optimal matching partner or after her male-optimal
matching partner on her preference list. Hence, the only part of a
man's preference list that we need to consider is the sub-list that
starts at the male-optimal matching partner and ends at the
female-optimal matching partner. Similarly, for the women we need to
consider the sub-list starting at the female-optimal matching
partner and ending at the male-optimal matching partner. These
sub-lists are typically referred to as their {\em truncated
preference lists}, and these are as follows:

\begin{align*}
\textrm{For}\ e_i&\in Rep(\sigma)\;\;, \;\;0\leq m \leq p_i-1 \;\;, \;\;1\leq j\leq n, \\
A_{\sigma^m e_i} \;\;&:\;\;\; a_{\sigma^m e_i}
     b_{\rho \sigma^m e_i}\;\;,\\
C_{\sigma^{(m-1)} e_i} \;\;&:\;\;\; c_{\sigma^{(m-1)} e_i}
     a_{\sigma^{m} e_i}\;\;,\\
B_{\tau(j)} \;\;&:\;\; b_{\tau(j)} b_{\tau(j-1)}\cdots b_{\tau(1)}
     a_{\tau(j)} c_{\tau(j)}\;\;,\;\;\;\tau(j) \neq \sigma^{p_i-1} e_i\\
B_{\tau(j)} \;\;&:\;\; b_{\tau(j)} b_{\tau(j-1)}\cdots b_{\tau(1)}
     a_{\tau(j)} c_{\sigma^{-1} \tau(j)} a_{\sigma^{-1} \tau(j)}\cdots\\
&\;\;\;\;\;c_{\sigma \tau(j)} a_{\sigma \tau(j)}
     c_{\tau(j)}\;\;,\;\;\;\tau(j) = \sigma^{p_i-1} e_i
\end{align*}

\begin{align*}
\textrm{For}\ f_i&\in Rep(\rho)\;\;,\\
b_{\rho^m f_i} \;\;&:\;\;\; A_{\rho^{(m-1)} f_i} B_{\rho^m f_i}\;\;,\;\;\;0\leq m \leq q_i-1\\
c_{\rho^{m} f_i} \;\;&:\;\;\; B_{\rho^{m} f_i} C_{\rho^{m} f_i}\;\;,\;\;\;0\leq m \leq q_i-1\\
a_{\rho^m f_i} \;\;&:\;\; C_{\sigma^{-1}(\rho^m f_i)}
     C_{\sigma^{-1}(\rho^m f_i)-1} \cdots C_{1}
      B_{\rho^m f_i} A_{\rho^m f_i}\;\;,\;\;\;0\leq m \leq q_i-2\\
a_{\rho^{(q_i-1)} f_i} \;\;&:\;\; C_{\sigma^{-1}( \rho^{(q_i-1)} f_i)}
     C_{\sigma^{-1}( \rho^{(q_i-1)} f_i)-1} \cdots C_{1}
\\ &\;\;
B_{\rho^{(q_i-1)} f_i} A_{\rho^{(q_i-2)} f_i} B_{\rho^{(q_i-2)}
     f_i}\cdots B_{\rho^{2} f_i}A_{\rho f_i} B_{\rho f_i}
      A_{f_i} B_{f_i}
      A_{\rho^{(q_i-1)}f_i}.
\end{align*}

\subsection{Extracting rotations}\label{sect:extracting-rotations}

We first observe that the male-optimal matching and female-optimal
matching partners are different for every person. This implies that
each man is involved in at least one rotation and, hence, every man
has a well-defined suitor with respect to the male-optimal stable
matching. Also, every man has at least two stable partners. The
truncated preference lists of men $A_i$ and $C_i$, $1 \leq i \leq
n$, are of length two each. Hence, men $A_i$ and $C_i$ are each
involved in exactly one rotation.  Their suitors in the male-optimal
matching ${\cal M}_0$ are $S_{{\cal M}_0}(A_i) = b_{\rho i}$ and
$S_{{\cal M}_0}(C_i) = a_{\sigma i}$, respectively.

\medskip

{\bf Note:}  Throughout this section we use ${\cal M}_0$ to
denote the male-optimal stable matching.

\begin{lemma}\label{suitor-A}
In a stable matching ${\cal M}$, if $A_i$ is paired with $a_i$, then
$S_{{\cal M}}(A_i) = b_{\rho i} = S_{{\cal M}_0}(A_i)$.
\end{lemma}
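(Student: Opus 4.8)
The plan is to unwind the definition of ``suitor'' against the (truncated) preference lists already established. Recall that $A_i$'s preference list begins $a_i\, b_{\rho i}$, and that $S_{\mathcal M}(A_i)$ is by definition the first woman $w$ on this list for which (i) $A_i$ prefers his $\mathcal M$-spouse to $w$ and (ii) $w$ prefers $A_i$ to her $\mathcal M$-spouse. Under the hypothesis that $A_i$ is matched to $a_i$ in $\mathcal M$, the topmost woman $a_i$ fails condition (i), since a man never strictly prefers his spouse to herself; so the search moves to the next entry, $b_{\rho i}$. For $b_{\rho i}$, condition (i) holds automatically because $a_i$ is $A_i$'s first choice, so everything reduces to verifying condition (ii): that $b_{\rho i}$ prefers $A_i$ to whomever she is matched to in $\mathcal M$.

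The key step is therefore to pin down $b_{\rho i}$'s partner in $\mathcal M$. From the truncated preference lists in Section~\ref{sect:mfo}, the list of $b_{\rho i}$ has exactly the two entries $A_i$ and $B_{\rho i}$ (her female-optimal and male-optimal partners, respectively), so in any stable matching she is matched to one of the two. Since $A_i$ is matched to $a_i$ in $\mathcal M$ by hypothesis, $b_{\rho i}$ is not matched to $A_i$, hence she is matched to $B_{\rho i}$. But $b_{\rho i}$ prefers $A_i$ to $B_{\rho i}$ (that is the order of her list), so condition (ii) is satisfied with $w = b_{\rho i}$; and the only woman appearing before $b_{\rho i}$ on $A_i$'s list is $a_i$, which we already saw fails condition (i). Hence $S_{\mathcal M}(A_i) = b_{\rho i}$.

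Finally, the identity $b_{\rho i} = S_{\mathcal M_0}(A_i)$ is the special case $\mathcal M = \mathcal M_0$, already recorded just before the lemma (the male-optimal matching pairs $A_i$ with $a_i$), so combining the two equalities finishes the proof. I do not expect any genuine obstacle: the whole argument is bookkeeping with the truncated lists, and the one point requiring a little care is to justify — rather than assume — that $b_{\rho i}$ is matched to $B_{\rho i}$ in $\mathcal M$, which is precisely where the ``truncated preference list'' reduction of Section~\ref{sect:mfo} gets used.
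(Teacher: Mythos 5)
Your proposal is correct and follows essentially the same route as the paper's proof: both arguments observe that the truncated list of $b_{\rho i}$ is $A_i\,B_{\rho i}$, deduce that she must be paired with $B_{\rho i}$ once $A_i$ is paired with $a_i$, and conclude that she prefers $A_i$ to her partner, making her his suitor. The only difference is that you spell out the definitional bookkeeping (why $a_i$ fails condition (i) and why condition (i) holds for $b_{\rho i}$) a bit more explicitly than the paper does.
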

\begin{proof} The truncated preference lists of man $A_i$ and woman
$b_{\rho i}$ are $a_i b_{\rho i}$ and $A_{i}B_{\rho i}$,
respectively. Since $b_{\rho i}$ is paired with $B_{\rho i}$ in the
male-optimal stable matching ${\cal M}_0$, the spouse of $b_{\rho
i}$ in ${\cal M}$ is a man $M^* \in \{A_i, B_{\rho i}\}$. Since $A_i$
is paired with $a_i$ in ${\cal M}$, $b_{\rho i}$ is paired with
$B_{\rho i}$. Hence, $b_{\rho i}$ prefers $A_i$ over her partner in
${\cal M}$. This, in turn, implies that the suitor of $A_i$ in
${\cal M}$, $S_{{\cal M}}(A_i)$, is $ b_{\rho i}$.
\end{proof}

\begin{lemma}\label{suitor-C}
In a stable matching ${\cal M}$, if $C_i$ is paired with $c_i$, then
$S_{{\cal M}}(C_i) = a_{\sigma i} = S_{{\cal M}_0}(C_i)$.
\end{lemma}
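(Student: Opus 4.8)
The plan is to follow the pattern of the proof of Lemma~\ref{suitor-A}, reading everything off the truncated preference lists displayed in Section~\ref{sect:mfo}. The truncated list of man $C_i$ is $c_i\,a_{\sigma i}$, and by hypothesis $C_i$ is married to $c_i$, his first choice. Hence the only woman strictly below $C_i$'s spouse on his list is $a_{\sigma i}$, so $a_{\sigma i}$ is the unique possible value of $S_{\cal M}(C_i)$; it remains only to check that $a_{\sigma i}$ actually satisfies conditions (i) and (ii) in the definition of ``suitor''. Condition (i), that $C_i$ prefers his spouse $c_i$ to $a_{\sigma i}$, is immediate since $c_i$ heads his truncated list.

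For condition (ii) I would turn to the truncated preference list of $a_{\sigma i}$. Writing $\sigma i = \rho^m f_j$ for the appropriate $\rho$-cycle representative $f_j$, one has $\sigma^{-1}(\rho^m f_j) = i$, so from the displayed lists $a_{\sigma i}$'s truncated list begins $C_i\, C_{i-1}\cdots C_1\, B_{\sigma i}\cdots$ and ends with $A_{\sigma i}$ (her male-optimal partner), whether or not $\sigma i$ is the last element of its $\rho$-cycle. By the general fact established in Section~\ref{sect:mfo}, in any stable matching the partner of $a_{\sigma i}$ lies on this truncated list, i.e.\ between $C_i$ and $A_{\sigma i}$ inclusive. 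Since $C_i$ is married to $c_i \ne a_{\sigma i}$ in ${\cal M}$, the partner of $a_{\sigma i}$ in ${\cal M}$ appears strictly below $C_i$ on her list, so $a_{\sigma i}$ prefers $C_i$ to her spouse; this is condition (ii).

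Combining the two checks, $a_{\sigma i}$ is the first woman on $C_i$'s preference list meeting (i) and (ii) --- the only woman above it, $c_i$, is excluded because she is $C_i$'s spouse and so fails (i) --- hence $S_{\cal M}(C_i) = a_{\sigma i}$. Since we have already recorded that $S_{{\cal M}_0}(C_i) = a_{\sigma i}$, the lemma follows. The one place where this argument is genuinely longer than that of Lemma~\ref{suitor-A} is that $a_{\sigma i}$'s truncated list has more than two entries, so rather than simply reading off a two-element list I must invoke the ``spouse lies between the two optimal partners'' property from Section~\ref{sect:mfo}; I expect that (together with the small $\sigma^{-1}$ index chase confirming $C_i$ sits at the top of $a_{\sigma i}$'s list) to be the only mildly delicate point, and everything else is routine.
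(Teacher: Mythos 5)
Your proposal is correct and follows essentially the same route as the paper's proof: both arguments read off the truncated lists, note that $a_{\sigma i}$'s truncated list begins with $C_i$ and that her partner $M^*$ in ${\cal M}$ must lie on that truncated list (being at least as high as her male-optimal partner $A_{\sigma i}$), and conclude from $M^*\neq C_i$ that she prefers $C_i$ to $M^*$. The only cosmetic difference is that the paper writes out both possible forms of $a_{\sigma i}$'s truncated list explicitly, while you observe that only its first entry and last entry matter.
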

\begin{proof} The truncated preference list of man $C_i$ is $c_i
a_{\sigma i}$. The truncated preference list of woman $a_{\sigma i}$
is either $$C_{i} C_{i-1} \cdots C_{1} B_{\sigma i} A_{\sigma i}$$
$$\textrm{or}\ \ C_{i} C_{i-1}
\cdots C_{1} B_{\sigma i} A_{\rho^{-1} \sigma i} B_{\rho^{-1} \sigma
i} \cdots A_{\rho \sigma i} B_{\rho \sigma i} A_{\sigma i}.$$ We
note that the truncated list of $a_{\sigma i}$ starts with $C_i$. In
${\cal M}_0$, $a_{\sigma i}$ is paired up with $A_{\sigma i}$. This
implies that in the current stable matching ${\cal M}$, $a_{\sigma
i}$ is paired with a man $M^*$ who is as high as $A_{\sigma i}$ on
her preference list. As $C_i$ is paired up with $c_i$ in ${\cal M}$,
$M^* \neq C_i$. Hence, in the current stable matching ${\cal M}$,
$a_{\sigma i}$ prefers $C_i$ over $M^*$. This, in turn, implies that
$S_{{\cal M}}(C_i) = a_{\sigma i}$.
\end{proof}

Next we prove that $S_{{\cal M}_0}(B_i) = a_i$.
\begin{lemma}\label{suitor-B0}
The suitor of man $B_{\tau(i)}$ in  ${\cal M}_0$ is $S_{{\cal
M}_0}(B_{\tau(i)}) = a_{\tau(i)}$.
\end{lemma}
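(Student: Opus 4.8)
The plan is to walk down the preference list of $B_{\tau(i)}$ from the top and identify the first woman satisfying both conditions in the definition of a suitor, using the men's partial lists~(\ref{men}), the women's partial lists~(\ref{women}), and the fact established in Section~\ref{sect:mfo} that in ${\cal M}_0$ man $A_j$ is matched with $a_j$, man $B_j$ with $b_j$, and man $C_j$ with $c_j$ for every $j$. Recall that $B_{\tau(i)}$'s list begins $b_{\tau(n)}\,b_{\tau(n-1)}\cdots b_{\tau(i+1)}\,b_{\tau(i)}\,b_{\tau(i-1)}\cdots b_{\tau(1)}\,a_{\tau(i)}\cdots$ and that his spouse in ${\cal M}_0$ is $b_{\tau(i)}$.

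First I would dispose of all the $b$-women. For $k>i$ the woman $b_{\tau(k)}$ lies \emph{above} $b_{\tau(i)}$ on $B_{\tau(i)}$'s list, so $B_{\tau(i)}$ does not prefer his spouse to $b_{\tau(k)}$ and condition~(i) fails; and $b_{\tau(i)}$ itself is the spouse, so it is excluded. For $k<i$ condition~(i) does hold, so the content is condition~(ii): the ${\cal M}_0$-partner of $b_{\tau(k)}$ is $B_{\tau(k)}$, and by~(\ref{women}) her preference list begins $A_{\rho^{-1}\tau(k)}\,B_{\tau(k)}\cdots$, so $B_{\tau(k)}$ is her second choice while every other $B$-man --- in particular the distinct man $B_{\tau(i)}$ (distinct since $\tau$ is a permutation and $i\neq k$) --- sits strictly below $B_{\tau(k)}$ on her full list. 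Hence $b_{\tau(k)}$ does not prefer $B_{\tau(i)}$ to her spouse, condition~(ii) fails, and no $b$-woman can be the suitor.

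Next I would check $a_{\tau(i)}$, the woman immediately following $b_{\tau(1)}$ on $B_{\tau(i)}$'s list. Condition~(i): $a_{\tau(i)}$ occurs after all of $b_{\tau(n)},\dots,b_{\tau(1)}$, hence after $b_{\tau(i)}$, so $B_{\tau(i)}$ prefers his spouse $b_{\tau(i)}$ to $a_{\tau(i)}$. Condition~(ii): in ${\cal M}_0$ the woman $a_{\tau(i)}$ is matched with $A_{\tau(i)}$, while her list~(\ref{women}) has the form $C_n\cdots C_1\,B_{\tau(i)}\cdots A_{\tau(i)}$ in either of its two variants, so $B_{\tau(i)}$ precedes $A_{\tau(i)}$ and $a_{\tau(i)}$ prefers $B_{\tau(i)}$ to her spouse. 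Both conditions hold; since every woman strictly above $a_{\tau(i)}$ on $B_{\tau(i)}$'s list has already been eliminated, $a_{\tau(i)}$ is the suitor, i.e.\ $S_{{\cal M}_0}(B_{\tau(i)})=a_{\tau(i)}$.

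This is bookkeeping of the same flavour as Lemmas~\ref{suitor-A} and~\ref{suitor-C}, only longer because $B_{\tau(i)}$ must scan past many $b$-women. The one place where information beyond the displayed truncated lists is invoked is the claim that the unlisted $B$-men lie below $B_{\tau(k)}$ on $b_{\tau(k)}$'s list; this is immediate, since each woman's preference list is a fixed total order whose top two entries are the men $A_{\rho^{-1}\cdot}$ and $B_{\cdot}$ read off from~(\ref{women}). So I do not anticipate a genuine obstacle --- the only care required is to track which of the two suitor conditions each candidate woman violates.
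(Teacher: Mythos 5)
Your proposal is correct and follows essentially the same route as the paper's proof: rule out the $b$-women below the spouse by noting that $b_{\tau(k)}$'s list begins $A_{\rho^{-1}\tau(k)}\,B_{\tau(k)}$ so she will not prefer $B_{\tau(i)}$, then verify both suitor conditions for $a_{\tau(i)}$ using the fact that her list places $B_{\tau(i)}$ above her ${\cal M}_0$-partner $A_{\tau(i)}$. The only cosmetic difference is that the paper works with the truncated list starting at the spouse $b_{\tau(i)}$, which makes your explicit dismissal of the women $b_{\tau(k)}$ with $k>i$ unnecessary.
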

\begin{proof} The truncated preference list of man $B_{\tau(i)}$ depends on
the position of his subscript in the $\rho$-cycle. But we note that
the initial part of the truncated preference list of $B_{\tau(i)}$ is
$b_{\tau(i)} b_{\tau(i-1)} \cdots b_{\tau(1)} a_{\tau(i)}$ for all $i$.
Since our arguments only require the initial part of the truncated
preference list, we do not have to consider separate cases. The
spouse of $B_{\tau(i)}$ in ${\cal M}_0$ is $sp_{{\cal
M}_0}(B_{\tau(i)}) = b_{\tau(i)}$. Suppose the suitor of $B_{\tau(i)}$
is $S_{{\cal M}_0}(B_{\tau(i)}) = b_{\tau(j)}$ for some $j$,  $1 \leq
j <i$. This would imply that $b_{\tau(j)}$ prefers $B_{\tau(i)}$ over
$B_{\tau(j)}$. But the initial part of $b_{\tau(j)}$'s preference list
is $A_{\rho^{-1} {\tau(j)}} B_{\tau(j)} \cdots$. Hence, $b_{\tau(j)}$
prefers $B_{\tau(j)}$ over $B_{\tau(i)}$. This contradicts the
assumption that $S_{{\cal M}_0}(B_{\tau(i)}) = b_{\tau(j)}$. This
would imply that $S_{{\cal M}_0}(B_{\tau(i)}) \neq b_{\tau(j)}$ for
every $j <i$. Since $a_{\tau(i)}$ is paired up with $A_{\tau(i)}$ in
the male-optimal matching ${\cal M}_0$ and $a_{\tau(i)}$ prefers
$B_{\tau(i)}$ over $A_{\tau(i)}$, $S_{{\cal M}_0}(B_{\tau(i)}) =
a_{\tau(i)}$.
\end{proof}

The next two lemmas give the suitor of $B_i$ in stable matchings
which satisfy certain conditions.

\begin{lemma}\label{suitor-B1}
In a stable matching ${\cal M}$, if $C_k$ is paired with $c_k$ for
$1 \leq k \leq n$ and $B_{\tau(i)}$ is paired with $b_{\tau(i)}$, then
$S_{{\cal M}}(B_{\tau(i)}) = a_{{\tau(i)}} = S_{{\cal
M}_0}(B_{\tau(i)})$.
\end{lemma}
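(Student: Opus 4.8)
The plan is to apply the definition of ``suitor'' directly: scan $B_{\tau(i)}$'s preference list from the top, checking conditions~(i) and~(ii) at each woman. Since $B_{\tau(i)}$ is paired in ${\cal M}$ with $b_{\tau(i)}$, condition~(i) (``$B_{\tau(i)}$ prefers his spouse to~$w$'') holds exactly for the women ranked strictly below $b_{\tau(i)}$ on his list — in particular the women above it, namely $b_{\tau(n)},\cdots,b_{\tau(i+1)}$, are excluded — so the suitor, if it exists, is the first woman below $b_{\tau(i)}$ that also satisfies~(ii). By the truncated lists of Section~\ref{sect:mfo}, the women below $b_{\tau(i)}$ on $B_{\tau(i)}$'s list are, in order, $b_{\tau(i-1)}, b_{\tau(i-2)}, \cdots, b_{\tau(1)}, a_{\tau(i)}, \cdots$, so it suffices to rule out each $b_{\tau(j)}$ with $1\leq j<i$ and then show that $a_{\tau(i)}$ works (so that the scan never reaches the tail past the female-optimal partner).

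First I would rule out $w = b_{\tau(j)}$ for each $j$ with $1\leq j<i$. The truncated preference list of $b_{\tau(j)}$ is $A_{\rho^{-1}\tau(j)}\,B_{\tau(j)}$, so its ${\cal M}$-partner is one of these two men; and since $A_{\rho^{-1}\tau(j)}$ is the very first entry of its full list (from the partial preference lists of Section~\ref{sect:pref-lists}), every man off the truncated list — in particular $B_{\tau(i)}$, as $i\neq j$ — is ranked by $b_{\tau(j)}$ below $B_{\tau(j)}$. Hence $b_{\tau(j)}$ prefers its ${\cal M}$-partner to $B_{\tau(i)}$ and condition~(ii) fails for every such $w$.

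Next, and this is the heart of the argument, I would verify condition~(ii) for $w = a_{\tau(i)}$, using \emph{both} hypotheses of the lemma. In both forms of an $a$-woman's truncated list in Section~\ref{sect:mfo} (according to whether $\tau(i)$ is the last element of its $\rho$-cycle), that list begins with a block of $C$-men, and the first non-$C$ man on it is exactly $B_{\tau(i)}$. Because $C_k$ is paired with $c_k$ for every $k$, none of those $C$-men is available to $a_{\tau(i)}$ in ${\cal M}$, so her ${\cal M}$-partner is $B_{\tau(i)}$ or some man strictly below him on her list; and it cannot be $B_{\tau(i)}$ himself, since $B_{\tau(i)}$ is paired with $b_{\tau(i)}$. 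Therefore $a_{\tau(i)}$ prefers $B_{\tau(i)}$ to her ${\cal M}$-partner, i.e.\ condition~(ii) holds, and — condition~(i) being automatic — $a_{\tau(i)}$ is the first woman on $B_{\tau(i)}$'s list witnessing the suitor property. Hence $S_{{\cal M}}(B_{\tau(i)}) = a_{\tau(i)}$, and the identification $a_{\tau(i)} = S_{{\cal M}_0}(B_{\tau(i)})$ is immediate from Lemma~\ref{suitor-B0}.

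I expect the only delicate point to be the bookkeeping in the third step: one must confirm that $B_{\tau(i)}$ really is the first non-$C$ entry of $a_{\tau(i)}$'s truncated list in both cases, and that discarding the (unavailable) $C$-men together with $B_{\tau(i)}$ himself leaves only men strictly below $B_{\tau(i)}$ on her list. Everything else is a direct reading of the truncated preference lists already derived.
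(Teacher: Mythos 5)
Your proposal is correct and follows essentially the same route as the paper's own proof: it first rules out each $b_{\tau(j)}$ with $j<i$ as a potential suitor by noting that $b_{\tau(j)}$'s stable partner is always at least as high as $B_{\tau(j)}$ (who sits second on her full list), and then verifies $a_{\tau(i)}$ by observing that her truncated list begins with $C$-men followed by $B_{\tau(i)}$, all of whom are unavailable under the lemma's hypotheses. The bookkeeping you flag as the delicate point is exactly the case analysis the paper carries out on the two possible forms of $a_{\tau(i)}$'s truncated list, and it goes through as you describe.
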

\begin{proof} The initial part of the truncated preference list of man
$B_{\tau(i)}$ is $b_{\tau(i)} b_{\tau(i-1)} \cdots b_{\tau(1)}$
$a_{\tau(i)}$. The spouse of $B_{\tau(i)}$ in ${\cal M}$ is
$sp_{{\cal M}}(B_{\tau(i)}) = b_{\tau(i)}$. Suppose the suitor of
$B_{\tau(i)}$ is $S_{{\cal M}}(B_{\tau(i)}) = b_{\tau(j)}$ for some
$j$,  $1 \leq j <i$. This would imply that $b_{\tau(j)}$ prefers
$B_{\tau(i)}$ over $B_{\tau(j)}$. As the initial part of
$b_{\tau(j)}$'s preference list is $A_{\rho^{-1} {\tau(j)}}
B_{\tau(j)} \cdots$ and $b_{\tau(j)}$ is paired with $B_{\tau(j)}$
in the male-optimal stable matching, the partner of $b_{\tau(j)}$ in
the current matching ${\cal M}$ would be a man $M^*$ who is as high
as $B_{\tau(j)}$ on her preference list. Since $b_{\tau(j)}$ prefers
$B_{\tau(j)}$ over $B_{\tau(i)}$, $b_{\tau(j)}$ would prefer $M^*$
over $B_{\tau(i)}$. This contradicts the assumption that $S_{{\cal
M}}(B_{\tau(i)}) = b_{\tau(j)}$. This would entail that $S_{{\cal
M}}(B_{\tau(i)}) \neq b_{\tau(j)}$ for every $j <i$.

Next we will show that $S_{{\cal M}}(B_{\tau(i)}) = a_{\tau(i)}$. In
the male-optimal stable matching ${\cal M}_0$, $a_{\tau(i)}$ is
paired up with $A_{\tau(i)}$. This implies that in the current stable
matching ${\cal M}$, $a_{\tau(i)}$ is paired with a man $M^*$ who is
as high as $A_{\tau(i)}$ on her preference list, i.e.\ $sp_{{\cal
M}}(a_{\tau(i)}) = M^*$. We note that $a_{\tau(i)}$'s truncated
preference list is either
$$C_{\sigma^{-1}(\tau(i))} C_{\sigma^{-1}( \tau(i))-1}
\cdots C_{1} B_{\tau(i)} A_{\tau(i)}$$
$$\textrm{or}\ \ C_{\sigma^{-1}(\tau(i))} C_{\sigma^{-1}( \tau(i))-1}
\cdots C_{1} B_{\tau(i)} A_{\rho^{-1}( \tau(i))} B_{\rho^{-1}(\tau(i))}
\cdots A_{\rho \tau(i)} B_{\rho \tau(i)} A_{\tau(i)}.$$ The initial
part of $a_{\tau(i)}$'s truncated list is $C_{\sigma^{-1}(\tau(i))}
C_{\sigma^{-1}( \tau(i))-1} \cdots C_{1} B_{\tau(i)}$. As $C_k$ is
paired up with $c_k$ for $1 \leq k \leq n$, and $B_{\tau(i)}$ is
paired up with $b_{\tau(i)}$ in the current matching ${\cal M}$, $M^*
\notin \{C_{\sigma^{-1}(\tau(i))}, C_{\sigma^{-1}( \tau(i))-1}, \cdots
C_{1}, B_{\tau(i)}\}$. Hence, in the current matching ${\cal M}$,
$a_{\tau(i)}$ prefers $B_{\tau(i)}$ over her partner $M^*$. This, in
turn, implies that $S_{{\cal M}}(B_{\tau(i)}) = a_{\tau(i)}$.
\end{proof}

\begin{lemma}\label{suitor-B2}
In a stable matching ${\cal M}$, if, for all $k$, woman $a_k$ is
paired with man $M_k$, who is at least as high as $B_k$ on her
preference list, and if $B_{\tau(i)}$ is paired with $a_{\tau(i)}$,
then
$S_{{\cal M}}(B_{\tau(i)}) = c_{\tau(i)} = sp_{{\cal
M}_t}(B_{\tau(i)})$,
where ${\cal M}_t$ is the female-optimal stable matching.
\end{lemma}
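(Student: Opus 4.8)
The plan is to unwind the definition of suitor working purely with the truncated preference lists from the end of Section~\ref{sect:mfo}, never returning to the underlying vectors. Since $B_{\tau(i)}$ is paired with $a_{\tau(i)}$ in ${\cal M}$, $S_{{\cal M}}(B_{\tau(i)})$ is the first woman strictly after $a_{\tau(i)}$ on $B_{\tau(i)}$'s full list who prefers $B_{\tau(i)}$ to her own ${\cal M}$-spouse. The truncated list of $B_{\tau(i)}$ is a contiguous block of his full list ending at his worst stable partner $c_{\tau(i)}$, so the search stays inside the truncated list, and I would split on its shape. If $\tau(i)\neq\sigma^{p_i-1}e_i$ (which includes the degenerate case $p_i=1$), the woman immediately following $a_{\tau(i)}$ is already $c_{\tau(i)}$. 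If $\tau(i)=\sigma^{p_i-1}e_i$ with $p_i\geq2$, the women strictly between $a_{\tau(i)}$ and $c_{\tau(i)}$ on that list are exactly the $c_x$ and $a_x$ with $x$ ranging over the $\sigma$-cycle of $\tau(i)$ and $x\neq\tau(i)$.

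Next I would verify that $c_{\tau(i)}$ itself qualifies. It is the last entry of $B_{\tau(i)}$'s truncated list, so $B_{\tau(i)}$ prefers his spouse $a_{\tau(i)}$ to $c_{\tau(i)}$. The truncated list of $c_{\tau(i)}$ is $B_{\tau(i)}\,C_{\tau(i)}$, so her ${\cal M}$-spouse lies in $\{B_{\tau(i)},C_{\tau(i)}\}$; as $B_{\tau(i)}$ is already matched to $a_{\tau(i)}$, she is matched to $C_{\tau(i)}$, whom she ranks below $B_{\tau(i)}$. Hence $c_{\tau(i)}$ prefers $B_{\tau(i)}$ to her spouse and is a candidate for the suitor. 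Since the female-optimal matching ${\cal M}_t$ pairs $B_i$ with $c_i$ (Section~\ref{sect:mfo}), this simultaneously gives $c_{\tau(i)}=sp_{{\cal M}_t}(B_{\tau(i)})$.

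The substance is ruling out every earlier candidate in the second case. For $c_x$ with $x\neq\tau(i)$ in the $\sigma$-cycle of $\tau(i)$: the displayed prefix of $c_x$'s list is $B_x\,C_x$, an initial segment ending at her worst stable partner $C_x$, and $B_{\tau(i)}\notin\{B_x,C_x\}$, so she ranks $B_{\tau(i)}$ below $C_x$, hence below her ${\cal M}$-spouse, and does not prefer $B_{\tau(i)}$. For $a_x$ with $x\neq\tau(i)$ in that same $\sigma$-cycle: by Observation~\ref{obs:sig-rho} the elements $x$ and $\tau(i)$ lie in different $\rho$-cycles, and the displayed prefix of $a_x$'s list consists only of the men $C_n,\dots,C_1$ together with $A$'s and $B$'s whose indices lie in the $\rho$-cycle of $x$; therefore $B_{\tau(i)}$ does not occur in that prefix, which ends at $a_x$'s worst stable partner $A_x$, so $a_x$ ranks $B_{\tau(i)}$ below $A_x$, and by the hypothesis that $M_x$ is at least as high as $B_x$ on $a_x$'s list (or just by stability of ${\cal M}$, which already forces $M_x$ no worse than $A_x$ for her) she prefers $M_x$ to $B_{\tau(i)}$. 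Thus none of the intermediate women prefers $B_{\tau(i)}$, so $c_{\tau(i)}$ is the first qualifying woman and $S_{{\cal M}}(B_{\tau(i)})=c_{\tau(i)}$.

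The only point requiring care is the step that a person absent from the displayed prefix of another person's list is genuinely ranked below that list's worst stable partner; this is exactly the property recorded when the partial lists were constructed --- the shown portion is an initial segment of the full list that ends at the male-optimal partner (for a woman) or female-optimal partner (for a man) --- so it is available without reopening the coordinate computation. The case $p_i=1$ needs only the remark that then the $\sigma$-cycle of $\tau(i)$ is $\{\tau(i)\}$, so the list of intermediate women is empty and the first case of the argument applies verbatim.
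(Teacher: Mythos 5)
Your proof is correct and follows essentially the same route as the paper's: the same case split on whether $\tau(i)=\sigma^{p_i-1}e_i$, the same verification that $c_{\tau(i)}$ prefers $B_{\tau(i)}$ to her forced spouse $C_{\tau(i)}$, and the same elimination of the intermediate $c_x$ and $a_x$ candidates via the truncated lists. The only (harmless) variation is that for the intermediate $a_x$ you additionally observe, via Observation~\ref{obs:sig-rho}, that $B_{\tau(i)}$ is entirely absent from $a_x$'s displayed prefix and hence below her male-optimal partner $A_x$, which lets stability alone do the work where the paper invokes the hypothesis that $M_x$ is at least as high as $B_x$; both arguments are valid.
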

\begin{proof} We will consider two cases depending on the preference
list of $B_{\tau(i)}$.

Case (i) The truncated preference list of man $B_{\tau(i)}$ is
$$b_{\tau(i)} b_{\tau(i-1)} \cdots b_{\tau(1)} a_{\tau(i)} c_{\tau(i)}.$$
As the truncated preference list of woman $c_{\tau(i)}$ is
$B_{\tau(i)} C_{\tau(i)}$ and the spouse of $c_{\tau(i)}$ in the
male-optimal stable matching is $C_{\tau(i)}$, $sp_{{\cal
M}}(c_{\tau(i)}) \in \{B_{\tau(i)},C_{\tau(i)}\}$. As the spouse of
$B_{\tau(i)}$ in ${\cal M}$ is $a_{\tau(i)}$ (by assumption),
$sp_{{\cal M}}(c_{\tau(i)}) = C_{\tau(i)}$. Hence, $c_{\tau(i)}$
prefers $B_{\tau(i)}$ over her partner in ${\cal M}$. Therefore, the
suitor of $B_{\tau(i)}$ in ${\cal M}$, $S_{{\cal M}}(B_{\tau(i)}) =
c_{\tau(i)}$.

Case (ii) The truncated preference list of man $B_{\tau(i)}$ is
$$b_{\tau(i)} b_{\tau(i-1)} \cdots b_{\tau(1)} a_{\tau(i)}
c_{\sigma^{-1}(\tau(i))} a_{\sigma^{-1}(\tau(i))}\cdots c_{\sigma
\tau(i)} a_{\sigma \tau(i)} c_{\tau(i)}.$$
As the spouse of $B_{\tau(i)}$.
in ${\cal M}$ is $a_{\tau(i)}$ (by assumption), and $c_{\tau(i)}$ is
the partner of $B_{\tau(i)}$ in the female-optimal stable matching,
$S_{{\cal M}}(B_{\tau(i)}) \in \{c_{\sigma^{-1} (\tau(i))},
a_{\sigma^{-1}(\tau(i))}, \ldots c_{\sigma \tau(i)}, a_{\sigma
\tau(i)}, c_{\tau(i)}\}$.

Suppose $S_{{\cal M}}(B_{\tau(i)}) = c_{\sigma^{k} (\tau(i))} \neq
c_{\tau(i)}$. As the initial part of the preference list of woman
$c_{\sigma^{k} (\tau(i))}$ is $B_{\sigma^{k} (\tau(i))} C_{\sigma^{k}
(\tau(i))}$, we see that $c_{\sigma^{k} (\tau(i))}$ prefers
$C_{\sigma^{k} (\tau(i))}$ over $B_{\tau(i)}$. As the spouse of
$c_{\sigma^{k}(\tau(i))}$ in the male-optimal stable matching is
$C_{\sigma^{k} (\tau(i))}$, then $sp_{{\cal M}}(c_{\sigma^{k}
(\tau(i))}) \stackrel{\textrm{\tiny def}}{=} M^*$ is at least as high
as $C_{\sigma^{k} (\tau(i))}$ on her preference list. Hence,
$c_{\sigma^{k} (\tau(i))}$ prefers her partner in ${\cal M}$ over
$B_{\tau(i)}$. Therefore, $S_{{\cal M}}(B_{\tau(i)}) = c_{\sigma^{k}
(\tau(i))} (\neq c_{\tau(i)})$ is not possible.

Suppose $S_{{\cal M}}(B_{\tau(i)}) = a_{\sigma^{k}(\tau(i))} \neq
a_{\tau(i)}$. The initial part of the truncated preference list of
woman $a_{\sigma^{k} (\tau(i))}$ is $C_{\sigma^{k-1} (\tau(i))}
C_{\sigma^{k-1} (\tau(i)) -1} \cdots C_{1} B_{\sigma^{k} (\tau(i))}$.
As the partner of $a_{\sigma^{k} (\tau(i))}$ in the stable matching
${\cal M}$ is at least as high as $B_{\sigma^{k} (\tau(i))}$, we have
$$M^* \in \{C_{\sigma^{k-1} (\tau(i))}, C_{\sigma^{k} (\tau(i)) -1},
\cdots, C_{1}, B_{\sigma^{k} (\tau(i))}\}.$$
Hence, $a_{\sigma^{k}
(\tau(i))}$ prefers $M^*$ over $B_{\tau(i)}$. Therefore, $S_{{\cal
M}}(B_{\tau(i)}) = a_{\sigma^{k} (\tau(i))} (\neq a_{\tau(i)})$ is not
possible.

Now we will show that $S_{{\cal M}}(B_{\tau(i)}) = c_{\tau(i)}$. As
the truncated preference list of woman $c_{\tau(i)}$ is $B_{\tau(i)}
C_{\tau(i)}$ and the spouse of $c_{\tau(i)}$ in the male-optimal
stable matching is $C_{\tau(i)}$, $sp_{{\cal M}}(c_{\tau(i)}) \in
\{B_{\tau(i)},C_{\tau(i)}\}$. As the spouse of $B_{\tau(i)}$ in ${\cal
M}$ is $a_{\tau(i)}$, $sp_{{\cal M}}(c_{\tau(i)}) = C_{\tau(i)}$.
Hence, $c_{\tau(i)}$ prefers $B_{\tau(i)}$ over her partner in ${\cal
M}$. Therefore, the suitor of $B_{\tau(i)}$ in ${\cal M}$, $S_{{\cal
M}}(B_{\tau(i)}) = c_{\tau(i)}$.
\end{proof}

We will later observe that a stable matching obtained after
performing a set of $\rho$-rotations satisfies conditions laid out
in Lemmas~\ref{suitor-A} and~\ref{suitor-B1}. Hence, the above
lemmas help in establishing the suitors of $A$-men and $B$-men in
the stable matchings obtained after performing $\rho$-rotations.

We note that the {\bf Find-All-Rotations} algorithm obtains all
the rotations of the instance irrespective of whatever ordering of
the men we use in that procedure (to initialize the first proposal
to his suitor). We order the men as follows: $\{A_1, \cdots, A_n,
C_1, \cdots, C_n, B_1,$ $\cdots, B_n\}$. In the male-optimal
matching ${\cal M}_0$, $A_1$ is paired with $a_1$. {\bf
Find-All-Rotations} starts with man $A_1$ whose suitor is $b_{\rho
1}$. The sequence in that algorithm starts with the pair
$(A_1,a_1)$. As $b_{\rho 1}$ is the suitor of $A_1$, the next pair
in the sequence is $(sp_{{\cal M}_0}(b_{\rho 1}),b_{\rho 1}) =
(B_{\rho 1}, b_{\rho 1})$. With $a_{\rho 1}$ being the suitor of
$B_{\rho 1}$, the next pair is $(sp_{{\cal M}_0}(a_{\rho 1}),a_{\rho
1})= (A_{\rho 1},a_{\rho 1})$. The pair $(A_{\rho 1},a_{\rho 1})$
results in $(B_{\rho^2 1},b_{\rho^2 1})$. In this manner, we grow
the sequence $(A_1,a_1)$, $(B_{\rho 1}$, $b_{\rho 1})$, $(A_{\rho
1}$, $a_{\rho 1})$, $(B_{\rho 2}$, $b_{\rho 2}),\cdots $. As the
suitor of $A_i$ is $b_{\rho i}$ and that of $B_i$ is $a_i$, we
observe that the sequence alternates between $A$-men and $B$-men. We
also note that the pair $(A_{1},a_{1})$ results in the pair
$(A_{\rho 1},a_{\rho 1})$ and the pair $(B_{\rho 1},b_{\rho 1})$
results in $(B_{\rho^2 1},a_{\rho^2 1})$. In other words, the
subscripts of the $A$-men and the $B$-men involved in the above
sequence are from a $\rho$-cycle, in particular, the $\rho$-cycle
containing $1$. Suppose the $\rho$-cycle containing $1$ is of size
$p_1$, that is, the $\rho$-cycle containing $1$ is $(1,2,\ldots,
p_1)$. Then the sequence we end up with is
\begin{eqnarray*}
\{(A_1,a_1), (B_{\rho 1}, b_{\rho 1}),
(A_{\rho 1}, a_{\rho 1}), \cdots, (B_{\rho^{p_1-1} 1},
b_{\rho^{p_1-1} 1}), (A_{\rho^{p_1-1} 1}, a_{\rho^{p_1-1} 1}),
     (B_{\rho^{p_1} 1}, b_{\rho^{p_1} 1}) \}  \\
 = \{(A_1,a_1), (B_{2}, b_{2}), (A_{2}, a_{2}), \cdots, (B_{p_1},
b_{p_1}), (A_{p_1}, a_{p_1}), (B_{1}, b_{1})\},
\end{eqnarray*}
using that $(B_{\rho^{p_1} 1}, b_{\rho^{p_1} 1}) = (B_{1}, b_{1})$.
Lemma~\ref{suitor-B1} tells us this ends the sequence in the {\bf
Find-All-Rotations} algorithm, and we have therefore found a
rotation.

(Note:  If we start with any $A_i$ or $B_i$,
with $i\in (1, 2, \ldots, p_1)$, we will discover the
same rotation, as the resulting sequence we find is a cyclic shift
of the one given above.)

After applying the above $\rho$-rotation, $A_i$ is paired with
$b_{\rho i}$ (for $1\leq i \leq p_1$), his partner in the
female-optimal matching. Hence, in this new stable matching
the men $A_1, \ldots, A_{p_1}$ do not have
suitors and will therefore not participate in future rotations. We also note
that the only men who changed their partners in the above rotation
were $A$-men and $B$-men with subscripts in the $\rho$-cycle
containing $1$. Hence, $C_k$ is still paired up with $c_k$ for $1 \leq k
\leq n$, and $A_i$ and $B_i$ are paired up with $a_i$ and $b_i$,
respectively, when the subscript $i$ does {\em not} belong to the $\rho$-cycle
containing $1$.

Let ${\cal M}_1$ denote this new matching after applying
this first $\rho$-cycle we have discovered. We
note that ${\cal M}_1$ satisfies the conditions laid out in
Lemmas~\ref{suitor-A} and~\ref{suitor-B1} which, in turn, tells us that
$S_{{\cal M}_1}(A_i) = b_{\rho i}$ and $S_{{\cal M}_1}(B_i) = a_{i}$
for $i \notin\{1, 2, \cdots, p_1\}$. {\bf Find-All-Rotations} then
picks the next
man who has a well-defined suitor, namely $A_{p_1+1}$, whose suitor
is $S_{{\cal M}_1}(A_{p_1+1}) = b_{\rho (p_1+1)}$, and constructs
the next rotation. From the above exercise of constructing a
rotation corresponding to a $\rho$-cycle, it is clear that the
rotation containing man $A_{p_1+1}$ will be a $\rho$-rotation
involving $A$-men and $B$-men whose subscripts belong to the
$\rho$-cycle that contains $p_1 + 1$.
Proceeding in this manner, we obtain all
rotations ($\rho$-rotations) involving men $A_i$, $1 \leq i \leq n$.

Every $B_i$ will participate in exactly one $\rho$-rotation as every
$i\in [n]$, belongs to exactly one cycle of the permutation $\rho$.
After applying all the $\rho$-rotations, we will obtain a stable
matching, say ${\cal M}'$, in which the spouses of men $A_i$, $B_i$
and $C_i$ are $b_{\rho i}$, $a_i$, and $c_i$, respectively.
As was observed before, all the
$A$-men are paired up with their partners from the female-optimal
stable matching. Thus, none of the $A$-men will participate in any
of the future rotations. As was also noted, the $C$-men each
participate in exactly one rotation and the suitor of $C_i$,
$1 \leq i \leq n$, is $a_{\sigma i}$ as long as $C_i$ is
paired with $c_i$ in the stable matching.
From Lemma~\ref{suitor-B2}, it follows that
the suitor of $B_i$ in ${\cal M}'$ is $c_i$. All of the
$B$-men and $C$-men have well-defined suitors.

The next man picked by {\bf Find-All-Rotations} is $C_1$ whose
suitor is $a_{\sigma i}$. The sequence starts with the pair
$(C_i,c_i)$. As $a_{\sigma i}$ is the suitor of $C_i$, the next pair
in the sequence is $(sp_{{\cal M}'}(a_{\sigma i}), a_{\sigma i}) =
(B_{\sigma i}, a_{\sigma i})$. Similarly, as the suitor of
$B_{\sigma i}$ is $c_{\sigma i}$, the next pair in the sequence is
$(sp_{{\cal M}'}(c_{\sigma i}), c_{\sigma i}) = (C_{\sigma
i},c_{\sigma i})$. Continuing from $(C_{\sigma i},c_{\sigma i})$, we
get the pair $(B_{\sigma^2 i}, a_{\sigma^2 i})$. Proceeding in this
manner, we generate the rotation. We note that the suitor of $B_{i}$
is $c_i$ and that of $C_i$ is $a_{\sigma i}$ thereby forcing us to
alternate between $C$-men and $B$-men. We also note that the pair
$(C_i,c_i)$ eventually results in the pair $(C_{\sigma i}, c_{\sigma
i}$ and $(B_{\sigma i}, a_{\sigma i})$ resulted in the pair
$(B_{\sigma^2 i}, a_{\sigma^2 i})$. In other words, the subscripts
of the $C$-men and the $B$-men in the rotation are governed by a
$\sigma$-cycle, in particular, the $\sigma$ cycle containing $1$.
Suppose the $\sigma$-cycle containing $1$ is of size $q_1$, that is,
the $\sigma$-cycle containing $1$ is $(1, \sigma 1, \ldots,
\sigma^{q_1-1} 1)$. Then the rotation we end up with is
$\{(C_1,c_1), (B_{\sigma 1}, a_{\sigma 1}), (C_{\sigma 1}, c_{\sigma
1}), \cdots, (B_{\sigma^{q_1-1} 1}, a_{\sigma^{q_1-1} 1}),
(C_{\sigma^{q_1-1} 1}, c_{\sigma^{q_1-1} 1}), (B_{\sigma^{q_1} 1},
a_{\sigma^{q_1} 1})\}$,\\ where $(B_{\sigma^{q_1} 1},
a_{\sigma^{q_1} 1})= (B_{1}, a_{1})$.

After performing the above $\sigma$-rotation, for $0 \leq k \leq q_1-1$,
$C_{\sigma^k 1}$, is paired with $a_{\sigma^{k+1} 1}$, his
partner in the female-optimal matching. Hence, men $C_1, \cdots,
C_{\sigma^{q_1-1} 1}$ no longer have suitors and will not participate
in any future rotations. We note that the only men who changed
their partners in the above rotation were $C$-men and $B$-men with
subscripts in the $\sigma$-cycle containing $1$. Hence, $A_k$ is still
paired with $b_{\rho k}$ for $k \in [n]$, and $C_i$ and
$B_i$ are paired up with $c_i$ and $a_i$, respectively, when the
subscript $i$ does not belong to the $\sigma$-cycle containing $1$.

Let ${\cal M}'_1$ denote the new matching after applying this
$\sigma$-rotation.  We note that ${\cal M}'_1$ satisfies the
conditions laid out in Lemmas~\ref{suitor-C}
and~\ref{suitor-B2} which, in turn, entails that
$S_{{\cal M}'_1}(C_i) = a_{\sigma i}$ and
$S_{{\cal M}'_1}(B_i) = c_{i}$ for
$i \notin\{1, \sigma 1, \cdots, \sigma^{q_1-1} 1\}$. {\bf Find-All-Rotations}
picks the next man who has a well-defined suitor, say $C_{i_1}$,
where
$$i_1 = \min\{i: C_i \text{ is paired with } c_i \text{ in }
{\cal M}'_1 \text{ and } 1 \leq i \leq n\},$$ and constructs a new
rotation. The suitor of $C_{i_1}$ is $S_{{\cal M}'_1}(C_{i_1}) =
a_{\sigma i_1}$. From the above exercise of constructing a rotation
corresponding to a $\sigma$-cycle, it is clear that the rotation
containing man $C_{i_1}$ will be a $\sigma$-rotation involving
$C$-men and $B$-men whose subscripts belong to the $\sigma$-cycle
containing $i_1$. Proceeding in this manner, we obtain all
$\sigma$-rotations involving men $C_i$, $1 \leq i \leq n$. Each
$B_i$ will participate in exactly one $\sigma$-rotation, as every
$i\in [n]$ belongs to exactly one cycle of the
permutation $\sigma$.

After applying all the $\sigma$-rotations, we have a stable
matching, say ${\cal M}''$, in which the spouses of men $A_i$, $B_i$
and $C_i$ are $b_{\rho i}$, $c_i$, and $a_{\sigma i}$, respectively.
All the men are paired up with their partners from the
female-optimal stable matching. Hence, ${\cal M}'' = {\cal M}_t$ where
${\cal M}_t$ is the female-optimal stable matching.
Therefore we do not have any further rotations to
extract. Hence, the only rotations in the rotation poset of the
stable matching instance are the rotations governed by the $\rho$-
and $\sigma$-cycles, namely $\rho$-rotations and $\sigma$-rotations.
Therefore, the only stable pairs of the instance are $(A_i,a_i)$,
$(A_i,b_{\rho i})$, $(B_i,b_i)$, $(B_i,a_i)$, $(B_i,c_i)$,
$(C_i,c_i),$ and $(C_i,a_{\sigma i})$ for $i \in [n]$.

We still have to prove that the $\rho$-rotations correspond to
vertices in $V_1$ and the $\sigma$-rotations correspond to vertices
in $V_2$. We prove this fact in the next section.

\subsection{Ordering rotations}
In this section, we compare rotations using the explicitly precedes
relation as in Definition~\ref{def:explicitly}.

Recalling Definition~\ref{eliminated}, it follows that a man-woman pair
$(M,w)$ can be eliminated if and only if there exist stable pairs
$(M_1,w)$ and $(M_2,w)$ such that $w$ prefers $M_1$ over $M$ and $M$
appears as high as $M_2$ on $w$'s preference list. In other words, a
man-woman pair $(M,w)$ can be eliminated if and only if $M$ appears
on the truncated preference list of $w$ and is not the partner of
$w$ in the female-optimal stable matching. This identifies all the
man-woman pairs eliminated by rotations of the instance.

Recall from the previous section that the only stable pairs of the matching
instance
are $(A_i,a_i), (A_i,b_{\rho i}), (B_i,b_i), (B_i,a_i), (B_i,c_i),
(C_i,c_i), (C_i,a_{\sigma i})$ for $i \in [n]$. Of these
stable pairs, $(A_i,b_{\rho i}), (B_i,c_i)$, and $(C_i,a_{\sigma
i})$ are pairs in the female-optimal stable
matching. Hence, the only stable pairs that are eliminated by
rotations are $(A_i,a_i), (B_i,b_i), (B_i,a_i)$ and $(C_i,c_i)$ for
$i \in [n]$. We list all the eliminated pairs below and
highlight those that are stable.

\begin{align*}
\textrm{For } f_i&\in Rep(\rho)\;\;,\\
b_{\rho^m f_i} \;\;&:\;\;\;
{\bf B_{\rho^m f_i}}\;\;,\;\;\;0\leq m \leq q_i-1\\
c_{\rho^{m} f_i} \;\;&:\;\;\;
{\bf C_{\rho^{m} f_i}}\;\;,\;\;\;0\leq m \leq q_i-1\\
a_{\rho^m f_i} \;\;&:\;\; C_{\sigma^{-1} (\rho^m f_i -1)} \cdots
C_{1}
{\bf B_{\rho^m f_i} A_{\rho^m f_i}}\;\;,\;\;\;0\leq m \leq q_i-2\\
a_{\rho^{(q_i-1)} f_i} \;\;&:\;\; C_{\sigma^{-1} (\rho^{(q_i-1)}
f_i-1)} \cdots C_{\sigma^{-1}1} {\bf B_{\rho^{(q_i-1)} f_i}}
A_{\rho^{(q_i-2)} f_i} B_{\rho^{(q_i-2)} f_i}\cdots B_{\rho^{2}
f_i}A_{\rho f_i} B_{\rho f_i} {\bf A_{\rho^{(q_i-1)}f_i}}
\end{align*}

In Definition~\ref{def:explicitly}, rotation $R$ eliminates pair
$(M,w)$ and rotation $R'$ moves man $M$ to woman $w'$ such that $M$
prefers $w$ over $w'$. Hence, woman $w$ and man $M$ belong to
rotations $R$ and $R'$, respectively.

\begin{lemma}\label{rho-minimal}
Suppose $R'$ is a $\rho$-rotation. Then there does not exist a
rotation $R$ which explicitly precedes $R'$. Therefore, every
$\rho$-rotation is a minimal element of the rotation poset.
\end{lemma}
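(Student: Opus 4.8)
The plan is to argue directly from Definition~\ref{def:explicitly}: I will show that no rotation $R$ can explicitly precede a $\rho$-rotation $R'$, after which minimality of $R'$ in the rotation poset is immediate, since ``precedes'' is the transitive closure of ``explicitly precedes'' (if some $R$ preceded $R'$, the last link of a witnessing chain would be a rotation explicitly preceding $R'$). First I would recall from Section~\ref{sect:extracting-rotations} that if $R'$ is the $\rho$-rotation associated with a $\rho$-cycle $E$, then the men moved by $R'$ are exactly the $A_i$ with $i\in E$ (each moved from $a_i$ to $b_{\rho i}$) and the $B_i$ with $i\in E$ (each moved from $b_i$ to $a_i$). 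Hence if some $R\neq R'$ explicitly precedes $R'$, then $R$ eliminates a pair $(M,w)$ where $M$ is one of these men and $M$ strictly prefers $w$ to the woman $R'$ moves $M$ to. In each of the two cases below I will pin down the only possible such pair $(M,w)$ and show that $R'$ itself eliminates it; this forces $R=R'$ by the fact (quoted above from~\cite{IL}) that no pair is eliminated by more than one rotation, contradicting $R\neq R'$.

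If $M=A_i$: on $A_i$'s preference list~(\ref{men}) the only woman he prefers to $b_{\rho i}$ is $a_i$, so $w=a_i$ and $R$ eliminates $(A_i,a_i)$; but $R'$ moves $a_i$ from $A_i$ to $B_i$, and $B_i$ lies above $A_i$ on the (truncated) preference list of $a_i$ in~(\ref{women}), so by Definition~\ref{eliminated} the rotation $R'$ eliminates $(A_i,a_i)$ — hence $R=R'$, a contradiction. If $M=B_i$: on $B_i$'s preference list~(\ref{men}) every woman he prefers to $a_i$ is of the form $b_\ell$, so $w=b_\ell$ for some $\ell$; but a pair can be eliminated only if the man lies on the woman's truncated preference list and is not her female-optimal partner, and the truncated list of each $b_\ell$ consists of exactly the two men $A_{\rho^{-1}\ell}\,B_\ell$, so $(B_i,b_\ell)$ is eliminable only when $\ell=i$. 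Thus $R$ eliminates $(B_i,b_i)$; but $R'$ moves $b_i$ from $B_i$ up to $A_{\rho^{-1}i}$ (the female-optimal partner of $b_i$), so $R'$ eliminates $(B_i,b_i)$ — again $R=R'$, a contradiction.

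Since both cases are impossible, no rotation explicitly precedes $R'$, so nothing precedes it and $R'$ is minimal. The step needing the most care is checking, in each case, that $R'$ really does eliminate the candidate pair: one must read off from the description of the $\rho$-rotation in Section~\ref{sect:extracting-rotations} exactly to which man $R'$ reassigns $a_i$ (respectively $b_i$), and then verify against the women's (truncated) preference lists~(\ref{women}) that this reassignment moves the woman strictly above the relevant man, so that Definition~\ref{eliminated} applies. Everything else follows from the structure of the instance already established and from uniqueness of the rotation that eliminates a given pair.
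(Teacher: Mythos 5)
Your proof is correct, but it takes a genuinely different route from the paper's. The paper argues by a two-level case analysis: it splits on whether the hypothetical predecessor $R$ is a $\rho$-rotation or a $\sigma$-rotation, and within each case enumerates the possible forms of the eliminated pair $(M,w)$ (with $M$ a man of $R'$ and $w$ a woman of $R$), checking each candidate against the explicit table of eliminated pairs and using disjointness of cycles to rule it out. You never classify $R$ at all: you observe that Definition~\ref{def:explicitly} together with the initial segments of the men's preference lists forces the eliminated pair to be $(A_i,a_i)$ or $(B_i,b_i)$ for some $i$ in the $\rho$-cycle of $R'$ (the $(B_i,b_\ell)$ candidates with $\ell\neq i$ being killed by the eliminability criterion, since $B_i$ does not appear on $b_\ell$'s truncated list), you check that $R'$ itself eliminates both of these pairs, and you invoke the quoted Irving--Leather fact that no pair is eliminated by more than one rotation to force $R=R'$, contradicting the distinctness built into the definition of ``explicitly precedes.'' Your argument is shorter and somewhat more robust, since it does not depend on knowing what kinds of rotations exist besides $R'$; the cost is that it leans on the uniqueness lemma, which the paper states but does not actually use in its own proof here. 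Both arguments ultimately rest on the same established facts: the identification of the $\rho$-rotations and of the partners they assign, the truncated preference lists, and the characterization of eliminable pairs given at the start of the ordering-rotations section.
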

\begin{proof} Suppose there exists a rotation $R$ which explicitly
precedes $$R' = \{(B_{j},b_j), (A_{j},a_{j}), (B_{j+1},b_{j+1}),
(A_{j+1},a_{j+1}), \cdots, (B_{j+q-1},b_{j+q-1}),
(A_{j+q-1},a_{j+q-1})\}.$$ We consider two cases - (I) $R$ is a
$\rho$-rotation, (II) $R$ is a $\sigma$-rotation.

Case (I) Suppose $R$ is a $\rho$-rotation where
$$R = \{ (B_{i},b_i),
(A_{i},a_i), (B_{i+1},b_{i+1}), (A_{i+1},a_{i+1}), \ldots,
(B_{i+p-1},b_{i+p-1}), (A_{i+p-1},a_{i+p-1})\}.$$ The $\rho$-cycles
corresponding to rotations $R$ and $R'$ are $(i,i+1,\ldots,i+p-1)$
and $(j,j+1,\ldots,j+q-1)$. Since any two $\rho$-cycles are
disjoint, the corresponding $\rho$-rotations are disjoint, i.e.\
$\rho$-rotations $R$ and $R'$ do not share either a man or a woman.
Since $R$ explicitly precedes $R'$, there exists a man-woman pair
$(M,w)$ with $M$ belonging to rotation $R'$ and $w$ belonging to
rotation $R$ such that $R$ eliminates the pair $(M,w)$ and $R'$
moves $M$ to a woman $w'$ below $w$ on his list. In other words,
there exist
\begin{eqnarray*}
\textrm{a man } M & \in & \{B_j, B_{j+1}, \ldots, B_{j+q-1}, A_j,
A_{j+1}, \ldots, A_{j+q-1}\}, \textrm { and} \\
\textrm{a woman } w & \in & \{b_i, b_{i+1},
\cdots, b_{i+p-1}, a_i, a_{i+1}, \cdots, a_{i+p-1}\}
\end{eqnarray*}
satisfying
the above property.  We consider a set of sub-cases, depending
upon possible values of $M$ and $w$.

Subcase (I-a) $(M, w) \in \{(B_x, b_y), (A_x, b_y)\}$.
We note that $x \neq y$,
and $x$ and $y$ are from different $\rho$-cycles. From the table of
eliminated pairs, we note that the set of eliminated pairs involving
woman $b_y$ is $\{(B_y,b_y)\}$. Hence, $(M,w) \neq (B_x,b_y)$ and
$(M, w) \neq (A_x,b_y)$.

Subcase (I-b) $(M,w) = (A_x,a_y)$. We again note that $x \neq y$,
and $x$ and $y$ are from different $\rho$-cycles. We also note that
woman $a_y$ could have one of two possible preference lists which is
reflected in the table of eliminated pairs. After performing
rotation $R$, woman $a_y$ is paired up with $B_y$. Hence, the set of
pairs eliminated by rotation $R$ involving woman $a_y$ could be
either
\begin{eqnarray*}
S = \{(A_{y},a_y)\} \ \ \textrm{ or} \\
T = \{(A_{\rho^{-1} y},a_y),
(B_{\rho^{-1} y},a_y), \ldots, (A_{\rho y},a_y), (B_{\rho y},a_y),
(A_{y},a_y)\}.
\end{eqnarray*}
Since we are only interested in eliminated pairs that involve an
$A$-man, we consider subsets of $S$ and $T$ containing $A$-men which
are $\{(A_y,a_y)\}$ and $\{(A_{\rho^{-1} y},a_y), (A_{\rho^{-2}
y},a_y), \cdots,$ $(A_{\rho y},a_y), (A_{y},a_y)\}$, respectively.
We note that every element of $\{\rho^{-1} y, \rho^{-2} y, \cdots,
\rho y, y\}$ belongs to the $\rho$-cycle containing $y$. Since $x
\neq y$ and $x$ and $y$ are from different $\rho$-cycles, $(A_x,a_y)
\notin \{(A_y,a_y)\}$ and $(A_x,a_y) \notin \{(A_{\rho^{-1} y},a_y),
(A_{\rho^{-2} y},a_y), \ldots, (A_{\rho y},a_y), (A_{y},a_y)\}$.
Hence, $(M,w) \neq (A_x,a_y)$.

Subcase (I-c) $(M,w) = (B_x,a_y)$. As before, $x \neq y$ and $x$,
and $y$ are from different $\rho$-cycles. From the table of
eliminated pairs, we note that the set of eliminated pairs involving
woman $a_y$ could be either
\begin{eqnarray*}
S & = & \{(A_{y},a_y)\} \ \textrm{ or} \\
T & = & \{(A_{\rho^{-1} y},a_y), (B_{\rho^{-1} y},a_y), \ldots, (A_{\rho
y},a_y), (B_{\rho y},a_y), (A_{y},a_y)\}.
\end{eqnarray*}
Since we are only
interested in eliminated pairs that involve a $B$-man, we consider
subsets of $S$ and $T$ containing $B$-men which are $\emptyset$ and
$\{(B_y,a_y), (B_{\rho^{-1} y},a_y), (B_{\rho^{-2} y},a_y), \ldots,
(B_{\rho y},a_y)\}$, respectively. We note that every element of
$\{y, \rho^{-1} y, \rho^{-2} y, \cdots, \rho y\}$ belongs to the
$\rho$-cycle containing $y$. Since $x \neq y$ and $x$ and $y$ are
from different $\rho$-cycles,
$$(B_x,a_y) \notin \{(B_y,a_y),
(B_{\rho^{-1} y},a_y), (B_{\rho^{-2} y},a_y), \cdots, (B_{\rho
y},a_y)\}$$
and $(B_x,a_y) \notin \emptyset$ (vacuously). Hence, $(M,w)
\neq (B_x,a_y)$.

Therefore, if $R$ explicitly precedes $R'$, then $R$ is not a
$\rho$-rotation.

Case (II) Suppose $R$ is a $\sigma$-rotation where
$$R = \{ (B_{i},a_i),
(C_{i},c_i), (B_{\sigma i},a_{\sigma i}), (C_{\sigma i},c_{\sigma
i}), \ldots, (B_{\sigma^{p-1} i},a_{\sigma^{p-1} i}),
(C_{\sigma^{p-1} i},c_{\sigma^{p-1} i})\}.$$ The $\sigma$- and
$\rho$-cycles corresponding to rotations $R$ and $R'$ are
$\sigma_1 = (i,\sigma i,\ldots,\sigma^{p-1}i)$ and
$\rho_1 = (j,j+1,\ldots,j+q-1)$, respectively.
Since $R$ explicitly precedes $R'$, there exists a man-woman pair
$(M,w)$ with $M$ belonging to rotation $R'$ and $w$ belonging to
rotation $R$ such that $R$ eliminates the pair $(M,w)$ and $R'$
moves $M$ to a woman $w'$ below $w$ on his list. In other words,
there exist a man $M \in \{B_j, B_{j+1}, \ldots, B_{j+q-1}, A_j,
A_{j+1}, \ldots, A_{j+q-1}\}$ and a woman $w \in \{a_i, a_{\sigma
i}, \ldots, a_{\sigma^{p-1}i}, c_i, c_{\sigma i}, \ldots,
c_{\sigma^{p-1}i}\}$ satisfying the above property. Once again,
there are a set of sub-cases to consider.

Subcase (II-a) $(M,w) \in \{(B_x,c_y), (A_x,c_y)\}$.
From the table of eliminated pairs, we note that the set of
eliminated pairs involving woman $c_y$ is $\{(C_y,c_y)\}$. Hence,
$(M, w) \neq (B_x, c_y)$ and $(M, w) \neq (A_x, c_y)$.

Subcase (II-b) $(M,w) \in \{(B_x,a_y), (A_x,a_y)\}$. We note that
after performing rotation $R'$, woman $a_y$ is paired up with $B_y$.
Hence, the pair $(B_y,a_y)$ cannot be an eliminated pair. This
implies that when the eliminated pair $(M,w)$ is of the form
$(B_x,a_y)$, the subscript $x$  cannot assume the value $y$. We
observe that even though woman $a_y$ could have one of two possible
preference lists, the initial part of her preference list stays the
same. We note that before performing the rotation $R$ woman $a_y$ is
paired up with man $B_y$. After performing the rotation $R$, woman
$a_y$ is paired up with $C_{\sigma^{-1} y}$. Hence, the pairs
eliminated by rotation $R$ that involve woman $a_y$ are
$\{(C_{\sigma^{-1}(y)-1}, a_y), (C_{\sigma^{-1}(y)-2}, a_y), \ldots,
(C_{1}, a_y), (B_y,a_y)\}$. Since
$$\{(B_x,a_y), (A_x,a_y)\} \cap
\{(C_{\sigma^{-1}(y)-1}, a_y), (C_{\sigma^{-1}(y)-2}, a_y), \ldots,
(C_{1}, a_y), (B_y,a_y)\} = \emptyset,$$
we see $(M,w) \neq (B_x,a_y)$ and $(M, w) \neq (A_x,a_y)$.

Therefore, if $R$ explicitly precedes $R'$, then $R$ cannot be a
$\sigma$-rotation.

From cases (I) and (II), we conclude that a $\rho$-rotation cannot
be explicitly preceded either by another $\rho$-rotation or a
$\sigma$-rotation. Therefore, every $\rho$-rotation is a minimal
element in the rotation poset.
\end{proof}

\begin{lemma}\label{sigma-maximal}
Suppose $R$ is a $\sigma$-rotation. Then there does not exist a
rotation $R'$ such that $R$ explicitly precedes $R'$. Therefore,
every $\sigma$-rotation is a maximal element of the rotation poset.
\end{lemma}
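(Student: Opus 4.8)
The plan is to argue dually to Lemma~\ref{rho-minimal}: instead of asking which rotations can precede a given $\rho$-rotation, I fix the $\sigma$-rotation $R$ and rule out every rotation $R'$ that $R$ could explicitly precede. Write $R=\{(B_{\sigma^k i},a_{\sigma^k i}),(C_{\sigma^k i},c_{\sigma^k i}):0\le k\le p-1\}$ for the $\sigma$-cycle $(i,\sigma i,\ldots,\sigma^{p-1}i)$. The first step is to list the pairs that $R$ eliminates, using Definition~\ref{eliminated} together with the fact that the only women $R$ moves up are the $a_{\sigma^k i}$'s and $c_{\sigma^k i}$'s (equivalently, reading the eliminated-pairs table): these are the pairs $(C_{\sigma^k i},c_{\sigma^k i})$ and, for each $k$, the pair $(B_{\sigma^k i},a_{\sigma^k i})$ together with the pairs $(C_j,a_{\sigma^k i})$ for $1\le j<\sigma^{k-1}i$. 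The structural observation I want to extract is that every man occurring in an eliminated pair of $R$ is a $B$-man or a $C$-man, never an $A$-man.

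Next I assume for contradiction that $R$ explicitly precedes some rotation $R'\ne R$, witnessed by an eliminated pair $(M,w)$ of $R$ and a woman $w'$ to which $R'$ moves $M$ with $M$ preferring $w$ to $w'$. Since $M$ must lie in $R'$, and Section~\ref{sect:extracting-rotations} tells us exactly which rotations contain each man (a $C$-man lies in a unique rotation, a $B$-man in exactly one $\rho$-rotation and one $\sigma$-rotation), I can pin $R'$ down in each case. If $M$ is the $C$-man $C_{\sigma^k i}$, its unique rotation is $R$ itself, contradicting $R'\ne R$. If $M=C_j$ with $1\le j<\sigma^{k-1}i$ (so $w=a_{\sigma^k i}$) and $j$ lies in $R$'s $\sigma$-cycle, the same thing happens; otherwise $R'$ is the $\sigma$-rotation of $j$'s cycle, which moves $C_j$ from $c_j$ to $a_{\sigma j}$, so $w'=a_{\sigma j}$, and because the relevant portion of $C_j$'s preference list is just $c_j\,a_{\sigma j}$, $C_j$ can prefer $w=a_{\sigma^k i}$ over $a_{\sigma j}$ only if $a_{\sigma^k i}=c_j$ (impossible) or $a_{\sigma^k i}=a_{\sigma j}$, i.e.\ $j=\sigma^{k-1}i$, which again forces $j$ into $R$'s cycle. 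Finally, if $M$ is the $B$-man $B_{\sigma^k i}$ (so $w=a_{\sigma^k i}$), then since $R'\ne R$ the rotation $R'$ must be the $\rho$-rotation through $B_{\sigma^k i}$, which moves it from $b_{\sigma^k i}$ to $a_{\sigma^k i}=w$; hence $w'=w$ and the requirement that $M$ prefers $w$ over $w'$ fails. In every case we reach a contradiction, so $R$ explicitly precedes no rotation, and since ``precedes'' is the transitive closure of ``explicitly precedes'', $R$ is a maximal element of the rotation poset.

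I expect the only delicate point to be the $M=C_j$ subcase with $j$ outside $R$'s $\sigma$-cycle: one must correctly identify $w'=a_{\sigma j}$ and then use the length-two (truncated) preference lists of $C$-men to conclude that $a_{\sigma^k i}$ cannot be ranked above $a_{\sigma j}$ by $C_j$. Everything else is either a direct lookup in the table of eliminated pairs or an application of the per-man rotation-count facts already established in Section~\ref{sect:extracting-rotations}, so no further calculation is needed.
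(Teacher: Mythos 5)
Your proof is correct and is in substance the same as the paper's: both arguments enumerate the pairs eliminated by the $\sigma$-rotation $R$ (none of which involve an $A$-man) and check that no such pair can witness explicit precedence, using the fact that the preference lists of the $C$-men begin $c_x\,a_{\sigma x}$ and the fact that the $\rho$-rotation through $B_y$ moves him exactly to $a_y$, so the ``strictly prefers'' condition fails. The only difference is organizational: you classify by the man $M$ occurring in the eliminated pair and use the per-man rotation-membership facts to pin down $R'$, whereas the paper first splits on whether $R'$ is a $\rho$- or a $\sigma$-rotation (deferring the former to Case (II) of Lemma~\ref{rho-minimal}) and then on the form of $(M,w)$.
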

\begin{proof}

Suppose there exists a rotation $R$ which explicitly
precedes $R'$.
We consider two cases.

Case (I) $R'$ is a $\rho$-rotation. This case has been dealt with in
case (II) of Lemma~\ref{rho-minimal}.

Case(II) $R'$ is a $\sigma$-rotation. Let
$$R = \{ (B_{i},a_i), (C_{i},c_i), (B_{\sigma i},a_{\sigma i}),
(C_{\sigma i},c_{\sigma i}), \ldots, (B_{\sigma^{p-1} i},a_{\sigma^{p-1} i})
(C_{\sigma^{p-1} i},c_{\sigma^{p-1} i})\},$$
and
$$R' = \{(B_{j},a_j), (C_{j},c_{j}),(B_{\sigma j},a_{\sigma j}), (C_{\sigma j},c_{\sigma j}), \ldots,
(B_{\sigma^{q-1}j},a_{\sigma^{q-1}j}),
(C_{\sigma^{q-1}j},c_{\sigma^{q-1}j})\}.$$
The $\sigma$-cycles
corresponding to rotations $R$ and $R'$ are
$(i,\sigma i,\ldots,\sigma^{p-1} i)$ and
$(j,\sigma j,\ldots,\sigma^{q-1}j)$.
Since any two $\sigma$-cycles are disjoint, the corresponding
$\sigma$-rotations are disjoint, i.e.\ $\sigma$-rotations $R$ and
$R'$ do not share either a man or a woman. As has been observed
before, the implication of $R$ explicitly preceding $R'$ is that
there exists a man-woman pair $(M,w)$ with $M$ belonging to rotation
$R'$ and $w$ belonging to rotation $R$ such that $R$ eliminates the
pair $(M,w)$ and $R'$ moves $M$ to a woman $w'$ below $w$ on his
list. In other words, there exist
\begin{eqnarray*}
\textrm{a man } M & \in & \{B_j, B_{\sigma j},
\ldots, B_{\sigma^{q-1} j}, C_j, C_{\sigma j}, \ldots,
C_{\sigma^{q-1} j}\} \textrm{ and} \\
\textrm{ a woman } w & \in &
   \{a_i, a_{\sigma i}, \ldots, a_{\sigma^{p-1}i}, c_i, c_{\sigma i},
   \ldots, c_{\sigma^{p-1} i}\}
\end{eqnarray*}
satisfying the above property. As the
pair $(M,w)$ has a set of possibilities, we consider a set of
sub-cases.

Subcase (II-a) $(M,w) \in \{(B_x,c_y), (C_x,c_y)\}$.
We note that $x \neq y$,
and $x$ and $y$ are from different $\sigma$-cycles. From the table
of eliminated pairs, we note that the set of eliminated pairs
involving woman $c_y$ is $\{(C_y,c_y)\}$. Hence, $(M,w) \neq
(B_x,c_y)$ and $(M, w) \neq (C_x,c_y)$.

Subcase (II-b) $(M,w) = (B_x,a_y)$. As before, we note that $x \neq
y$ and $x$ and $y$ are from different $\sigma$-cycles. We also note
that the spouses of woman $a_y$ before and after performing the
rotation $R$ are $B_y$ and $C_{\sigma^{-1} y}$, respectively. Hence,
the pairs eliminated by rotation $R$ are
$$\{(C_{\sigma^{-1}(y)-1},a_y), (C_{\sigma^{-1}(y)-2},a_y), \ldots,
(C_{1},a_y), (B_{y},a_y)\}.$$ As $x \neq y$, $(B_x,a_y)\notin
\{(C_{\sigma^{-1}(y-1)},a_y), (C_{\sigma^{-1}(y-2)},a_y), \ldots,
(C_{\sigma^{-1}1},a_y), (B_{y},a_y)\}$. Therefore, $(M,w) \neq
(B_x,a_y)$.

Subcase (II-c) $(M,w) = (C_x,a_y)$. We note that $x$ and $y$ are
from different $\sigma$-cycles. We also note that the spouses of
woman $a_y$ before and after performing the rotation $R$ are $B_y$
and $C_{\sigma^{-1} y}$, respectively. Hence, the pairs eliminated
by rotation $R$ are
$$\{(C_{\sigma^{-1}(y)-1},a_y), (C_{\sigma^{-1}(y)-2},a_y), \ldots,
(C_{1},a_y), (B_{y},a_y)\}.$$

Suppose $(C_x,a_y)\in \{(C_{\sigma^{-1}(y)-1},a_y),
(C_{\sigma^{-1}(y)-2},a_y), \ldots, (C_{1},a_y), (B_{y},a_y)\}$.
This implies $x \in \{\sigma^{-1}(y)-1, \sigma^{-1}(y)-2, \ldots,
1\}$. Recalling that the only
stable pairs are $(A_i,a_i)$, $(A_i,b_{\rho i})$, $(B_i,b_i)$,
$(B_i,a_i)$, $(B_i,c_i)$, $(C_i,c_i)$, $(C_i,a_{\sigma i})$
for $i \in [n]$, we see that $(C_x,a_y)$ is an unstable pair
as $x \neq \sigma^{-1} y$.
Therefore, every pair of the form $(C_x,a_y)$ that
rotation $R$ eliminates is an unstable pair. Since $R$ explicitly
precedes $R'$ and the pairs of the form $(C_x,a_y)$ eliminated by
$R$ are unstable, rotation $R'$ has to move some $C_x$ below $a_y$
on his list. The initial part of the preference list of $C_x$ is
$c_x a_{\sigma x}$ for all $x \in\{1,2,\cdots,n\}$. In other words,
$C_x$ has $a_{\sigma x}$ above $a_y$ for all $y \neq \sigma x$.
After performing rotation $R'$, $C_x$ moves from $c_x$ to $a_{\sigma
x}$ for every $x \in \{j, \sigma j, \cdots, \sigma^{q-1} j\}$.
Therefore, rotation $R'$ does not take $C_x$ below $a_y$ on his
preference list for $x \in \{j, \sigma j, \cdots, \sigma^{q-1} j\}$.
Therefore, $(M,w) \neq (C_x,a_y)$.

Putting together cases (I) and (II), we conclude that if $R$ is a
$\sigma$-rotation, then it cannot explicitly precede any rotation
$R'$. This, in turn, implies that every $\sigma$-rotation is a
maximal element in the rotation poset.
\end{proof}

From Lemmas~\ref{rho-minimal} and~\ref{sigma-maximal}, it follows
that the rotation poset of our constructed matching instance is of
height at most $1$.

\begin{lemma}\label{rho-sigma}
Suppose $R$ is a $\rho$-rotation and $R'$ is a $\sigma$-rotation.
Then $R$ explicitly precedes $R'$ if and only if $R$ and $R'$ have a
common man. In other words, the $\rho$ and $\sigma$-cycles
corresponding to $R$ and $R'$, respectively, have an element in
common.
\end{lemma}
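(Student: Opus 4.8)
\noindent\emph{Set-up.} I would first pin down the shapes of the two rotations and translate ``common man'' into a condition on cycles. Write the $\rho$-cycle underlying $R$ as $X=(r_0,r_1,\dots,r_{p-1})$ with $r_{t+1}=\rho(r_t)$; then (by the analysis of Section~\ref{sect:extracting-rotations}) $R$ consists of the pairs $(A_{r_t},a_{r_t})$ and $(B_{r_t},b_{r_t})$, and applying $R$ re-marries $A_{r_t}$ to $b_{\rho r_t}$ and $B_{r_t}$ to $a_{r_t}$; equivalently, $R$ moves each woman $b_{r_t}$ from $B_{r_t}$ up to $A_{\rho^{-1}r_t}$. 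Similarly write the $\sigma$-cycle underlying $R'$ as $Y=(s_0,\dots,s_{q-1})$; then $R'$ consists of the pairs $(B_{s_u},a_{s_u})$ and $(C_{s_u},c_{s_u})$, and applying $R'$ re-marries $B_{s_u}$ to $c_{s_u}$ and $C_{s_u}$ to $a_{\sigma s_u}$. The men appearing in $R$ are the $A$- and $B$-men indexed in $X$, and the men appearing in $R'$ are the $B$- and $C$-men indexed in $Y$; hence $R$ and $R'$ share a man if and only if some $B_i$ with $i\in X\cap Y$ is common, and by Observation~\ref{obs:sig-rho} there is at most one such $i$. So ``common man'' is exactly the condition $X\cap Y\neq\emptyset$, which is also what ``the $\rho$- and $\sigma$-cycles have an element in common'' means; it therefore suffices to prove $R$ explicitly precedes $R'$ iff $X\cap Y\neq\emptyset$.

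\noindent\emph{($\Leftarrow$).} Suppose $i\in X\cap Y$. From the table of eliminated pairs, $(B_i,b_i)$ is the unique eliminated pair whose woman is $b_i$, and it is eliminated exactly when $b_i$ is moved past $B_i$, i.e.\ when the $\rho$-rotation on the $\rho$-cycle through $i$ is applied; since $i\in X$, that rotation is $R$, so $R$ eliminates $(B_i,b_i)$. Because $i\in Y$, the man $B_i$ belongs to $R'$, and applying $R'$ moves $B_i$ from $a_i$ to $c_i$. Since $B_i$'s truncated preference list begins $b_{\tau(n)}\cdots b_{\tau(1)}\,a_i\cdots$, man $B_i$ strictly prefers $b_i$ to $c_i$. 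By Definition~\ref{def:explicitly}, with witness $(M,w,w')=(B_i,b_i,c_i)$, this shows $R$ explicitly precedes $R'$.

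\noindent\emph{($\Rightarrow$).} Suppose $R$ explicitly precedes $R'$, and fix a witness $(M,w)$: $R$ eliminates $(M,w)$, and $R'$ moves $M$ to a woman $w'$ with $M$ preferring $w$ to $w'$. For $R'$ to move $M$, $M$ must be a man of $R'$, so $M\in\{B_{s},C_{s}:s\in Y\}$; and for $R$ to eliminate $(M,w)$ it must move $w$, so $w$ is a woman of $R$, i.e.\ $w\in\{a_x,b_x:x\in X\}$. Now I would read off, from Definition~\ref{eliminated} and the truncated lists, the pairs that $R$ itself eliminates for each possible $w$: if $w=b_x$ with $x\in X$, the only eliminated pair is $(B_x,b_x)$, so $M=B_x$; if $w=a_x$ with $x\in X$, then $R$ moves $a_x$ from $A_x$ up to $B_x$, so it eliminates precisely the pairs $(M',a_x)$ with $M'$ strictly below $B_x$ and weakly above $A_x$ on $a_x$'s list, and from the displayed list of $a_x$ these $M'$ are all $A$- and $B$-men whose indices lie in $X$ (no $C$-man occurs, since on $a_x$'s list every $C$-man sits above $B_x$). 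In either case the man of every pair $R$ eliminates with woman $w$ is an $A$- or $B$-man indexed in $X$; combined with $M\in\{B_s,C_s:s\in Y\}$ this forces $M=B_x$ for some $x\in X\cap Y$. Hence $X\cap Y\neq\emptyset$ and $R$, $R'$ share the man $B_x$.

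\noindent\emph{Expected main obstacle.} Everything above is immediate from the hypotheses except the bookkeeping in the $(\Rightarrow)$ case: one must argue carefully, from the truncated-preference-list displays and Definition~\ref{eliminated}, exactly which pairs are eliminated by the specific rotation $R$ (as opposed to the other rotation acting on the same woman), and in particular that no pair of the form $(C_x,w)$ is ever eliminated by a $\rho$-rotation. The table of eliminated pairs already separates the stable pairs from the unstable ones, so the residual work is just to attach each eliminated pair to its (unique) eliminating rotation and to verify the $C$-man exclusion; once that is in hand both implications fall out directly.
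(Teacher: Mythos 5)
Your proof is correct and follows essentially the same route as the paper's: the ``if'' direction uses the identical witness $(B_l,b_l)$ eliminated by $R$ with $R'$ moving $B_l$ down to $c_l$, and the ``only if'' direction rests on the same inventory of pairs eliminated by a $\rho$-rotation (only $(B_x,b_x)$ for the $b$-women, and only $A$- and $B$-men indexed in the rotation's own $\rho$-cycle for the $a$-women, with $C$-men excluded because they sit above $B_x$ on $a_x$'s list). The only cosmetic difference is that you argue the forward direction directly from a witness while the paper argues the contrapositive through explicit subcases on $(M,w)$; the content is the same.
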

\begin{proof} Let
$$R = \{ (B_{i},b_i), (A_{i},a_i), (B_{i+1},b_{i+1}),
(A_{i+1},a_{i+1}), \ldots, (B_{i+p-1},b_{i+p-1}), (A_{i+p-1},a_{i+p-1})\}$$
and
$$R' = \{(B_{j},a_j), (C_{j},c_{j}), (B_{\sigma j},a_{\sigma j}),
(C_{\sigma j},c_{\sigma j}), \ldots, (B_{\sigma^{q-1}j},a_{\sigma^{q-1}j}),
(C_{\sigma^{q-1}j},c_{\sigma^{q-1}j})\}.$$

Suppose $R$ and $R'$ do not have a common man, i.e.\
$$\{B_i, B_{i+1}, \ldots, B_{i+p-1}\} \cap \{B_j, B_{\sigma
j}, \ldots, B_{\sigma^{q-1} j}\} = \emptyset.$$
This, in turn,
entails that $\rho_1 \cap \sigma_1 = \emptyset$ where
$\rho_1 = \{i, i+1, \ldots, i+p-1\}$ and
$\sigma_1 = \{j, \sigma j, \ldots, \sigma^{q-1} j\}$.
As has been observed before, the implication of
$R$ explicitly preceding $R'$ is that there exists a man-woman pair
$(M,w)$ with $M$ belonging to rotation $R'$ and $w$ belonging to
rotation $R$ such that $R$ eliminates the pair $(M,w)$ and $R'$
moves $M$ to a woman $w'$ below $w$ on his list. In other words,
there exist
\begin{eqnarray*}
\textrm{a man } M & \in & \{B_j, B_{\sigma j}, \ldots,
B_{\sigma^{q-1} j}, C_j, C_{\sigma j}, \ldots, C_{\sigma^{q-1} j}\}
\textrm{ and} \\
\textrm{a woman } w & \in &
\{b_i, b_{i+1}, \ldots, b_{i+p-1}, a_i, a_{i+1},
\ldots, a_{i+p-1}\}
\end{eqnarray*}
satisfying the above property.
We consider a set of sub-cases.

Case (I) $(M,w) \in \{ (B_x,b_y), (C_x,b_y)\}$. We note that $x \neq
y$ as $\rho_1 \cap \sigma_1 = \emptyset$ and $x\in \sigma_1$ and
$y\in \rho_1$. From the table of eliminated pairs, we note that the
set of eliminated pairs involving woman $b_y$ is $\{(B_y,b_y)\}$.
Hence, $(M,w) \not\in \{(B_x,b_y), (C_x,b_y)\}$.

Case (II) $(M,w) \in  \{(B_x,a_y), (C_x,a_y)\}$.
We note that $a_y$ could have one of two possible preference
lists which is
reflected in the table of eliminated pairs.

The spouses of $a_y$
before and after performing the rotation $R$ are $A_y$ and $B_y$,
respectively. Hence, the set of pairs eliminated by rotation $R$
involving woman $a_y$ could be either $S = \{(A_{y},a_y)\}$ or
$T = \{(A_{\rho^{-1} y},a_y), (B_{\rho^{-1} y},a_y), \ldots, (A_{\rho
y},a_y), (B_{\rho y},a_y), (A_{y},a_y)\}$. Since none of the
eliminated pairs involve a $C$-man, $(M,w) \neq (C_x,a_y)$.

With $(C_x,a_y)$ eliminated from being a possible candidate, we are only
interested in eliminated pairs that involve a $B$-man. We consider
subsets of $S$ and $T$ containing $B$-men which are $\emptyset$ and
$\{(B_{\rho^{-1} y},a_y), (B_{\rho^{-2} y},a_y), \ldots, (B_{\rho
y},a_y)\}$, respectively. We note that every element of
$\rho_1 = \{\rho^{-1} y, \rho^{-2} y, \cdots, \rho y, y\} =
\{i, i+1, \cdots, i+p-1\}$ belongs to the $\rho$-cycle
containing $y$. As $\sigma_1$
and $\rho_1$ do not have an element in common, $x \neq y$ and $x$
does not belong to the $\rho$-cycle containing $y$. Therefore,
$(B_x,a_y) \notin \{(B_{\rho^{-1} y},a_y), (B_{\rho^{-2} y},a_y),
\ldots, (B_{\rho y},a_y)\}$. Hence, $(M,w) \neq (B_x,a_y)$.

From cases (I) and (II), it follows that if $R$ and $R'$ do not share
a common man, then there does not exist a man-woman pair $(M,w)$
such that $R$ eliminates $(M,w)$ and $R'$ moves $M$ to $w'$ below
$w$. Hence, $R$ does not explicitly precede $R'$.

Suppose rotations $R$ and $R'$ have a common man. In other words,
$$\{B_i, B_{i+1}, \ldots, B_{i+p-1}\} \cap \{B_j, B_{\sigma j},
\ldots, B_{\sigma^{q-1} j}\} \neq \emptyset.$$
This, in turn, entails
that $\rho_1 \cap \sigma_1 \neq \emptyset$ where
$\rho_1 = \{i, i+1, \ldots, i+p-1\}$ and
$\sigma_1 = \{j, \sigma j, \ldots, \sigma^{q-1}
j\}$. We also note that a $\rho$-cycle and a $\sigma$-cycle can have
at most one element in common. Therefore, $\{B_i, B_{i+1}, \ldots,
B_{i+p-1}\} \cap \{B_j, B_{\sigma j}, \cdots, B_{\sigma^{q-1} j}\} =
\{B_l\}$, say.

As has been observed before, in order to establish that $R$
explicitly precedes $R'$, it is enough to produce a man-woman pair
$(M,w)$ with $M$ belonging to rotation $R'$ and $w$ belonging to
rotation $R$ such that $R$ eliminates the pair $(M,w)$ and $R'$
moves $M$ to a woman $w'$ below $w$ on his list. In other words, it
is enough to show that there exist a man $M \in \{B_j, B_{\sigma j},
\ldots, B_{\sigma^{q-1} j}, C_j, C_{\sigma j}, \ldots,
C_{\sigma^{q-1} j}\}$ and a woman $w \in \{b_i, b_{i+1}, \ldots,
b_{i+p-1}, a_i, a_{i+1} \ldots,$ $a_{i+p-1}\}$ satisfying the above
property. We show that the pair $(B_l,b_l)$ is the required pair.

Since $B_l$ participates in the rotation $R$, the spouses of $B_l$
before and after the rotation $R$ are $b_l$ and $a_l$, respectively.
Hence, the pair $(B_l,b_l)$ is eliminated by $R$, and $R$ moves $B_l$
to $a_l$, which is below $b_l$. Since $B_l$ belongs to $R'$, the
spouses of $B_l$ before and after performing the rotation $R'$ are
$a_l$ and $c_l$, respectively. Hence, $R'$ moves $B_l$ to $c_l$
which is below $a_l$. This entails that rotation $R'$ moves $B_l$ to
$c_l$ which is below $b_l$ on his preference list. Therefore, $R$
eliminates the pair $(B_l,b_l)$ and $R'$ moves $B_l$ to the woman
$c_l$ who is below $a_l$ on $B_l$'s preference list. This implies
that $R$ explicitly precedes $R'$.

Hence, it follows that if $R$ and $R'$ share a man, then $R$
explicitly precedes $R'$. This proves the lemma.
\end{proof}

From Lemma~\ref{rho-sigma}, it follows that the rotation poset has
an edge from a $\rho$-rotation to a $\sigma$-rotation (i.e.\ $\rho
\leq \sigma$ in the ordering of the rotations) if and only if the
rotations share a common man. Hence, the rotation poset has height
1. In other words, the rotation poset has an edge between two
vertices if and only if the cycles corresponding to the vertices
have an element in common. The edges of the bipartite graph, which
was introduced early on, were defined in a similar fashion. Hence,
the rotation poset when considered as a graph is isomorphic to a
bipartite graph.

 \section{The 1-attribute case}\label{sect:1-attribute}

In this section we concentrate our attention to the $1$-attribute
model.  This case is very special and we establish the following
result.

\setcounter{counter:save}{\value{theorem}}
\setcounter{theorem}{\value{counter:1-attribute}}
\begin{theorem}
$\#SM(1\rm{-attribute})$ is solvable in polynomial time.
\end{theorem}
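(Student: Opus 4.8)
The plan is to reduce the problem to counting downsets of the rotation poset, via Theorem~\ref{thm:sm-downsets}, and then to exploit the fact that a $1$-attribute instance has such a rigid preference structure that its rotation poset is trivial to count.

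First I would record how restrictive $1$-attribute preferences are. In one dimension the dot product $\hat M_i\cdot\bar w_j$ is simply a product of two scalars, and the no-ties assumption forces all the women's position scalars to be pairwise distinct, all the men's position scalars to be pairwise distinct, and every preference scalar to be nonzero. Consequently each man ranks the women either by increasing or by decreasing position scalar, according to the sign of his preference scalar, so the men use only two possible preference lists and these two lists are reverses of one another; the same holds for the women. If one side happens to use only one of its two lists, the stable matching is unique (by a short induction: the woman at the top of the common list must be matched to her favourite man in every stable matching, so delete that pair and recurse on the smaller instance), and the answer is $1$; so assume both lists occur on each side.

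Next I would extract the shape of rotations. In a rotation $\{(M_0,w_0),\dots,(M_{t-1},w_{t-1})\}$ the suitor $w_{i+1}=S_{{\cal M}}(M_i)$ lies strictly below $w_i$ on $M_i$'s preference list; translating into positions, $w_{i+1}$ has smaller position than $w_i$ when $M_i$ prefers larger positions, and larger position when he prefers smaller ones. Since the indices run cyclically, both kinds of men must appear in every rotation, and, symmetrically, both kinds of women. Together with the fact that there are only two men's lists and two women's lists, this severely limits both the rotations themselves and the ``explicitly precedes'' relation of Definition~\ref{def:explicitly} between them.

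The main obstacle --- and the heart of the argument --- is to push this analysis far enough to conclude that the rotation poset of a $1$-attribute instance is a disjoint union of chains (I would in fact expect it to be a single chain, or very nearly so). Granting such a structural statement, the algorithm is routine. Compute the male-optimal matching by the Gale--Shapley algorithm; run the {\bf Find-All-Rotations} algorithm of Section~\ref{sect:combinatorics} to list every rotation (there are at most $\binom{n}{2}$ of them: each involves at least two men, and since no pair is moved by more than one rotation, at most $n-1$ rotations involve any fixed man); order them by the ``explicitly precedes'' relation to obtain the rotation poset; and finally count its downsets. For a disjoint union of chains the number of downsets is the product of the quantities $\ell+1$ over the chain lengths $\ell$, a number with polynomially many digits that is computed in polynomial time; by Theorem~\ref{thm:sm-downsets} this equals the number of stable matchings. (If the poset is in fact a single chain one can instead simply walk from the male-optimal matching to the female-optimal matching one rotation at a time, counting the matchings met along the way.) Everything outside the structural claim about the rotation poset is bookkeeping resting on the machinery already developed in Section~\ref{sect:combinatorics}.
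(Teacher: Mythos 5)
Your setup is correct and matches the paper's: in one dimension each side has only two possible preference lists, each the reverse of the other, and the route to polynomial-time counting is via Theorem~\ref{thm:sm-downsets} together with a structural claim about the rotation poset. However, the proposal has a genuine gap exactly where you flag it: you write ``Granting such a structural statement, the algorithm is routine,'' but that structural statement \emph{is} the theorem. The entire content of the paper's Section~\ref{sect:1-attribute} is devoted to proving that the rotation poset is (isomorphic to) a path, and this takes three nontrivial lemmas plus a final comparability argument: (i) consecutive men in a rotation cannot share a preference list, so every rotation has even size with alternating lists (your cyclic-positions observation only shows that \emph{both} list types occur somewhere in the rotation, which is strictly weaker than alternation and is proved in the paper by a delicate chain of blocking-pair deductions propagated around the cycle); (ii) every rotation in fact has size exactly $2$; (iii) every man and woman participates in at most one rotation; and (iv) after relabeling, the rotations $R_1,\dots,R_k$ satisfy $R_i$ explicitly precedes $R_{i+1}$, via a ``sandwiching'' analysis of how the women of one rotation nest inside the women of another on the men's lists. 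None of these steps is present in your proposal, and none is routine --- each rests on repeated use of the stability of both the male-optimal and female-optimal matchings to pin down relative positions in the two reversed lists.

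Two smaller remarks. First, your fallback target of ``a disjoint union of chains'' would indeed suffice for counting downsets in polynomial time, so if you could prove that weaker statement you would still get the theorem; but you have not proved it either, and the paper's argument actually yields the stronger single-chain conclusion (so the count is simply $k+1$ where $k$ is the number of rotations). Second, the algorithmic bookkeeping you describe (Gale--Shapley, \textbf{Find-All-Rotations}, ordering by the explicitly-precedes relation) is fine and agrees with the paper. The missing piece is solely the structural analysis of rotations in the $1$-attribute model, which is the heart of the proof rather than a detail to be granted.
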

\setcounter{theorem}{\value{counter:save}}

Theorem~\ref{thm:1-attribute} is obtained as a corollary of the
following theorem.

\setcounter{counter:1-attr-rot}{\value{theorem}}
\begin{theorem}\label{thm:1-attr-rot}
 In the $1$-attribute model, the rotation poset of a stable matching instance
is (isomorphic to) a path.
\end{theorem}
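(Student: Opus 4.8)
The plan is to exploit the extremely rigid structure of preference lists in the $1$-attribute model and then run, in miniature, the analysis of Section~\ref{sect:k-attribute}.

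First I would record the shape of the lists. In dimension one a dot product is just a product of scalars, so $M_i$ ranks $w_j$ above $w_\ell$ exactly when $\hat M_i\bar w_j>\hat M_i\bar w_\ell$; since the lists have no ties we must have $\hat M_i\neq 0$ and the $\bar w_j$ pairwise distinct, and then $M_i$'s list is the women in increasing order of $\bar w_j$ (when $\hat M_i<0$) or in decreasing order (when $\hat M_i>0$). Relabelling the women so that $\bar w_1<\cdots<\bar w_n$, every man therefore uses one of the two lists $w_1w_2\cdots w_n$ and $w_nw_{n-1}\cdots w_1$; relabelling the men so that $\bar M_1<\cdots<\bar M_n$, every woman likewise uses $M_1M_2\cdots M_n$ or its reverse. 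Call a man (woman) \emph{positive} or \emph{negative} according to which list he (she) uses; the decisive point is that all positive men have \emph{the same} preference list, and similarly for negative men, positive women and negative women.

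Next I would compute the male- and female-optimal matchings with Gale--Shapley, exactly as in Section~\ref{sect:mfo}. Because each class of proposers shares a single list, the positive men's proposals march down from $w_n$ while the negative men's march up from $w_1$ (and dually when the women propose), so the two extremal matchings can be written down explicitly from the positive/negative partition and the positions. I would then pass to the truncated preference lists and apply the \textbf{Find-All-Rotations} algorithm (Algorithm~\ref{FR}). The heart of the argument is to show that the rotation poset is a \emph{chain}; since the Hasse diagram of a chain is a path, this proves the theorem (and Theorem~\ref{thm:1-attribute} follows at once, since counting the downsets of a chain is trivial). By the downset correspondence of Theorem~\ref{thm:sm-downsets} it is enough to show that every stable matching except the female-optimal one has exactly one exposed rotation, for then the run of \textbf{Find-All-Rotations} is forced and traces a single chain from male- to female-optimal, visiting every stable matching. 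To rule out two simultaneously exposed rotations $R\neq R'$, one uses that the men moved by a rotation go from their current partner to the first woman below her who still prefers them: since positive (resp.\ negative) men all share one list, the men of $R$ and of $R'$ would be chasing women out of the \emph{same} two sequences, and tracking the pairs eliminated by $R$ and by $R'$ (via Definitions~\ref{eliminated} and~\ref{def:explicitly}, exactly in the style of Lemmas~\ref{rho-minimal}--\ref{rho-sigma}) produces a man--woman pair blocking the matching reached after one of the two rotations, contradicting its stability.

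I expect the main obstacle to be precisely this last step: carrying out the elimination-bookkeeping when every man of a given type has an identical list, and deducing that two distinct rotations can never be exposed at the same stable matching (equivalently, that no two rotations of the instance are incomparable). Everything before it --- the two possible lists per side and the two extremal matchings --- is routine given the one-dimensional setup, and it is exactly the coincidence of preference lists within a type that forces the chain structure: it is what prevents a $1$-attribute instance from containing the ``disjoint swaps'' that a general instance can have.
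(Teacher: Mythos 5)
Your setup is sound and matches the paper's starting point: in one dimension each side has only two possible preference lists, one the reverse of the other, and your reduction of the theorem to ``no stable matching has two exposed rotations'' (equivalently, any two rotations are comparable) is a correct reformulation of what must be shown. But the proposal stops exactly where the proof begins. The entire content of the paper's argument lies in the step you defer as ``the main obstacle'': it first proves three structural lemmas --- every rotation has even size with the preference lists of consecutive participants alternating between the two types, every rotation in fact has size exactly $2$ (Lemma~\ref{lem:rotations-have-size-2}), and every man and woman participates in at most one rotation (Lemma~\ref{lem:at-most-1-rotation}). Each of these is a delicate chain of blocking-pair deductions in which the constraint ``lists within a type coincide, lists across types are reversed'' is propagated around the cycle of a hypothetical larger rotation until it contradicts the required direction of improvement for the women. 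Only after these lemmas does comparability follow, and it does so not by the elimination bookkeeping you gesture at but by showing that the forced preference lists are \emph{nested}: the two women of one rotation are sandwiched between the two women of the next in every man's list (and dually for the men in the women's lists), which yields an explicit eliminated pair $(B_{2i+1},b_{2i-1})$ witnessing that $R_i$ explicitly precedes $R_{i+1}$ for consecutive rotations. None of this is routine, and none of it is present in your proposal.

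A further caution about the mechanism you do sketch: within your reductio, if $R$ and $R'$ are both genuinely exposed at a stable matching $\mathcal{M}$, then applying either one yields a stable matching, so any ``blocking pair after one of the two rotations'' can only arise as the terminus of a long forced chain of preference-list constraints; deriving that chain is precisely the work the paper's three lemmas do, and without the intermediate facts that rotations have size two and are vertex-disjoint there is no reason the bookkeeping closes up. As it stands the proposal is a correct plan with the proof missing.
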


Theorem~\ref{thm:1-attribute} follows as it is straightforward to
count the downsets of a path.
We establish Theorem~\ref{thm:1-attr-rot} through a series of lemmas.

First, we observe that in the $1$-attribute model the men have only two
possible preference lists for the women. The two
preference lists are such that one is reverse of the other.  Similarly,
the women have only two possible preference lists, one being the reverse
of the other.

We start by establishing that every rotation in the 1-attribute
model is of even size. In other words, every rotation involves an
even number of men.

\begin{lemma}
In the $1$-attribute model, every rotation is of even size,
and the preference lists of the men (and, similarly, the women)
involved in the rotation alternate.
\end{lemma}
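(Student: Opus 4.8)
The plan is to exploit the observation just recorded: in the $1$-attribute model the only freedom a man has is the sign of his (scalar) preference ``vector'', so each man's list is either a fixed list $L$ or its reverse $L^{R}$, and likewise each woman's list is $Q$ or $Q^{R}$, where $L$ orders the women by decreasing position scalar and $Q$ orders the men by decreasing position scalar. Call a person \emph{type $+$} if the preference scalar is positive (so the list is $L$, resp.\ $Q$) and \emph{type $-$} otherwise. The two conclusions of the lemma are really one: I will show that around any rotation consecutive men have opposite types, and consecutive women have opposite types, and then note that a proper $2$-colouring of a cycle forces the cycle to have even length (and the colours to alternate).

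Let $R=\{(M_0,w_0),\dots,(M_{k-1},w_{k-1})\}$ be a rotation exposed in a stable matching $\mathcal M$, with $S_{\mathcal M}(M_i)=w_{i+1}$ (indices mod $k$), and let $\mathcal M'$ be the stable matching obtained by applying $R$. For every $i$ we have: (a) $M_i$ prefers $w_i$ to $w_{i+1}$; and (b) $w_{i+1}$ prefers $M_i$ to $M_{i+1}$ (this is clause (ii) of the definition of suitor, and it records that the women improve under a rotation). Since $L$ ranks women by decreasing position, (a) says $M_i$ is type $+$ precisely when $\bar w_i>\bar w_{i+1}$ and type $-$ precisely when $\bar w_i<\bar w_{i+1}$; hence the men's types alternate around $R$ if and only if the cyclic sequence $\bar w_0,\dots,\bar w_{k-1}$ is zig-zag, i.e.\ every $\bar w_i$ is a strict local maximum or a strict local minimum. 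Dually, by (b), the women's types alternate iff $\bar M_0,\dots,\bar M_{k-1}$ is zig-zag.

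So it suffices to forbid a monotone run of length two in $\bar w_0,\dots,\bar w_{k-1}$ (the case of the $\bar M$'s being handled in the same way with the roles of men and women interchanged, using (b) and the fact that $w_i=S_{\mathcal M}(M_{i-1})$). Suppose for contradiction that $M_i$ and $M_{i+1}$ are both type $+$, so that (a) gives $\bar w_i>\bar w_{i+1}>\bar w_{i+2}$. I would now bring in the part of ``$w_{i+1}=S_{\mathcal M}(M_i)$'' not yet used --- that $w_{i+1}$ is the \emph{first} woman below $w_i$ on $M_i$'s (decreasing-position) list who prefers $M_i$ to her current partner, so that every woman $v$ with $\bar w_{i+1}<\bar v<\bar w_i$ prefers her $\mathcal M$-partner to $M_i$ --- together with the stability of $\mathcal M$ and of $\mathcal M'$ and the analogous data for $M_{i+1}$ and $w_{i+2}$, to exhibit a blocking pair: either a woman strictly between $w_{i+1}$ and $w_i$ in position who blocks $\mathcal M$ together with $M_i$, or, chasing the $w$-values around the cycle, the impossible chain $\bar w_0>\bar w_1>\cdots>\bar w_0$. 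Granting this, no two consecutive men of $R$ have the same type, the types read $+,-,+,-,\dots$ around the cycle $M_0M_1\cdots M_{k-1}M_0$, so $k$ is even, and the companion argument gives the alternation of the women.

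The step I expect to be the real obstacle is the last one: extracting a contradiction from a monotone run of two men of the same type. A purely ``global'' count is useless --- $\sum_i(\bar w_i-\bar w_{i+1})=0$ telescopes trivially and a cyclic sequence of distinct reals carries no parity constraint of its own --- so the argument must genuinely use the \emph{minimality} in the definition of suitor ($w_{i+1}$ is the first qualifying woman, not merely some less-preferred woman who would accept $M_i$) and the stability of \emph{both} $\mathcal M$ and $\mathcal M'$. I would organise that part as a short finite case analysis over the (at most two) possible preference lists of the women $w_i,w_{i+1},w_{i+2}$ and of the men $M_{i-1},M_{i+2}$ that bracket the putative run, checking in each case that one of the non-matched pairs among $\{M_i,M_{i+1}\}\times\{w_i,w_{i+1},w_{i+2}\}$ is blocking in $\mathcal M$ or in $\mathcal M'$.
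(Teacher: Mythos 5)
Your setup is sound and matches the paper's strategy in spirit: encode each person's list by a sign, observe that $M_i$'s type is determined by the sign of $\bar w_i-\bar w_{i+1}$ (so alternation of types is equivalent to the cyclic sequence $\bar w_0,\dots,\bar w_{k-1}$ being zig-zag, and a uniform type gives the impossible monotone cycle $\bar w_0>\bar w_1>\cdots>\bar w_0$), and aim to rule out two consecutive men of the same type. But the proof stops exactly where the work begins. You write ``Granting this'' and describe what you \emph{would} do; the derivation of a contradiction from a single same-type adjacency --- which you yourself identify as ``the real obstacle'' --- is never carried out. That step is the entire content of the lemma, so as written this is a gap, not a proof.

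Moreover, the local case analysis you propose for closing it --- checking that some pair in $\{M_i,M_{i+1}\}\times\{w_i,w_{i+1},w_{i+2}\}$ blocks $\mathcal M$ or $\mathcal M'$ --- cannot succeed, because no contradiction is visible in such a bounded window. The paper's argument shows why: from $M_i,M_{i+1}$ sharing a list one can only \emph{propagate}, not immediately contradict. Stability of $\mathcal M$ forces $w_{i+1}$ to rank $M_{i+1}$ above $M_i$ wait--- more precisely, it forces $B_{i+1}$ to sit below $B_i$ on $b_i$'s list, whence $b_i$ and $b_{i+1}$ share a list; then stability of $\mathcal M'$ forces $M_{i-1}$ to share the list of $M_i$ and $M_{i+1}$; iterating alternately with the stability of the two matchings drags the ``same list'' property all the way around the rotation, to \emph{all} men and \emph{all} women involved. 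Only then does the global contradiction appear (your monotone cycle, or equivalently the paper's observation that the man holding the least-preferred woman on the common list has nowhere to go down). Every local configuration along the way is consistent with stability, so no finite window of three women and two men will exhibit a blocking pair. To repair the proof you need to make this propagation explicit: state which pair would block which matching at each step, show the neighbouring man or woman is thereby forced onto the same list, and then invoke your monotone-cycle contradiction once uniformity of type has been established around the whole rotation.
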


\begin{proof} We establish the statement by showing that two
consecutive men in a rotation cannot both have the same preference
list. Then, since there are only two possible preference lists and the
preference lists of any two consecutive men have to be different, we
conclude that the number of men
involved in the rotation has to be even.

Suppose we have a rotation $R$ of size $k$.  Without loss of generality
(by relabeling), we assume this rotation is
$(B_0, b_0), (B_1, b_1), \ldots, (B_{k-1}, b_{k-1})$. Each man $B_i$ is
married to woman $b_i$ (in $M_1$) before the rotation, and to woman
$b_{i+1\pmod k}$ (in $M_2$) after the rotation as shown in the table below.
\begin{center}
\begin{tabular}{c c c}
\hline \hline

Men &  (before $R$) &  (after $R$) \\
    & $M_1$  &  $M_2$  \\ [0.5ex]

\hline
$B_0$ &$b_0$&$b_1$\\
$B_1$ &$b_1$&$b_2$\\
\vdots \\
$B_i$ &$b_i$&$b_{i+1}$\\
$B_{i+1}$ &$b_{i+1}$&$b_{i+2}$\\
\vdots\\
$B_{k-2}$&$b_{k-2}$&$b_{k-1}$\\
$B_{k-1}$&$b_{k-1}$&$b_0$\\
\hline
\end{tabular}
\end{center}

Before we proceed, we note that all subscripts that follow
are computed $\!\!\!\mod k$.

To establish our result it is enough to show that two consecutive men
in a rotation cannot both have the same preference lists.

So, suppose to the contrary that men $B_i$ and $B_{i+1}$ have the same
preference list. Recalling that the rotation is female-improving,
the preference list of $B_i$ and $B_{i+1}$
are shown below.

\begin{center}
\begin{tabular}{r c c c c}

\hline
$B_i \;\mid$ &$\cdots$& $b_i$&$\cdots$&$b_{i+1}$\\
$B_{i+1}\; \mid$ &$\cdots$& $b_{i+1}$&$\cdots$&$b_{i+2}$\\

\hline
\end{tabular}
\end{center}

Since $B_i$ and $B_{i+1}$ have the same preference lists, $b_i$
comes ahead of $b_{i+1}$ on $B_{i+1}$'s preference list as shown
below.

\begin{center}
\begin{tabular}{r c c c c c c}

\hline
$B_i\; \mid$ &&&$\cdots$& $b_i$&$\cdots$&$b_{i+1}$\\
$B_{i+1}\; \mid$ &$\cdots$& $b_{i}$ &$\cdots$& $b_{i+1}$&$\cdots$&$b_{i+2}$\\

\hline
\end{tabular}
\end{center}

As
$M_1$ is a stable matching, the pair $(B_{i+1},b_i)$ must not form
a blocking pair to the stable pairs $(B_i,b_i)$ and
$(B_{i+1},b_{i+1})$ in $M_1$. Since $b_i$ comes ahead of $b_{i+1}$ on
$B_{i+1}$'s list, $B_{i+1}$ must appear {\em after} $B_i$ on $b_i$'s
preference list to ensure that $(B_{i+1},b_i)$ does not form such a
blocking pair. Therefore, the preference lists for man $B_{i+1}$ and woman
$b_{i}$ are as follows.

\begin{center}
\begin{tabular}{r c c c c c c c c c c c c c c}

\hline
$B_{i+1}\; \mid$&$\cdots$&$b_{i}$&$\cdots$& $b_{i+1}$&$\cdots$&$b_{i+2}$&\hspace{0.2in}$\mid$&$b_i\; \mid$&$\cdots$&$B_{i-1}$&$\cdots$&$B_i$&$\cdots$&$B_{i+1}$\\

\hline
\end{tabular}
\end{center}

Comparing the preference lists of women $b_{i}$ and $b_{i+1}$,

\begin{center}
\begin{tabular}{r c c c c c c}

\hline
$b_i \; \mid$ &$\cdots$&$B_{i-1}$&$\cdots$& $B_i$&$\cdots$&$B_{i+1}$\\
$b_{i+1}\; \mid$ &$\cdots$& $B_{i}$ &$\cdots$& $B_{i+1}$&$\cdots$&\\

\hline
\end{tabular}
\end{center}

\noindent we note that they are the same. Hence, woman $b_{i+1}$
also has $B_{i-1}$ ahead of $B_i$ on her list.

\begin{center}
\begin{tabular}{r c c c c c c c c}
\hline
$b_i\; \mid$ &&&$\cdots$&$B_{i-1}$&$\cdots$& $B_i$&$\cdots$&$B_{i+1}$\\
$b_{i+1}\; \mid$&$\cdots$&$B_{i-1}$ &$\cdots$& $B_{i}$ &$\cdots$& $B_{i+1}$&$\cdots$&\\
\hline
\end{tabular}
\end{center}

Because $M_2$ is also a stable matching and $B_{i-1}$ is ahead of
$B_{i}$ on $b_{i+1}$'s
preference list, $b_{i+1}$ must appear {\em after} $b_i$ on $B_{i-1}$'s
preference list to prevent $(B_{i-1},b_{i+1})$ from being a
blocking pair in $M_2$.
Comparing the preference lists of $B_{i-1}, B_i$, and
$B_{i+1}$, we note that

\begin{center}
\begin{tabular}{r c c c c c c c c}

\hline
$B_{i-1}\; \mid$ &&&$\cdots$& $b_{i-1}$ &$\cdots$& $b_{i}$&$\cdots$&$b_{i+1}$\\
$B_i\; \mid$ &&&$\cdots$& $b_i$&$\cdots$&$b_{i+1}$&&\\
$B_{i+1}\; \mid$ &$\cdots$& $b_{i}$ &$\cdots$& $b_{i+1}$&$\cdots$&$b_{i+2}$&&\\

\hline
\end{tabular}
\end{center}

\noindent they are all the same.

We have shown that man $B_{i-1}$ has the same preference list as men
$B_i$ and $B_{i+1}$, and that women $b_{i}$ and $b_{i+1}$ both have
the same preference list. We can now repeat the argument with men
$B_{i-1}$ and $B_i$ to conclude that men $B_{i-2}$ and $B_{i-1}$
have the same preference list and women $b_{i-1}$ and $b_{i}$ have
the same preference list and so on. In this fashion, we can show
that all men involved in the rotation have the same preference list
and so do all the women involved.

We know that the men involved in a rotation
get less happy with their partners as a result of applying the
rotation.
Since the men all have the same
preference lists, the relative order of women $b_1$ through $b_k$
should be the same. Suppose the order is $b_{i_1}, b_{i_2},
\cdots,b_{i_k}$. After the rotation, the man married to $b_{i_1}$
would go down his list and the man married to $b_{i_k}$ would go up
his list which cannot both happen at the same time. Hence, we cannot
have a rotation if some two consecutive men on the rotation have the
same preference lists.

Therefore, the preference lists of the men involved in the rotation
have to alternate, forcing the rotation to be of even size.
\end{proof}

\begin{lemma}\label{lem:rotations-have-size-2}
In the $1$-attribute model, every rotation is of size $2$.
\end{lemma}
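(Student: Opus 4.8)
The plan is to build on the preceding lemma, which tells us that a rotation $R$ has even size $k=2N$ and that, after relabelling it as $(B_0,b_0),\dots,(B_{k-1},b_{k-1})$, the men's preference lists alternate between the two lists available in the $1$-attribute model (and likewise the women's); in particular all even-indexed men $B_0,B_2,\dots$ share one list and all odd-indexed men its reverse, and similarly for the women. Let $r_i$ be the (scalar) attribute of $b_i$ and $p_i$ that of $B_i$; these are pairwise distinct. Set $s_i=\epsilon r_i$ and $t_i=\eta p_i$, choosing $\epsilon,\eta\in\{+1,-1\}$ so that, read through $s$ and $t$, every even-indexed man prefers the woman with the \emph{larger} $s$-value and every even-indexed woman the man with the \emph{larger} $t$-value (hence every odd-indexed man prefers smaller $s$, every odd-indexed woman smaller $t$). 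Let $M_1$ be a stable matching in which $R$ is exposed and $M_2$ the stable matching obtained by applying $R$, so in $M_1$ man $B_i$ is matched to $b_i$ and in $M_2$ to $b_{i+1}$ (indices mod $k$); equivalently $b_j$ is with $B_j$ in $M_1$ and with $B_{j-1}$ in $M_2$. We must show $N=1$.

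First I would read constraints off stability of $M_1$. For distinct even indices $i,j$ the pair $(B_i,b_j)$ is not blocking, which --- unpacking the preferences above --- says exactly that ``$s_j>s_i$ and $t_i>t_j$'' cannot both hold; applying this to both orderings of $\{i,j\}$ yields $s_i<s_j\iff t_i<t_j$. So the linear order of the $s$-values on the even indices coincides with that of the $t$-values on the even indices. The same argument on the odd indices (where the relevant parties prefer the smaller value) gives that the $s$- and $t$-orders agree on the odd indices too.

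Next I would mine stability of $M_2$. For an even index $2t$ and an odd index $2s+1$ with $s\neq t$, the pair $(B_{2t},b_{2s+1})$ is not blocking; in $M_2$ the man is matched to $b_{2t+1}$ and the woman to $B_{2s}$, so non-blocking says ``$s_{2s+1}>s_{2t+1}$ and $t_{2s}<t_{2t}$'' cannot both hold, and running both orderings gives $s_{2s+1}>s_{2t+1}\iff t_{2s}<t_{2t}$: the $s$-order on the odd indices is the \emph{reverse} of the $t$-order on the even indices, matching odd index $2s+1$ to even index $2s$. Symmetrically, the non-blocking pairs of the form (odd man, even woman) in $M_2$ give that the $s$-order on the even indices is the reverse of the $t$-order on the odd indices, now matching even index $2u$ to odd index $2u-1$ (indices mod $k$).

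Finally I would encode the four statements as identities between rank functions $\{0,\dots,N-1\}\to\{1,\dots,N\}$. Let $\alpha_j,\beta_j$ be the ranks of $s_{2j}$ and $s_{2j+1}$ among the even- and odd-indexed $s$-values, and $\gamma_j,\delta_j$ the analogous ranks for $t$. The two $M_1$-facts give $\gamma=\alpha$ and $\delta=\beta$; the first $M_2$-fact gives $\gamma_j=N+1-\beta_j$; and the second gives $\delta_j=N+1-\alpha_{(j+1)\bmod N}$, the cyclic shift arising because the correspondence $2u\leftrightarrow 2u-1$ sends $u=0$ to the odd index $2N-1$. Substituting, $\beta_j=N+1-\alpha_j$ and also $\beta_j=N+1-\alpha_{(j+1)\bmod N}$, so $\alpha_j=\alpha_{(j+1)\bmod N}$ for every $j$; since $\alpha$ is a bijection this forces $N=1$, i.e.\ $k=2$. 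The main obstacle I anticipate is realizing that stability of $M_1$ alone is insufficient and that one must also extract information from $M_2$, together with getting the index bookkeeping exactly right there --- it is precisely the cyclic shift in the last relation that produces the contradiction, whereas without it the four relations are mutually consistent (as they indeed are when $N=1$).
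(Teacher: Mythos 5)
Your proof is correct, and it takes a genuinely different route from the paper's. The paper argues by extremality: it takes the rotation-woman $b_{i_1}$ that comes first among the rotation women on the common list of the even-indexed men, and then chases blocking-pair constraints through $B_{i_1-2},B_{i_1-1},B_{i_1},B_{i_1+1}$ and $b_{i_1},b_{i_1+1},b_{i_1+2}$ (alternating between stability of $M_1$ and of $M_2$ and using that the two lists are mutual reverses) until it forces $b_{i_1+2}$ ahead of $b_{i_1}$, contradicting the choice of $i_1$; the hypothesis $k>1$ enters because the chase needs three distinct even-indexed men. You instead extract four global order relations --- the $s$- and $t$-orders agree on the even indices and on the odd indices (stability of $M_1$), and each is the reverse of the other across the even/odd split with a one-step index offset (stability of $M_2$) --- and compose them into the shift identity $\alpha_j=\alpha_{(j+1)\bmod N}$ for a bijection $\alpha$, forcing $N=1$. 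Both proofs use the same raw ingredients (alternation from the preceding lemma, stability before and after the rotation, and the two-reversed-lists structure), but your rank-function formulation isolates the cyclic offset as the precise source of the contradiction and replaces the paper's case-by-case list diagrams with a short algebraic computation. One small slip: in your first $M_2$ constraint the blocking condition should read ``$s_{2s+1}>s_{2t+1}$ and $t_{2t}<t_{2s}$'', since the odd-indexed woman $b_{2s+1}$ prefers $B_{2t}$ to her $M_2$-partner $B_{2s}$ exactly when $t_{2t}<t_{2s}$; the biconditional you then derive, $s_{2s+1}>s_{2t+1}\iff t_{2s}<t_{2t}$, is nevertheless the correct consequence of non-blocking applied to both orderings, so nothing downstream is affected.
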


\begin{proof} Suppose we have a rotation of size $2k$ involving
men $B_0,B_1,...,B_{2k-1}$
and women $b_0,b_1,...,b_{2k-1}$ where $k >
1$. Every man $B_i$ is married to woman $b_i$ before the rotation
and to woman $b_{i+1\pmod k}$ after the rotation as shown in the
table below.

\begin{center}
\begin{tabular}{c c c}
\hline \hline

Men & Before & After \\ [0.5ex]

\hline
$B_0$ &$b_0$&$b_1$\\
$B_1$ &$b_1$&$b_2$\\
\vdots \\
$B_i$ &$b_i$&$b_{i+1}$\\
$B_{i+1}$ &$b_{i+1}$&$b_{i+2}$\\
\vdots\\
$B_{2k-2}$&$b_{2k-2}$&$b_{2k-1}$\\
$B_{2k-1}$&$b_{2k-1}$&$b_{0}$\\
\hline
\end{tabular}
\end{center}

Since men $B_i$ with $i$ even
have the same preference lists, the relative
order of women on their lists  is the
same. Suppose the order is $b_{i_1}, b_{i_2}, \cdots,b_{i_{2k}}$.
Since,
for $0\leq i \leq k-1$, $b_{2i}$ is ahead of $b_{2i+1}$,
it
is clear that
$i_1$ is even.

Consider men $B_{i_1 -2}$, $B_{i_1 - 1}$, and $B_{i_1}$. Note that we
are implicitly using the assumption that $k > 1$ (otherwise there are not three distinct men).
Their preference
lists appear as follows.

\begin{center}
\begin{tabular}{r c c c c c c c c}

\hline
$B_{i_1-2}\; \mid$ &$\cdots$& $b_{i_1}$ &$\cdots$& $b_{i_1-2}$&$\cdots$&$b_{i_1-1}$&&\\
$B_{i_1-1}\; \mid$ &&&$\cdots$& $b_{i_1-1}$&$\cdots$&$b_{i_1}$&&\\
$B_{i_1}\; \mid$ &&&$\cdots$& $b_{i_1}$&$\cdots$&$b_{i_1+1}$&&\\

\hline
\end{tabular}
\end{center}

As all pairs $(B_j,b_j)$
are part of a stable matching, the pair $(B_{i_1-2}, b_{i_1})$ should
not form a blocking pair to the stable pairs $(B_{i_1-2}, b_{i_1-2})$
and $(B_{i_1}, b_{i_1})$. Since $b_{i_1}$ comes ahead of $b_{i_1-2}$
on $B_{i_1-2}$'s list, $B_{i_1-2}$ should appear after $B_{i_1}$ on
$b_{i_1}$'s preference list to ensure that $(B_{i_1-2}, b_{i_1})$
does not form a blocking pair. The preference lists for women
$b_{i_1}$ and woman $b_{i_1+1}$ are as follows.

\begin{center}
\begin{tabular}{r c c c c c c c c c c}
\hline
$b_{i_1}\; \mid$&&&$\cdots$&$B_{i_1-1}$ &$\cdots$& $B_{i_1}$ &$\cdots$& $B_{i_1-2}$&$\cdots$&\\
$b_{i_1+1}\; \mid$&$\cdots$&$B_{i_1-2}$ &$\cdots$& $B_{i_1}$ &$\cdots$& $B_{i_1+1}$&$\cdots$&&&\\
\hline
\end{tabular}
\end{center}

Note that $B_{i_1-2}$ appears ahead of $B_{i_1}$ on $b_{i_1+1}$'s
list as the lists of $b_{i_1}$ and $b_{i_1+1}$ are reverses of each
other.

Again,
all pairs $(B_j,b_{j+1})$
are part of a stable
matching and $(B_{i_1-2}, b_{i_1+1})$ could form a blocking pair to
the stable pairs $(B_{i_1-2}, b_{i_1-1})$ and $(B_{i_1}, b_{i_1+1})$.
Since $B_{i_1-2}$ is ahead of $B_{i_1}$ on $b_{i_1+1}$'s preference
list, $b_{i_1+1}$ has to appear after $b_{i_1-1}$ on $B_{i_1-2}$'s
preference list to prevent $(B_{i_1-2}, b_{i_1+1})$ from becoming a
blocking pair. The preference lists of $B_{i_1-2}$, $B_{i_1-1}$ and
$B_{i_1}$ are as follows.

\begin{center}
\begin{tabular}{r c c c c c c c c c c}
\hline
$B_{i_1-2}\; \mid$&$\cdots$&$b_{i_1}$ &$\cdots$& $b_{i_1-2}$ &$\cdots$& $b_{i_1-1}$&$\cdots$&$b_{i_1+1}$&\\
$B_{i_1-1}\; \mid$&$\cdots$&$b_{i_1+1}$ &$\cdots$& $b_{i_1-1}$ &$\cdots$& $b_{i_1}$&$\cdots$&&&\\
$B_{i_1}\; \mid$&&&$\cdots$& $b_{i_1}$ &$\cdots$& $b_{i_1+1}$&$\cdots$&&&\\
\hline
\end{tabular}
\end{center}

Note that $b_{i_1+1}$ appears ahead of $b_{i_1-1}$ on $B_{i_1-1}$'s
list as the lists of $B_{i_1-2}$ and $B_{i_1-1}$ are reverses of
each other.

The pair $(B_{i_1-1}, b_{i_1+1})$
should not form a blocking pair to the stable pairs
$(B_{i_1-1}, b_{i_1-1})$ and $(B_{i_1+1}, b_{i_1+1})$. Since
$b_{i_1+1}$ comes ahead of $b_{i_1-1}$ on $B_{i_1-1}$'s list,
$B_{i_1-1}$ should appear after $B_{i_1+1}$ on $b_{i_1+1}$'s
preference list to ensure that $(B_{i_1-1}, b_{i_1+1})$ does not form
a blocking pair. The preference lists for women $b_{i_1}$ and woman
$b_{i_1+1}$ are as follows.

\begin{center}
\begin{tabular}{r c c c c c c c c c c}
\hline
$b_{i_1}\; \mid$&&&$\cdots$&$B_{i_1-1}$ &$\cdots$& $B_{i_1}$ &$\cdots$& $B_{i_1-2}$&$\cdots$&\\
$b_{i_1+1}\; \mid$&$\cdots$&$B_{i_1-2}$ &$\cdots$& $B_{i_1}$ &$\cdots$& $B_{i_1+1}$&$\cdots$&$B_{i_1-1}$&&\\
$b_{i_1+2}\; \mid$&$\cdots$&$B_{i_1-1}$ &$\cdots$&$B_{i_1+1}$ &$\cdots$& $B_{i_1+2}$ &$\cdots$&&&\\
\hline
\end{tabular}
\end{center}

Note that $B_{i_1-1}$ appears ahead of $B_{i_1+1}$ on $b_{i_1+2}$'s
list as the lists of $b_{i_1+1}$ and $b_{i_1+2}$ are reverses of
each other.

Similarly, $(B_{i_1-1}, b_{i_1+2})$ must not be a blocking pair to
the stable pairs $(B_{i_1-1}, b_{i_1})$ and $(B_{i_1+1}, b_{i_1+2})$.
Since $B_{i_1-1}$ is ahead of $B_{i_1+1}$ on $b_{i_1+2}$'s
preference list, $b_{i_1+2}$ has to appear after $b_{i_1}$ on
$B_{i_1-1}$'s preference list to prevent $(B_{i_1-1}, b_{i_1+2})$
from becoming a blocking pair. The preference lists of $B_{i_1-2}$,
$B_{i_1-1}$ and $B_{i_1}$ are as follows.

\begin{center}
\begin{tabular}{r c c c c c c c c c c}
\hline
$B_{i_1-2}\; \mid$&$\cdots$&$b_{i_1}$ &$\cdots$& $b_{i_1-2}$ &$\cdots$& $b_{i_1-1}$&$\cdots$&$b_{i_1+1}$&\\
$B_{i_1-1}\; \mid$&$\cdots$&$b_{i_1+1}$ &$\cdots$& $b_{i_1-1}$ &$\cdots$& $b_{i_1}$&$\cdots$&$b_{i_1+2}$&$\cdots$&\\
$B_{i_1}\; \mid$&$\cdots$&$b_{i_1+2}$&$\cdots$& $b_{i_1}$ &$\cdots$& $b_{i_1+1}$&$\cdots$&&&\\
\hline
\end{tabular}
\end{center}

Note that $b_{i_1+2}$ appears ahead of $b_{i_1}$ on $B_{i_1}$'s list
as the lists of $B_{i_1-1}$ and $B_{i_1}$ are reverses of each
other. This contradicts the relative order of the women on
lists of men $B_j$ with $j$ even since $b_{i_1}$ should be first.

Hence, the size of any rotation is 2.
\end{proof}

\begin{lemma}\label{lem:at-most-1-rotation}
In the $1$-attribute model, every man (and woman) participates
in at most one rotation.
\end{lemma}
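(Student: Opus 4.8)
The plan is to combine the two lemmas just proved (every rotation in the $1$-attribute model has size exactly $2$, and the alternation structure) with the $1$-attribute structure itself, namely that each person has only two possible preference lists, each the reverse of the other. First I would set up notation: fix a stable matching instance in the $1$-attribute model, and note that a size-$2$ rotation exposed in some stable matching $\cal M$ consists of two pairs $(M,w),(M',w')$ where $M$ is married to $w$ and $M'$ to $w'$ before the rotation, and afterwards $M$ is married to $w'$ and $M'$ to $w$. By the alternation lemma, $M$ and $M'$ have opposite preference lists, and $w,w'$ have opposite preference lists; moreover the rotation is female-improving, so $M$ prefers $w$ to $w'$ while $M'$ prefers $w'$ to $w$, and symmetrically $w$ prefers $M'$ to $M$, $w'$ prefers $M$ to $M'$. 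The key observation I want to extract is that a size-$2$ rotation is completely determined by the unordered pair of men $\{M,M'\}$ it involves (equivalently by the pair of women), because given the two men, their preference lists are fixed, hence the identity of the women they swap between is forced by which women they currently hold.

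Next I would argue that if a man $M$ participates in two distinct rotations $R_1$ and $R_2$, these must occur in sequence along some path from the male-optimal to the female-optimal matching (using Theorem~\ref{thm:gusfield}), and in each rotation $M$ moves strictly down his (fixed) preference list. So after $R_1$, man $M$ has moved from, say, $w_1$ to $w_1'$; for him to participate in $R_2$ later, $R_2$ must move him from his then-current partner to someone strictly lower. The crux is to show the partner $w_1'$ he holds after $R_1$ cannot be the partner he holds before $R_2$ in a way consistent with stability: the woman $w_1'$ is a stable partner of $M$ (she's his partner in the matching right after $R_1$), and similarly $w_1$ is a stable partner, with $w_1$ above $w_1'$ on $M$'s list. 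In a rotation, the woman $M$ is matched to \emph{before} the rotation is the one he likes better among his two rotation-partners; so in $R_2$, $M$ is matched before the rotation to some $w_2$ with $w_2$ above his post-$R_2$ partner $w_2'$. I would then show that the preference-list structure forces $w_2 = w_1'$ and that the second rotation's woman-side constraints contradict the first rotation's constraints — specifically, applying the same blocking-pair analysis used in Lemma~\ref{lem:rotations-have-size-2} to the women involved in $R_1$ and $R_2$ yields that some woman has two incompatible orderings of a pair of men.

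More concretely, the argument I expect to use is: suppose $M$ is in two rotations, so $M$ has at least three distinct stable partners $w_1, w_2, w_3$ with $w_1 > w_2 > w_3$ on $M$'s list (his male-optimal, an intermediate, and his female-optimal partner, or three consecutive ones). Each consecutive pair $(w_i, w_{i+1})$ is "swapped" in a rotation with some other man $M_i$ who has the reverse preference list, and $M_i$ is matched to $w_{i+1}$ before that rotation and to $w_i$ after. Now consider the women $w_2$: in the first rotation she goes from $M_1$ up to $M$, so $w_2$ prefers $M$ to $M_1$; in the second rotation she goes from $M$ up to $M_2$, so $w_2$ prefers $M_2$ to $M$. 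Meanwhile $M_1$ and $M_2$ both have the reverse of $M$'s preference list, hence the same preference list as each other, and both must (by stability, via the no-blocking-pair argument) sit below $M$ on the lists of the women they cede to $M$; tracing these constraints around, as in Lemma~\ref{lem:rotations-have-size-2}, produces a cyclic chain of "comes after" constraints on a single woman's list that cannot be satisfied. The main obstacle will be bookkeeping the exact blocking-pair arguments cleanly — deciding which intermediate matching to analyze and making sure the pairs invoked are genuinely stable pairs of the instance (so that the no-blocking-pair condition applies) — rather than any conceptual difficulty; the $1$-attribute rigidity (only two lists, reverses of each other) does all the real work, and I'd present the proof as a short contradiction argument once that rigidity is exploited.
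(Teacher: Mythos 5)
Your setup matches the paper's: a man $M$ in two rotations has three stable partners $w_1>w_2>w_3$ on his list, swapped with men $M_1$ (in $R_1$) and $M_2$ (in $R_2$) who both carry the reverse of $M$'s list, and the women's lists are pinned down similarly ($w_2$ has $M_2\cdots M\cdots M_1$, while $w_1,w_3$ have the reverse, $M_1\cdots M\cdots M_2$). But the endgame you propose --- that ``tracing these constraints around produces a cyclic chain of comes-after constraints on a single woman's list that cannot be satisfied,'' i.e.\ some woman gets two incompatible orderings of a pair of men --- does not materialize. Write out all the orderings your constraints force: $M: w_1>w_2>w_3$; $M_1,M_2: w_3>w_2>w_1$; $w_2: M_2>M>M_1$; $w_1,w_3: M_1>M>M_2$. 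These are mutually consistent; there is no ordering contradiction to be had, so a proof that ends by exhibiting one cannot be completed.

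The actual contradiction (and the paper's) is temporal rather than order-theoretic, and hinges on one specific potential blocking pair you never name: $(M_1,w_3)$ in the stable matching $\mathcal{M}$ where $R_1$ is exposed. There $M_1$ is married to $w_2$, and since his list is the reverse of $M$'s he prefers $w_3$ to $w_2$; so stability of $\mathcal{M}$ forces $w_3$ to be married to someone strictly above $M_1$ on her list, hence strictly above both $M$ and $M_2$ (as $w_3$'s list reads $M_1\cdots M\cdots M_2$). Now invoke monotonicity: as one moves from $\mathcal{M}$ toward the female-optimal matching, $w_3$ only ascends her list, so she can never subsequently be married to $M_2$ --- yet $R_2$, which must come after $R_1$ along any path as in Theorem~\ref{thm:gusfield}, requires exactly that. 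This ``once she is above $M_1$ she stays above $M_2$ forever'' step is the crux, and it is the piece missing from your sketch; the blocking-pair bookkeeping you flag as the main obstacle is not merely bookkeeping but the place where the argument actually lives.
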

\begin{proof} Suppose man $B_1$ participates in more than one
rotation. Starting with his partner in the male-optimal matching,
man $B_1$ goes down his preference list with each rotation he
participates in. Suppose $b_1$ is the partner of $B_1$ in the
male-optimal matching, and $b_2$ and $b_3$ are partners of $B_1$ after
the first and second rotations, respectively, that involve $B_1$.
Let $B_2$ and $B_3$ be the
partners of $b_2$ and $b_3$ when they participate in the respective
rotations with $B_1$. The preference lists of $B_1$, $B_2$, $B_3$,
$b_1$, $b_2$, and $b_3$ are as follows.

\begin{center}
\begin{tabular}{r c c c c c c c c c c}
\hline
$B_{1}\; \mid$&$\cdots$&$b_{1}$ &$\cdots$& $b_{2}$ &$\cdots$& $b_{3}$&$\cdots$\\
$B_{2}\; \mid$&$\cdots$&$b_{2}$ &$\cdots$& $b_{1}$ &$\cdots$&&&\\
$B_{3}\; \mid$&&&$\cdots$&$b_{3}$&$\cdots$& $b_{2}$ &$\cdots$&\\
\hline
\end{tabular}\;
\begin{tabular}{r c c c c c c c c c c}
\hline
$b_{1}\; \mid$&&&$\cdots$&$B_{2}$ &$\cdots$& $B_{1}$ &$\cdots$\\
$b_{2}\; \mid$&$\cdots$&$B_{3}$ &$\cdots$& $B_{1}$ &$\cdots$&$B_{2}$&&\\
$b_{3}\; \mid$&$\cdots$&$B_{1}$&$\cdots$& $B_{3}$ &$\cdots$&\\
\hline
\end{tabular}
\end{center}

We note that $B_2$ and $B_3$ have the same preference lists and
$B_1$ has the reverse preference list. Hence, their preference lists
appear as follows.

\begin{center}
\begin{tabular}{r c c c c c c c c c c}
\hline
$B_{1}\; \mid$&&&$\cdots$&$b_{1}$ &$\cdots$& $b_{2}$ &$\cdots$& $b_{3}$&$\cdots$\\
$B_{2}\; \mid$&$\cdots$&$b_{3}$&$\cdots$&$b_{2}$ &$\cdots$& $b_{1}$ &$\cdots$&&&\\
$B_{3}\; \mid$&&&&&$\cdots$&$b_{3}$&$\cdots$& $b_{2}$ &$\cdots$&$b_1$\\
\hline
\end{tabular}
\end{center}

Similarly, $b_1$ and $b_3$ have the same preference lists and $b_2$
has the reverse preference list. Hence, their preference lists are
as follows.

\begin{center}
\begin{tabular}{r c c c c c c c c c c}
\hline
$b_{1}\; \mid$&&&&&$\cdots$&$B_{2}$ &$\cdots$& $B_{1}$ &$\cdots$&$B_{3}$\\
$b_{2}\; \mid$&&&$\cdots$&$B_{3}$ &$\cdots$& $B_{1}$ &$\cdots$&$B_{2}$&\\
$b_{3}\; \mid$&$\cdots$&$B_{2}$&$\cdots$&$B_{1}$&$\cdots$& $B_{3}$ &$\cdots$&\\
\hline
\end{tabular}
\end{center}

When $B_1$ and $B_2$ participate in the rotation, their partners are
$b_1$ and $b_2$  respectively. This implies that $(B_2, b_2)$ is a
stable pair and is part of a stable matching. Hence, the pair
$(B_2, b_3)$ cannot be a blocking pair. For $(B_2, b_3)$ to be not a
blocking pair, $b_3$ should be married to someone higher than $B_2$
on her list, say $B_x$. In other words, $b_3$ should be married to
$B_x$ before the rotation involving $B_1$ and $B_2$ occurs and
cannot to married to anyone lower than $B_x$ after the rotation has
occurred because $b_3$ can only go up her preference list after
future rotations.

\begin{center}
\begin{tabular}{r c c c c c c c c c c}
\hline
$b_{3}\; \mid$&$\cdots$&$B_{x}$&$\cdots$&$B_{2}$&$\cdots$&$B_{1}$&$\cdots$& $B_{3}$ &$\cdots$\\
\hline
\end{tabular}
\end{center}

This implies that $b_3$ can never be married to $B_1$ or $B_3$ in
the future, and the rotation involving the pairs $(B_1, b_2)$ and
$(B_3, b_3)$ which happens after the rotation involving $B_1$ and
$B_2$ violates that. Hence, any man (and woman) can participate in at
most one rotation.
\end{proof}

We are now in a position to prove Theorem~\ref{thm:1-attr-rot}, which we
repeat here.
\setcounter{counter:save}{\value{theorem}}
\setcounter{theorem}{\value{counter:1-attr-rot}}
\begin{theorem}
In the $1$-attribute model, the rotation poset of a stable matching instance
is (isomorphic to) a path.
\end{theorem}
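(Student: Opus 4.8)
The plan is to prove the equivalent statement that the rotation poset $P$ is a chain; a ``path'', regarded as a poset, is exactly a finite chain. The reduction to a local fact is this: if $P$ had incomparable elements $R_1,R_2$, then (by the discussion following Theorem~\ref{thm:sm-downsets}) starting from the male-optimal matching and applying every rotation that lies strictly below $R_1$ or strictly below $R_2$ yields a stable matching in which both $R_1$ and $R_2$ are exposed. So it suffices to show: \emph{no stable matching of a $1$-attribute instance has two distinct exposed rotations.}

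Suppose a stable matching $\mathcal{N}$ had two distinct exposed rotations $R_1,R_2$. By Lemma~\ref{lem:rotations-have-size-2} each is a $2$-cycle, and by Lemma~\ref{lem:at-most-1-rotation} they share neither a man nor a woman, so eight distinct people are involved. Index the women $w_1,\dots,w_n$ in increasing order of attribute value; every man's preference list is then $w_n\succ\cdots\succ w_1$ (call him \emph{type $+$}) or $w_1\succ\cdots\succ w_n$ (\emph{type $-$}), and symmetrically for women. Since there are only these two lists, the two men of a $2$-cycle have opposite types (this is the content of the lemma just before Lemma~\ref{lem:rotations-have-size-2}), so I may write $R_1=\{(B,b),(B',b')\}$ and $R_2=\{(C,c),(C',c')\}$ with $B,C$ of type $+$ and $B',C'$ of type $-$, where in $\mathcal{N}$ man $B$ is matched to $b$, and so on. In a $2$-cycle $S_{\mathcal{N}}(B)=b'$, so $B$ prefers $b$ to $b'$ and $b'$ prefers $B$ to her partner $B'$; writing $\beta,\beta',\gamma,\gamma'$ for the positions of $b,b',c,c'$, the fact that $B$ has type $+$ gives $\beta>\beta'$, and likewise $\gamma>\gamma'$. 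These four positions are distinct, and by the symmetry swapping $R_1$ and $R_2$ I may assume $\gamma<\beta$.

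The core is a finite case analysis on the order of $\beta',\gamma'$ relative to the interval $(\gamma,\beta)$ --- cleanly, on whether $\gamma'<\beta'$ or $\beta'<\gamma'$. Besides stability of $\mathcal{N}$, the tool is the ``gap condition'' implicit in the definition of suitor: every woman whose position lies strictly between $\beta'$ and $\beta$ prefers her $\mathcal{N}$-partner to both $B$ and $B'$, and similarly for $R_2$ with $\gamma',\gamma$. If $\gamma'<\beta'$, one deduces from stability and these gap conditions that $b'$ has $B\succ B'\succ C$ on her list and $c$ has $C'\succ C\succ B'$; since each list is monotone in man-position, these two facts pin the man-positions of $B,B',C,C'$ into one of exactly two orders, and in both of them $b$ (whose list is also monotone, and who prefers $B'$ to $B$) must prefer $C$ to $B$. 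As $C$ prefers $b$ to his partner $c$ (because $\beta>\gamma$), the pair $(C,b)$ then blocks $\mathcal{N}$ --- a contradiction. If instead $\beta'<\gamma'$, the gap conditions applied at $c$ and $c'$, combined with the fact that these two women's lists are reverses of one another, force a strict cyclic chain of inequalities among the man-positions of $B',C,C'$, which is impossible. Hence $\mathcal{N}$ has at most one exposed rotation, so $P$ is a chain and therefore a path; counting the downsets of a path (they are just its initial segments) then yields Theorem~\ref{thm:1-attribute}.

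The step I expect to be the real work is this last case analysis. Nothing conceptually deep happens --- only stability of $\mathcal{N}$ and the rigidity of $1$-attribute lists are used --- but one has to juggle the $\pm$ types of up to four men and four women, the linear order of four woman-positions and of four man-positions, and several suitor/gap constraints at once, and verify that every internally consistent configuration produces a blocking pair or a cyclic inequality. Keeping the bookkeeping tight so the sub-cases do not proliferate is the only real subtlety.
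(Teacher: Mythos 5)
Your proof is correct, and it takes a genuinely different route from the paper's. Both arguments rest on the same two preceding lemmas (Lemmas~\ref{lem:rotations-have-size-2} and~\ref{lem:at-most-1-rotation}), but from there the paper works globally: it discards the non-rotating people, relabels, and uses a chain of blocking-pair arguments to determine the \emph{complete} preference lists of everyone who remains (the nested form $b_{2k-1}b_{2k-3}\cdots b_3 b_1 b_2 b_4\cdots b_{2k}$ and its reverse), from which it reads off an explicit eliminated pair $(B_{2i+1},b_{2i-1})$ witnessing that $R_i$ explicitly precedes $R_{i+1}$. You instead argue locally: incomparable rotations would be co-exposed in the matching corresponding to the downset of their strict predecessors, and two co-exposed $2$-cycles are impossible. (Your reduction uses the standard Irving--Leather fact that $R$ is exposed in $M_D$ exactly when $R$ is minimal in $P\setminus D$; the paper never states this explicitly, but it is standard and the paper leans on the same machinery elsewhere.) Your approach buys economy --- no need to pin down the global list structure --- at the price of not exhibiting the covering relations of the path explicitly; the paper's version yields an explicit description of every admissible instance. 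On the step you flag as the real work: I checked it and it closes. When $\beta'<\gamma'$, both $\gamma$ and $\gamma'$ lie in $(\beta',\beta)$, so the gap conditions at $c,c'$ plus the suitor conditions give $C'\succ C\succ B'$ on $c$'s list and $C\succ C'\succ B'$ on $c'$'s, and reversing one of these two lists does force the cyclic inequality you describe among the positions of $B',C,C'$. When $\gamma'<\beta'$, the two facts you name ($B\succ B'\succ C$ at $b'$ and $C'\succ C\succ B'$ at $c$) do hold, but be aware that they need different justifications in the two sub-configurations: when the intervals overlap ($\beta'<\gamma$) they come from the gap conditions, whereas when they are disjoint ($\gamma<\beta'$) the gap conditions say nothing about these eight people and you must fall back on plain stability ($C$ prefers both $b$ and $b'$ to $c$, and $B'$ prefers both $c$ and $c'$ to $b'$). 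Either way monotonicity forces the man-positions into the order $B,B',C,C'$ or its mirror, $b$ then prefers $C$ to $B$, and $(C,b)$ blocks $\mathcal{N}$. A full write-up just needs to separate those two sub-cases explicitly.
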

\setcounter{theorem}{\value{counter:save}}

\begin{proof}
In order to prove this theorem, we need to show that any two rotations
are comparable, as this gives a total ordering on the set of rotations.

We start by computing the male-optimal and female-optimal stable
matchings.  The men who have the same partner in both matchings are
removed along with their partners from the problem instance as their
presence or absence does not affect the rotation poset.  So
we may assume that every man and women in the stable matching instance
is involved in at least one rotation.

Since every rotation involves exactly two men and two women
(Lemma~\ref{lem:rotations-have-size-2}), and, by
removing the men and women that are not involved in any rotations, we
see that each man and woman that remains is involved in exactly one
rotation (Lemma~\ref{lem:at-most-1-rotation}).  Thus, the number
of men and women in the (reduced) matching instance must be even.

Let the $2k$ men be denoted $\{ B_1, \ldots, B_{2k}\}$ and the $2k$
women be denoted $\{ b_1, \ldots, b_{2k}\}$. By relabeling, we
can assume that the male-optimal matching
pairs man $B_i$ with woman $b_i$, and the female-optimal matching
pairs man $B_{2i-1}$ with woman $b_{2i}$, and man $B_{2i}$ with woman
$b_{2i-1}$. In other words, there are $k$ rotations
$R_1,R_2,\cdots,R_k$ and rotation $R_i$ is of the form
$\{ (B_{2i-1},b_{2i-1}), (B_{2i}, b_{2i}) \}$.
We want to show that any two rotations
are comparable, i.e., for every $i,j\in \{1,2,\ldots,k\}$,
where $i \neq j$, either
$R_i$ precedes $R_j$ or
$R_j$ precedes  $R_i$.

Let us compare two rotations, say $R_1$ and $R_2$. The men and women
involved in the two rotations are $\{ B_1, B_2, B_3, B_4 \}$ and
$\{ b_1, b_2, b_3, b_4 \}$.

The preference list of $B_2$ is the reverse of $B_1$'s and that of
$B_3$'s is reverse of $B_4$'s. Without loss of generality, we could
assume that $B_1$ and $B_3$ have the same preference lists and that
$b_3$ comes ahead of $b_1$ on their preference lists. Therefore, the
partial preference lists of the men appear as follows.

\begin{center}
\begin{tabular}{r c c c c c c c c c c}
\hline
$B_{1}\; \mid$&$\cdots$&$b_{3}$&$\cdots$&$b_{1}$&$\cdots$& $b_{2}$&$\cdots$&&&\\
$B_{2}\; \mid$&&&$\cdots$&$b_{2}$ &$\cdots$& $b_{1}$ &$\cdots$&$b_{3}$&\\
$B_{3}\; \mid$&$\cdots$&$b_{3}$&$\cdots$&$b_{4}$ &$\cdots$&&&&&\\
$B_{4}\; \mid$&$\cdots$&$b_{4}$&$\cdots$& $b_{3}$ &$\cdots$&\\
\hline
\end{tabular}
\end{center}

The partial preference lists of the women are given below.

\begin{center}
\begin{tabular}{r c c c c c c c c c c}
\hline
$b_{1}\; \mid$ &$\cdots$&$B_{2}$&$\cdots$& $B_{1}$&$\cdots$&&&\\
$b_{2}\; \mid$&$\cdots$&$B_{1}$ &$\cdots$& $B_{2}$ &$\cdots$&\\
$b_{3}\; \mid$&$\cdots$&$B_{4}$ &$\cdots$& $B_{3}$ &$\cdots$&&\\
$b_{4}\; \mid$&$\cdots$&$B_{3}$&$\cdots$&$B_{4}$&$\cdots$&&&\\
\hline
\end{tabular}
\end{center}

The pair $(B_1, b_3)$ must not be a blocking pair to the male-optimal
matching that pairs $B_i$ to $b_i$. Since $b_3$ appears ahead of
$b_1$ on $B_1$'s preference list, $B_1$ should appear after $B_3$ on
$b_3$'s preference list. Therefore, the women's partial preference
lists are as follows.

\begin{center}
\begin{tabular}{r c c c c c c c c c c}
\hline
$b_{1}\; \mid$&$\cdots$&$B_{2}$&$\cdots$& $B_{1}$&$\cdots$&&&\\
$b_{2}\; \mid$&$\cdots$&$B_{1}$ &$\cdots$& $B_{2}$ &$\cdots$&\\
$b_{3}\; \mid$&&&$\cdots$&$B_{4}$&$\cdots$&$B_{3}$ &$\cdots$& $B_{1}$ &\\
$b_{4}\; \mid$&$\cdots$&$B_{1}$&$\cdots$&$B_{3}$ &$\cdots$& $B_{4}$ &$\cdots$&\\
\hline
\end{tabular}
\end{center}

Since the female-optimal matching pairs $(B_{2i-1}, b_{2i})$ and
$(B_{2i}, b_{2i-1})$, the pair $(B_1, b_4)$ cannot be a blocking pair.
Since $B_1$ appears ahead of $B_3$ on $b_4$'s preference list, $b_4$
should appear after $b_2$ on $B_1$'s preference list. So the men's
partial preference lists are as follows.

\begin{center}
\begin{tabular}{r c c c c c c c c c c}
\hline
$B_{1}\; \mid$&$\cdots$&$b_{3}$&$\cdots$&$b_{1}$&$\cdots$& $b_{2}$&$\cdots$&$b_{4}$&&\\
$B_{2}\; \mid$&$\cdots$&$b_{4}$&$\cdots$&$b_{2}$ &$\cdots$& $b_{1}$ &$\cdots$&$b_{3}$&\\
$B_{3}\; \mid$&$\cdots$&$b_{3}$&$\cdots$&$b_{1}$&$\cdots$& $b_{2}$&$\cdots$&$b_{4}$&&\\
$B_{4}\; \mid$&$\cdots$&$b_{4}$&$\cdots$&$b_{2}$ &$\cdots$& $b_{1}$ &$\cdots$&$b_{3}$&\\
\hline
\end{tabular}
\end{center}

Since the male-optimal matching pairs $(B_i, b_i)$, we see that
$(B_2, b_4)$ cannot be a blocking pair. Since $b_4$ appears ahead of
$b_2$ on $B_2$'s preference list, $B_2$ should appear after $B_4$ on
$b_4$'s preference list. This gives us more information about the women's
partial preference lists.

\begin{center}
\begin{tabular}{r c c c c c c c c c c}
\hline
$b_{1}\; \mid$&$\cdots$&$B_{2}$&$\cdots$&$B_{4}$&$\cdots$&$B_{3}$ &$\cdots$& $B_{1}$&\\
$b_{2}\; \mid$&$\cdots$&$B_{1}$&$\cdots$&$B_{3}$ &$\cdots$& $B_{4}$ &$\cdots$&$B_{2}$\\
$b_{3}\; \mid$&$\cdots$&$B_{2}$&$\cdots$&$B_{4}$&$\cdots$&$B_{3}$ &$\cdots$& $B_{1}$&\\
$b_{4}\; \mid$&$\cdots$&$B_{1}$&$\cdots$&$B_{3}$ &$\cdots$& $B_{4}$ &$\cdots$&$B_{2}$\\
\hline
\end{tabular}
\end{center}

Comparing the preference lists for the men and the women, we observe
that in the men's preference lists the women involved in one
rotation are sandwiched by the women of the other rotation.  A similar
thing happens in the women's preference lists, except that the
rotations reverse their roles here, i.e.\ if the women from rotation $R$
sandwich the women from rotation $R'$ in the men's preference lists,
then the men from $R'$ sandwich the men from $R$ in the women's
preference lists.  Since this is true for the men and women in
every pair of rotations, we could assume that all odd men have $b_1$ and
$b_2$ as the innermost pair, enveloped by $b_3$ and $b_4$ and so on. In
other words, the preference lists for the men are as follows.

\begin{center}
\begin{tabular}{r c c c c c c c c c c}
\hline
$B_{1}\; \mid$&$b_{2k-1} \;b_{2k-3} \;\cdots \;b_{3}\; b_{1}\; b_{2}\; b_{4} \;\cdots\; b_{2k-2}\; b_{2k}$&&\\
$B_{2}\; \mid$&$b_{2k} \;b_{2k-2} \;\cdots \;b_{4}\; b_{2}\; b_{1}\; b_{3} \;\cdots\; b_{2k-3}\; b_{2k-1}$&&\\
$B_{3}\; \mid$&$b_{2k-1} \;b_{2k-3} \;\cdots \;b_{3}\; b_{1}\; b_{2}\; b_{4} \;\cdots\; b_{2k-2}\; b_{2k}$&&\\
\vdots &\vdots\hspace{1in}\vdots\\
$B_{2k-2}\; \mid$&$b_{2k} \;b_{2k-2} \;\cdots \;b_{4}\; b_{2}\; b_{1}\; b_{3} \;\cdots\; b_{2k-3}\; b_{2k-1}$&&\\
$B_{2k-1}\; \mid$&$b_{2k-1} \;b_{2k-3} \;\cdots \;b_{3}\; b_{1}\; b_{2}\; b_{4} \;\cdots\; b_{2k-2}\; b_{2k}$&&\\
$B_{2k}\; \mid$&$b_{2k} \;b_{2k-2} \;\cdots \;b_{4}\; b_{2}\; b_{1}\; b_{3} \;\cdots\; b_{2k-3}\; b_{2k-1}$&&\\
\hline
\end{tabular}
\end{center}

This fixes the preference lists for the women and they are as
follows.

\begin{center}
\begin{tabular}{r c c c c c c c c c c}
\hline
$b_{1}\; \mid$&$B_{2} \;B_{4} \;\cdots \;B_{2k-2}\; B_{2k}\; B_{2k-1}\; B_{2k-3} \;\cdots\; B_{3}\; B_{1}$&&\\
$b_{2}\; \mid$&$B_{1} \;B_{3} \;\cdots \;B_{2k-3}\; B_{2k-1}\; B_{2k}\; B_{2k-2} \;\cdots\; B_{4}\; B_{2}$&&\\
$b_{3}\; \mid$&$B_{2} \;B_{4} \;\cdots \;B_{2k-2}\; B_{2k}\; B_{2k-1}\; B_{2k-3} \;\cdots\; B_{3}\; B_{1}$&&\\
\vdots &\vdots\hspace{1in}\vdots\\
$b_{2k-2}\; \mid$&$B_{1} \;B_{3} \;\cdots \;B_{2k-3}\; B_{2k-1}\; B_{2k}\; B_{2k-2} \;\cdots\; B_{4}\; B_{2}$&&\\
$b_{2k-1}\; \mid$&$B_{2} \;B_{4} \;\cdots \;B_{2k-2}\; B_{2k}\; B_{2k-1}\; B_{2k-3} \;\cdots\; B_{3}\; B_{1}$&&\\
$b_{2k}\; \mid$&$B_{1} \;B_{3} \;\cdots \;B_{2k-3}\; B_{2k-1}\; B_{2k}\; B_{2k-2} \;\cdots\; B_{4}\; B_{2}$&&\\
\hline
\end{tabular}
\end{center}

Suppose $1\leq i < k$.
Recall that $R_i$ is of the form $\{ (B_{2i-1},b_{2i-1}), (B_{2i}, b_{2i}) \}$
and $R_{i+1}$ is $\{ (B_{2i+1},b_{2i+1}), (B_{2i+2}, b_{2i+2}) \}$.
Now $R_i$ moves $b_{2i-1}$ from $B_{2i-1}$, which is below $B_{2i+1}$ on its preference list to $B_{2i}$, which is above
$B_{2i+1}$ on its preference list.
Hence the rotation $R_i$ eliminates the pair $(B_{2i+1},b_{2i-1})$.
Also, $R_{i+1}$ moves $B_{2i+1}$ to $b_{2i+2}$, which is strictly worse for $B_{2i+1}$ than $b_{2i-1}$.
Thus, $R_i$ explicitly precedes $R_{i+1}$ (taking $M=B_{2i+1}$ and $w=b_{2i-1}$ in Definition~\ref{def:explicitly}).

\end{proof}

\section{Stable matchings in the $k$-Euclidean model}\label{sect:k-Euclidean}

Having given our construction for the $k$-attribute setting, we now turn
to the $k$-Euclidean model.  We remind the reader that in this model
every man, say $A_i$, is associated
with two points in $\mathbb{R}^k$. One of the points, $\bar{A}_{i}$, denotes
his {\em position} and the other, $\hat{A}_i$, denotes the position
of his ideal partner. We refer to $\bar{A}_i$ as the {\em position point}
of $A_i$ and to $\hat{A}_i$ as
the {\em preference point} of $A_i$.  Similarly, each women has her own
position and preference points.  Each man ranks the women
based on the Euclidean distance between his own preference point and the
women's position points.
In other words, if the
distance between $\hat{A}_i$ and $\bar{b}$ is {\em less than}
the distance between $\hat{A}_i$ and $\bar{c}$, then $A_i$
prefers $b$ over $c$ ($b$ appears higher in his preference list than $c$).

In this section we work in the $2$-dimensional Euclidean model.
Our goal here is to establish Theorem~\ref{thm:k-Euclidean}, which
we repeat below.

\setcounter{counter:save}{\value{theorem}}
\setcounter{theorem}{\value{counter:k-Euclidean}}
\begin{theorem}
$\#BIS \equiv_{AP} \#SM(k\rm{-Euclidean})$ when $k \geq 2$.
\end{theorem}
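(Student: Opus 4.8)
The plan is to mirror the $k$-attribute construction from Section~\ref{sect:k-attribute} almost verbatim, replacing the dot-product preference rule by the Euclidean-distance rule, and to argue that the resulting $2$-dimensional Euclidean instance produces exactly the same (truncated) preference lists as the $3$-attribute instance. Once that is established, everything from Section~\ref{sect:mfo} onward --- the computation of the male- and female-optimal matchings, the extraction of $\rho$- and $\sigma$-rotations via \textbf{Find-All-Rotations}, and the analysis of the ``explicitly precedes'' relation showing the rotation poset is (isomorphic to) the bipartite graph $G$ of height~$1$ --- carries over word for word, since that part of the argument depends only on the partial preference lists~(\ref{women}) and~(\ref{men}), not on how those lists were generated. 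This gives $\#BIS \leq_{AP} \#SM(2\textrm{-Euclidean})$, and hence $\#BIS \leq_{AP} \#SM(k\textrm{-Euclidean})$ for all $k \geq 2$ by padding with irrelevant coordinates. The reverse direction $\#SM(k\textrm{-Euclidean}) \leq_{AP} \#BIS$ follows exactly as in Section~\ref{sect:idea}: $\#SM(k\textrm{-Euclidean})$ is a special case of $\#SM$, so Theorem~\ref{thm:sm-downsets} gives $\#SM(k\textrm{-Euclidean}) \leq_{AP} \#Downsets$, and $\#Downsets \leq_{AP} \#BIS$ by~\cite[Lemma~9]{DGGJ}.

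First I would set up the points. Put all ``position points'' of the women on the unit circle in the $z=0$ plane (i.e.\ in $\mathbb{R}^2$ we use the circle $x^2+y^2=1$), grouped into the sets $W_i$ and sub-grouped into triplets $T(\cdot)$ exactly as before, with the angular gaps $\epsilon = 2\pi/n^2$, $\theta_i = \epsilon/(7p_i-1)$, etc. The only genuine change concerns the $b$-women: instead of lifting $\bar b_j$ to $z$-coordinate $4^j$ (which has no meaning in $2$ dimensions), I would place $\bar b_j$ at a carefully chosen \emph{radius} $r_j$ from the origin along the same angular direction as before, with the radii spread out geometrically (e.g.\ $r_j$ close to but strictly less than some common value, decreasing fast in $j$, or pushed far from the cluster of $a$- and $c$-women) so that a man $B_i$ whose preference point sits near the circle at angle roughly that of $\bar a_i$ ranks $b_n \succ b_{n-1} \succ \cdots \succ b_1$ and prefers every $b$-woman to every non-$b$-woman. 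Symmetrically, the men's ``position points'' $\bar A_i, \bar B_i$ go on the unit circle grouped by the sets $U_i$, and the $\bar C_i$ get a radial perturbation playing the role the $z$-coordinate $4^i$ played before. The preference points $\hat A_i, \hat C_i, \hat a_i, \hat b_i, \hat c_i$ of the ``flat'' people are placed on the unit circle at the same fractional angular positions as the corresponding preference \emph{vectors} in Section~\ref{sect:preferences} (using that for points on a circle centred near the relevant cluster, ``closer in Euclidean distance'' coincides with ``smaller angular separation''), and $\hat B_i$ (resp.\ $\hat a_i$) is placed slightly off the circle, towards the cluster of $b$-women (resp.\ $C$-men), to replicate the tilt that the angle $\phi = 2\pi/100$ with the $z$-axis produced.

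Then I would verify the preference lists. The verification is the same inequality-chasing as in Section~\ref{sect:pref-lists}, with two ingredients: (a)~for points all lying on (or very near) a common circle, the Euclidean distance is a strictly increasing function of the angular separation, so all the ``angle-only'' comparisons for $A_i$, $C_i$, and for $B_i$ among the $a$- and $c$-women go through unchanged; and (b)~for the $b$-women (resp.\ $C$-men), the radial separations can be chosen large enough to dominate the bounded angular contribution, exactly as $\cos\phi\,4^{j+1} - \phi > \cos\phi\,4^j + \phi$ was used before --- concretely one picks the radii so that $|\,d(\hat B_i,\bar b_{j+1}) - d(\hat B_i,\bar b_j)\,|$ exceeds the at-most-$O(\epsilon)$ angular wiggle for all $i,j$, and so that $d(\hat B_i,\bar b_1)$ beats $d(\hat B_i, \bar w)$ for every non-$b$ woman $w$. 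I expect the main obstacle to be precisely this step: choosing the radial offsets for the $b$-women and $C$-men so that the ``geometric'' separation strictly dominates all the small angular perturbations \emph{simultaneously} for every pair of people, while still keeping the $a$-, $c$-woman ordering within each $W_i$ correct; this is the $2$-dimensional analogue of the ``$z = 4^i$'' trick and requires a single explicit choice of parameters (e.g.\ $r_j = 1 - c\,4^{-j}$ or $r_j = 2^{4^j}$ for a suitable constant, plus a bound on $\epsilon$) that makes all the inequalities in the analogues of the displays in Section~\ref{sect:pref-lists} hold. Having pinned that down, the partial preference lists come out identical to~(\ref{women}) and~(\ref{men}) (for $\tau$ the identity), so I would simply invoke the analysis of Sections~\ref{sect:mfo}--\ref{sect:extracting-rotations} and Lemmas~\ref{rho-minimal}, \ref{sigma-maximal}, \ref{rho-sigma} to conclude that the rotation poset is isomorphic to $G$, completing the reduction and hence the theorem.
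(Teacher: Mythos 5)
Your overall strategy is exactly the right one, and it is the one the paper uses: build a $2$-Euclidean instance whose partial preference lists coincide with~(\ref{women}) and~(\ref{men}) (the paper deliberately parameterised~(\ref{men}) by the permutation~$\tau$ so that the Euclidean construction need not reproduce the identity ordering of the $b$-women), then invoke Sections~\ref{sect:mfo}--\ref{sect:extracting-rotations} verbatim; the reverse direction via Theorem~\ref{thm:sm-downsets} and \cite[Lemma~9]{DGGJ} is also correct. The gap is in the geometry. Your plan keeps the circle layout of Section~\ref{sect:preferences} and tries to simulate the $z=4^j$ lift by a radial offset $r_j$. This cannot be made to work by tuning parameters, because the two roles the lift plays in the dot-product construction pull the radii in opposite directions once you are confined to the plane. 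On one hand, every $B_i$ must rank \emph{all} $n$ of the $b$-women (which in your layout sit at angular positions spread over the whole circle, one per triplet) above $a_i$, and must rank them in an order \emph{independent of $i$}; if $\hat B_i$ stays near the circle at the angle of $\bar a_i$ this fails outright (a $b$-woman in a distant group $W_m$ is at distance $\Theta(1)$ while $a_i$ is at distance $O(\theta_j)$), and if you pull $\hat B_i$ to the centre you need the radii $r_j$ to be pairwise separated by more than the angular wiggle $O(\sin\phi)$, forcing $1-r_j=\Omega(n\sin\phi)$ for some $j$. On the other hand, the list of $A_i$ must begin $a_i\,b_{\rho i}$, with $b_{\rho i}$ beating $c_{\sigma^{-1}i}$, which sits at angular distance only $O(\theta)$ from $\hat A_i$ on the unit circle; with $\hat A_i$ placed on the circle as you prescribe, this forces $\bar b_{\rho i}$ to lie within $O(\theta)=O(\epsilon/n)$ of the circle. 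These two requirements on $1-r_j$ are incompatible. The root cause is that in the $3$-attribute construction the lift is along a direction \emph{orthogonal} to all the flat preference vectors, so $\hat A_i$ and $\hat C_i$ literally cannot see it while the tilted $\hat B_i$ is dominated by it; in $\mathbb{R}^2$ there is no spare orthogonal direction, and a radial displacement at angle $\alpha_j$ is visible, at full strength, to every Euclidean distance computation.

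The paper resolves this by abandoning the circle entirely: the $a$- and $c$-women are placed on the $x$-axis with unit spacing (grouped by $\sigma$-cycle), the $b$-women on the $y$-axis, $\hat C_i$ on the $x$-axis between $\bar c_i$ and $\bar a_{\sigma i}$, $\hat A_i$ near the perpendicular bisector of the segment from $\bar a_i$ to $\bar b_{\rho i}$ (at $(s,s-\epsilon)$ with $\epsilon=1/100^n$), and $\hat B_i$ at $(s,1000^n)$, astronomically far up the $y$-axis. Then every $b$-woman is at distance $<1000^n$ from $\hat B_i$ while every $a$/$c$-woman is at distance $\geq 1000^n$, the $b$-women are ranked by their $y$-coordinates uniformly in $i$ (yielding some fixed $\tau$, not the identity), and the within-group orderings are verified by direct coordinate computations rather than angular ones. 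So if you replace your circle-plus-radii embedding by an embedding of this ``two orthogonal lines'' type (or otherwise supply a placement for which all of the required inequalities are actually verified simultaneously), the rest of your argument goes through; as written, the construction in your second and third paragraphs does not.
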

\setcounter{theorem}{\value{counter:save}}

Theorem~\ref{thm:k-Euclidean} asserts that $\#BIS$ and
$\#SM(k\textrm{-Euclidean})$ are AP-interreducible for $k\geq 2$.
Since the AP-reduction from $\#SM(k\textrm{-Euclidean})$ to $\#BIS$
follows easily from known results (see Section~\ref{sect:idea}),
we now give an AP-reduction from $\#BIS$ to~$\#SM(k\textrm{-Euclidean})$.

As in Section~\ref{sect:k-attribute}, we will show how to take an
instance~$G$ of~$\#BIS$ and, in polynomial time, construct
an instance~$I$ of~$\#SM(k\textrm{-Euclidean})$
so that the number of stable matchings of~$I$ is equal to the number
of independent sets of~$G$.

Let~$G=(V_1\cup V_2,E)$ be an instance of~$\#BIS$
with $|E|=n$. We will construct
a $2$-Euclidean stable matching instance having $3n$ men and $3n$ women.
Our construction will use the
$\rho$-cycles and $\sigma$-cycles   defined
in Section~\ref{sect:permutations} .
To specify the stable matching instance, we now give position and
preference
points for the $3n$ men and women.

First, we position
the $3n$ women $a_1,\cdots, a_n,$ $b_1,\cdots,b_n,c_1,\cdots,c_n$
such that the $b$-women lie on the $y$-axis and the $a$-women and
$c$-women lie on the $x$-axis. We represent woman $w_i$ by
$\bar{w}_i = (\bar{w}_{i}(x), \bar{w}_{i}(y))$ where
$\bar{w}_{i}(x)$ and $\bar{w}_{i}(y)$
are her $x$- and $y$-coordinates. The coordinates of
$\bar{a}_i$, $\bar{b}_i$ and
$\bar{c}_i$ are $(\bar{a}_{i}(x),0), (0,\bar{b}_{i}(y))$ and
$(\bar{c}_{i}(x),0)$, respectively. We impose
further restrictions on the coordinates of $\bar{a}_i$,
$\bar{b}_i$, and $\bar{c}_i$.

\begin{align*}
\textrm{Let}\ \bar{b}_{\rho i}(y) = \bar{a}_i(x),\;\;
\bar{c}_{\sigma^{-1} i}(x) = \bar{a}_i(x) - 0.7 \;\; \ \textrm{for}\
1\leq i \leq n.
\end{align*}

Fixing the $x$-coordinates of $\bar{a}_1, \ldots, \bar{a}_n$
therefore fixes the positions of all of the women. Suppose $D_1$
through $D_l$ are the $l$ cycles of $\sigma$ of lengths $p_1$
through $p_l$, respectively. As before, let $e_i$ be a
representative element of cycle $D_i$. So
$D_i = \{e_i, \sigma(e_{i}), \ldots, \sigma^{p_i-1}(e_{i})\}$.
Also as before, let $Rep(\sigma) = \{e_1,e_2\cdots,e_l\}$
be the set of representative elements of the $\sigma$-cycles.

Let $W_i = \{a_x:x\in D_i\}\cup\{c_x:x\in D_i\}$.
We set $p_0 = 0$. For woman $a_{\sigma^{h} e_j}$,
where $e_j \in Rep(\sigma), \ \textrm{and}\ 0\leq h \leq p_j-1$, we
set
$\bar{a}_{\sigma^h e_{j}}(x) = \sum_{i=0}^{j-1} 2p_i + h+1$.
The position points of the women are as follows.
\begin{align*}
\textrm{For} \ e_j &\in Rep(\sigma),\;\;\;0 \leq h \leq p_j-1 \ \textrm{let}\\
\bar{a}_{\sigma^{h} e_j} &= \left(\;\sum_{i=0}^{j-1}2p_i + h+1\;,\;0\right), \\
\bar{b}_{\rho\sigma^{h} e_j} &= \left(0\;,\;\sum_{i=0}^{j-1}2p_i + h+1\right), \ \textrm{and} \\
\bar{c}_{\sigma^{(h-1)} e_j} &= \left(\;\sum_{i=0}^{j-1}2p_i +
h+0.3\;,\;0\right).
\end{align*}

Next we fix the locations in the $x$-$y$ plane for the ideal partners
of the men as follows, i.e.\ we specify the preference points for
each man.
\begin{align*}
\textrm{Let} \ \epsilon &= 1/100^{n}. \\
\textrm{For} \ e_j & \in Rep(\sigma),\;\;\;0 \leq h \leq p_j-1 \ \textrm{let} \\
\hat{A}_{\sigma^{h} e_j} &= \left(\;\sum_{i=0}^{j-1}2p_i + h+1\;,\;\sum_{i=0}^{j-1}2p_i + h+1 - \epsilon\right), \\
\hat{B}_{\sigma^{h} e_j} &= \left(\;\sum_{i=0}^{j-1}2p_i + h+1\;,\;1000^{n}\right), \ \textrm{and} \\
\hat{C}_{\sigma^{(h-1)} e_j} &= \left(\;\sum_{i=0}^{j-1}2p_i + h+0.6\;,\;0\right).
\end{align*}

Having fixed the position of the women and the preference points for
the men, we next fix the position of the men and the preference
points of the women.

First, we position the $3n$ men $A_1,\cdots, A_n,$
$B_1,\cdots,B_n,C_1,\cdots,C_n$ such that the $C$-men lie on the
$y$-axis and the $A$-men and $B$-men lie on the $x$-axis. We
represent man $m_i$ by $\bar{M}_i = (\bar{M}_{i}(x),
\bar{M}_{i}(y))$ where $\bar{M}_{i}(x)$ and $\bar{M}_{i}(y)$ are his
$x$- and $y$-coordinates. The coordinates of $\bar{A}_i$,
$\bar{B}_i$ and $\bar{C}_i$ are $(\bar{A}_{i}(x),0),
(\bar{B}_{i}(x),0)$ and $(0,\bar{C}_{i}(y))$, respectively. We
impose further restrictions on the coordinates of $\bar{A}_i$,
$\bar{B}_i$, and $\bar{C}_i$.
\begin{align*}
\textrm{Let}\ \bar{B}_i(x) = \bar{C}_{i}(y),\;\; \bar{A}_{\rho^{-1}
i}(x) = \bar{B}_i(x) - 0.7 \;\; \ \textrm{for}\ 1\leq i \leq n.
\end{align*}

Here again, fixing the $x$-coordinates of $\bar{B}_1, \ldots,
\bar{B}_n$ therefore fixes the positions of all of the men. Suppose
$E_1$ through $E_k$ are the $k$ cycles of $\rho$ of lengths $q_1$
through $q_k$, respectively. As before, let $f_i$ be a
representative element of cycle $E_i$, which is the element of~$E_i$
with the smallest index.
So
$E_i = \{f_i, \rho(f_{i}), \ldots, \rho^{q_i-1}(f_{i})\} = \{f_i,
f_{i} + 1, \ldots, f_{i}+q_i-1\} $. Also as before, let $Rep(\rho) =
\{f_1,f_2\cdots,f_k\}$ be the set of representative elements of the
$\rho$-cycles.

Let $W_i = \{B_x:x\in E_i\}\cup\{A_x:x\in E_i\}$. We set $q_0 = 0$.
For man $B_{\rho^{h} f_j}$, where $f_j \in Rep(\rho), \
\textrm{and}\ 0\leq h \leq q_j-1$, we set
$\bar{B}_{\rho^h f_{j}}(x) = \sum_{i=0}^{j-1} 2q_i + h+1$. The
position points of the men are as follows.
\begin{align*}
\textrm{For} \ f_j &\in Rep(\rho),\;\;\;0 \leq h \leq q_j-1 \ \textrm{let}\\
\bar{A}_{\rho^{h-1} f_j} &= \left(\;\sum_{i=0}^{j-1}2q_i + h+0.3\;,\;0\right), \\
\bar{B}_{\rho^{h} f_j} &= \left(\;\sum_{i=0}^{j-1}2q_i + h+1\;,\;0\right), \ \textrm{and} \\
\bar{C}_{\rho^{h} f_j} &= \left(0\;,\;\sum_{i=0}^{j-1}2q_i +
h+1\right).
\end{align*}

Next we fix the locations in the $x$-$y$ plane for the ideal
partners of the women as follows, i.e.\ we specify the preference
points for each woman.
\begin{align*}
\textrm{Let} \ \epsilon &= 1/100^{n}. \\
\textrm{For} \ f_j & \in Rep(\rho),\;\;\;0 \leq h \leq q_j-1 \ \\
\textrm{let}\;\;
\hat{a}_{\rho^{h} f_j} &= \left(\;\sum_{i=0}^{j-1}2q_i + h+1\;,\;\;1000^{n}\right), \\
\hat{b}_{\rho^{h} f_j} &= \left(\;\sum_{i=0}^{j-1}2q_i + h+0.6\;,\;0\right), \ \textrm{and} \\
\hat{c}_{\rho^{h} f_j} &= \left(\;\sum_{i=0}^{j-1}2q_i +
h+1\;,\;\sum_{i=0}^{j-1}2q_i + h+1 - \epsilon\right).
\end{align*}

Having assigned position and preference points for both the men and
the women, we construct the initial part of the preference lists of
the men starting with man $C_{\sigma^{(h-1)} e_j}$. We compare the
distances of the women from $\hat{C}_{\sigma^{(h-1)} e_j}$ to
produce the initial part of the preference list.
\begin{eqnarray*}
e_j,e_m & \in & Rep(\sigma)\;\;,\;\;\;0 \leq f,h \leq p_j-1\;\;,\;\;\;0 \leq g \leq p_m-1\\
d^2(\hat{C}_{\sigma^{(h-1)} e_j},\bar{b}_{\rho\sigma^{g} e_m}) &= &
(\sum_{i=0}^{j-1}2p_i + h+0.6 - 0)^2  \\
 & & + \ (0-\sum_{i=0}^{m-1}2p_i - g
-1)^2 \geq 0.6^2 + 1^2 = 1.36
\end{eqnarray*}

\begin{eqnarray*}
d^2(\hat{C}_{\sigma^{(h-1)} e_j},\bar{c}_{\sigma^{(h-1)} e_j}) & = &
(\sum_{i=0}^{j-1}2p_i + h+0.6 \\
  & & - \ \sum_{i=0}^{j-1}2p_i - h-0.3)^2 + (0-0)^2 = 0.09 \ \ \textrm{ and} \\
d^2(\hat{C}_{\sigma^{(h-1)} e_j},\bar{a}_{\sigma^{h} e_j}) & = &
(\sum_{i=0}^{j-1}2p_i + h+0.6 \\
  & & - \ \sum_{i=0}^{j-1}2p_i - h-1)^2 + (0-0)^2 = 0.16.\\
\end{eqnarray*}
For $h \neq f$,
\begin{eqnarray*}
d^2(\hat{C}_{\sigma^{(h-1)} e_j},\bar{c}_{\sigma^{(f-1)} e_j}) & = &
(\sum_{i=0}^{j-1}2p_i + h+0.6 - \sum_{i=0}^{j-1}2p_i - f-0.3)^2 +
(0-0)^2\\
& \geq & (|h-f| - 0.3)^2 \geq (1-0.3)^2 = 0.49 \ \ \textrm{ and} \\
d^2(\hat{C}_{\sigma^{(h-1)} e_j},\bar{a}_{\sigma^{(f)} e_j}) & = &
(\sum_{i=0}^{j-1}2p_i + h+0.6 - \sum_{i=0}^{j-1}2p_i - f-1)^2 +
(0-0)^2\\
& \geq & (|h-f| - 0.4)^2 \geq (1-0.4)^2 = 0.36.\\
\end{eqnarray*}

\noindent For $m > j$,
\begin{eqnarray*}
d^2(\hat{C}_{\sigma^{(h-1)} e_j},\bar{c}_{\sigma^{(g-1)} e_m}) & = &
(\sum_{i=0}^{j-1}2p_i + h+0.6 - \sum_{i=0}^{m-1}2p_i - g-0.3)^2 +
(0-0)^2\\
& \geq & (|\sum_{i=j}^{m-1}2p_i + g|- |h+0.3|)^2 \\
& \geq & (2p_j - (p_j -1) - 0.3)^2 \geq (2-0.3)^2 = 2.89\ \ \textrm{ and} \\
d^2(\hat{C}_{\sigma^{(h-1)} e_j},\bar{a}_{\sigma^{g} e_m}) & = &
(\sum_{i=0}^{j-1}2p_i + h+0.6 - \sum_{i=0}^{m-1}2p_i - g-1)^2 +
(0-0)^2\\
& \geq & (|\sum_{i=j}^{m-1}2p_i + g + 0.4|- |h|)^2 \\
& \geq & (2p_j +0.4 - (p_j-1))^2 \geq 2.4^2 = 5.76.
\end{eqnarray*}
For $j > m$,
\begin{eqnarray*}
d^2(\hat{C}_{\sigma^{(h-1)} e_j},\bar{c}_{\sigma^{(g-1)} e_m}) & = &
(\sum_{i=0}^{j-1}2p_i + h+0.6 - \sum_{i=0}^{m-1}2p_i - g-0.3)^2 +
(0-0)^2\\
& \geq & (|\sum_{i=m}^{j-1}2p_i + h +0.3|- |g|)^2 \\
& \geq & (2p_m +0.3 - (p_m-1))^2 \geq 2.3^2 = 5.29 \ \ \textrm{ and}\\
d^2(\hat{C}_{\sigma^{(h-1)} e_j},\bar{a}_{\sigma^{g} e_m}) & = &
(\sum_{i=0}^{j-1}2p_i + h+0.6 - \sum_{i=0}^{m-1}2p_i - g-1)^2 +
(0-0)^2\\
& \geq & (|\sum_{i=m}^{j-1}2p_i + h|- |g+0.4|)^2 \\
& \geq & (2p_m - (p_m-1)-0.4)^2 \geq (1.6)^2 = 2.56.\\
\end{eqnarray*}

\From the above analysis, it follows that the preference list of
$C_{\sigma^{(h-1)} e_j}$ starts with $c_{\sigma^{(h-1)} e_j}
a_{\sigma^{h} e_j}$ for $1 \leq j \leq l,\;0\leq h \leq p_j-1$
as in Section~\ref{sect:pref-lists}.

Now we carry out a similar analysis to determine the initial part of
the preference list of $A_{\sigma^{h} e_j}$. We note that
$\sum_{i=1}^{l}2p_i = 2n$ and $2\epsilon \cdot 2n = \frac{4n}{100^n}
\leq 0.04$. This implies that in the following analysis we could
upper bound the term $2\epsilon \cdot (\sum_{i=1}^{j}(2p_i) + h +
1)$ by $0.04$.
\begin{eqnarray*}
\epsilon = 1/100^n\; ,& & e_j,e_m \in Rep(\sigma)\;\;,\;\;\;0 \leq f,h \leq p_j-1\;\;,\;\;\;0 \leq g \leq p_m-1\\
d^2(\hat{A}_{\sigma^{h}
e_j},\bar{a}_{\sigma^{h} e_j}) & = & (\sum_{i=0}^{j-1}2p_i + h+1-
\sum_{i=0}^{j-1}2p_i - h-1)^2 \\
& & + \ (\sum_{i=0}^{j-1}2p_i + h+1-\epsilon-0)^2\\
& = & (\sum_{i=0}^{j-1}2p_i + h+ 1-\epsilon)^2
\end{eqnarray*}
\begin{eqnarray*}
d^2(\hat{A}_{\sigma^{h} e_j},\bar{b}_{\rho\sigma^{h} e_j}) & = &
(\sum_{i=0}^{j-1}2p_i + h+1- 0)^2 \\
& & + \ (\sum_{i=0}^{j-1}2p_i + h+1-\epsilon-\sum_{i=0}^{j-1}2p_i - h-1)^2\\
& = & (\sum_{i=0}^{j-1}2p_i + h+ 1)^2 + \epsilon^2
\end{eqnarray*}
\begin{eqnarray*}
d^2(\hat{A}_{\sigma^{h} e_j},\bar{c}_{\sigma^{h-1} e_j}) & = &
(\sum_{i=0}^{j-1}2p_i + h+1- \sum_{i=0}^{j-1}2p_i - h-0.3)^2 \\
& & + \ (\sum_{i=0}^{j-1}2p_i + h+1-\epsilon-0)^2\\
& = & 0.7^2 + (\sum_{i=0}^{j-1}2p_i + h+ 1-\epsilon)^2 \\
& = & (\sum_{i=0}^{j-1}2p_i + h+ 1-\epsilon)^2 + 0.49\\
& \geq & (\sum_{i=0}^{j-1}2p_i + h+ 1)^2 + \epsilon^2 + 0.45
\end{eqnarray*}
For $h \neq f$,
\begin{eqnarray*}
d^2(\hat{A}_{\sigma^{h} e_j},\bar{a}_{\sigma^{f} e_j}) & = &
(\sum_{i=0}^{j-1}2p_i + h+1- \sum_{i=0}^{j-1}2p_i - f-1)^2  \\
   & & + \ (\sum_{i=0}^{j-1}2p_i + h+1-\epsilon-0)^2\\
& = & (h-f)^2 + (\sum_{i=0}^{j-1}2p_i + h+ 1-\epsilon)^2  \\
& \geq & (\sum_{i=0}^{j-1}2p_i + h+ 1-\epsilon)^2 + 1\\
& \geq & (\sum_{i=0}^{j-1}2p_i + h+ 1)^2 + \epsilon^2 + 0.96
\end{eqnarray*}
\begin{eqnarray*}
d^2(\hat{A}_{\sigma^{h} e_j},\bar{b}_{\rho\sigma^{f} e_j}) & = &
(\sum_{i=0}^{j-1}2p_i + h+1- 0)^2 \\
  & & + \ (\sum_{i=0}^{j-1}2p_i + h+1-\epsilon-\sum_{i=0}^{j-1}2p_i - f-1)^2\\
& = & (\sum_{i=0}^{j-1}2p_i + h+ 1)^2 + (h-f-\epsilon)^2 \\
& \geq & (\sum_{i=0}^{j-1}2p_i + h+ 1)^2 + (1-\epsilon)^2\\
& \geq & (\sum_{i=0}^{j-1}2p_i + h+ 1)^2 + \epsilon^2 + 0.98
\end{eqnarray*}
\begin{eqnarray*}
d^2(\hat{A}_{\sigma^{h} e_j},\bar{c}_{\sigma^{f-1} e_j}) & = &
(\sum_{i=0}^{j-1}2p_i + h+1- \sum_{i=0}^{j-1}2p_i - f-0.3)^2 \\
   &  & + \ (\sum_{i=0}^{j-1}2p_i + h+1-\epsilon-0)^2\\
& = & (h-f-0.7)^2 + (\sum_{i=0}^{j-1}2p_i + h+ 1-\epsilon)^2 \\
& \geq & (\sum_{i=0}^{j-1}2p_i + h+ 1-\epsilon)^2 + 0.09\\
& \geq & (\sum_{i=0}^{j-1}2p_i + h+ 1)^2 + \epsilon^2 + 0.05\\
\end{eqnarray*}

\noindent For  $m > j$,
\begin{eqnarray*}
d^2(\hat{A}_{\sigma^{h} e_j},\bar{a}_{\sigma^{g} e_m}) & = &
(\sum_{i=0}^{j-1}2p_i + h+1- \sum_{i=0}^{m-1}2p_i - g-1)^2 \\
  &  & + \ (\sum_{i=0}^{j-1}2p_i + h+1-\epsilon-0)^2\\
& \geq & (|\sum_{i=j}^{m-1}2p_i + g| - |h|)^2 + (\sum_{i=0}^{j-1}2p_i +
h+ 1-\epsilon)^2\\
& \geq & (2p_j - (p_j-1))^2 + (\sum_{i=0}^{j-1}2p_i + h+ 1-\epsilon)^2 \\
& \geq & (\sum_{i=0}^{j-1}2p_i + h+ 1-\epsilon)^2 + 4\\
& \geq & (\sum_{i=0}^{j-1}2p_i + h+ 1)^2 + \epsilon^2 + 3.96
\end{eqnarray*}
\begin{eqnarray*}
d^2(\hat{A}_{\sigma^{h} e_j},\bar{b}_{\rho\sigma^{g} e_m}) & = &
(\sum_{i=0}^{j-1}2p_i + h+1- 0)^2 \\
 & & + \ (\sum_{i=0}^{j-1}2p_i + h+1-\epsilon-\sum_{i=0}^{m-1}2p_i - g-1)^2\\
& \geq & (\sum_{i=0}^{j-1}2p_i + h+ 1)^2 + (|\sum_{i=j}^{m-1}2p_i + g
+\epsilon| - |h|)^2\\
& \geq & (\sum_{i=0}^{j-1}2p_i + h+ 1)^2 + (2p_j +\epsilon - (p_j-1))^2 \\
& \geq & (\sum_{i=0}^{j-1}2p_i + h+ 1)^2 + (2 +\epsilon)^2 \\
& \geq & (\sum_{i=0}^{j-1}2p_i + h+ 1)^2 + \epsilon^2 +4
\end{eqnarray*}
\begin{eqnarray*}
d^2(\hat{A}_{\sigma^{h} e_j},\bar{c}_{\sigma^{g-1} e_m}) & = &
(\sum_{i=0}^{j-1}2p_i + h+1- \sum_{i=0}^{m-1}2p_i - g-0.3)^2 \\
 & & + \ (\sum_{i=0}^{j-1}2p_i + h+1-\epsilon-0)^2\\
& \geq & (|\sum_{i=j}^{m-1}2p_i + g|-|h+0.7|)^2 + (\sum_{i=0}^{j-1}2p_i
+ h+ 1-\epsilon)^2 \\
& \geq &(2p_j-(p_j-1) -0.7)^2 + (\sum_{i=0}^{j-1}2p_i + h+
1-\epsilon)^2\\
& \geq & (2-0.7)^2 + (\sum_{i=0}^{j-1}2p_i + h+ 1-\epsilon)^2 \\
& = & (\sum_{i=0}^{j-1}2p_i + h+ 1-\epsilon)^2 + 1.69\\
& \geq & (\sum_{i=0}^{j-1}2p_i + h+ 1)^2 + \epsilon^2 + 1.65
\end{eqnarray*}

\noindent For $j > m$,
\begin{eqnarray*}
d^2(\hat{A}_{\sigma^{h} e_j},\bar{a}_{\sigma^{g} e_m}) & = &
(\sum_{i=0}^{j-1}2p_i + h+1- \sum_{i=0}^{m-1}2p_i - g-1)^2 \\
 & & + \ (\sum_{i=0}^{j-1}2p_i + h+1-\epsilon-0)^2\\
& \geq & (|\sum_{i=m}^{j-1}2p_i + h| - |g|)^2 + (\sum_{i=0}^{j-1}2p_i +
h+ 1-\epsilon)^2\\
& \geq & (2p_m - (p_m-1))^2 + (\sum_{i=0}^{j-1}2p_i + h+ 1-\epsilon)^2 \\
& \geq & (\sum_{i=0}^{j-1}2p_i + h+ 1-\epsilon)^2 + 4\\
&\geq & (\sum_{i=0}^{j-1}2p_i + h+ 1)^2 + \epsilon^2 + 3.96
\end{eqnarray*}
\begin{eqnarray*}
d^2(\hat{A}_{\sigma^{h} e_j},\bar{b}_{\rho\sigma^{g} e_m}) & = &
(\sum_{i=0}^{j-1}2p_i + h+1- 0)^2 \\
 & & + \ (\sum_{i=0}^{j-1}2p_i + h+1-\epsilon-\sum_{i=0}^{m-1}2p_i - g-1)^2\\
& \geq & (\sum_{i=0}^{j-1}2p_i + h+ 1)^2 + (|\sum_{i=m}^{j-1}2p_i + h| - |g +\epsilon|)^2\\
& \geq & (\sum_{i=0}^{j-1}2p_i + h+ 1)^2 + (2p_m -(p_m -1) -\epsilon)^2 \\
& \geq & (\sum_{i=0}^{j-1}2p_i + h+ 1)^2 + (2 -\epsilon)^2\\
& \geq & (\sum_{i=0}^{j-1}2p_i + h+ 1)^2 + \epsilon^2 + 3.96
\end{eqnarray*}
\begin{eqnarray*}
d^2(\hat{A}_{\sigma^{h} e_j},\bar{c}_{\sigma^{g-1} e_m}) & = &
(\sum_{i=0}^{j-1}2p_i + h+1- \sum_{i=0}^{m-1}2p_i - g-0.3)^2 \\
 & & + \ (\sum_{i=0}^{j-1}2p_i + h+1-\epsilon-0)^2\\
& \geq &(|\sum_{i=m}^{j-1}2p_i + h +0.7|-|g|)^2 +
(\sum_{i=0}^{j-1}2p_i
+ h+ 1-\epsilon)^2 \\
& \geq & (2p_m-(p_m-1) + 0.7)^2 + (\sum_{i=0}^{j-1}2p_i + h+
1-\epsilon)^2\\
& \geq & (2+ 0.7)^2 + (\sum_{i=0}^{j-1}2p_i + h+ 1-\epsilon)^2 \\
& = & (\sum_{i=0}^{j-1}2p_i + h+ 1-\epsilon)^2 + 7.29\\
& \geq & (\sum_{i=0}^{j-1}2p_i + h+ 1)^2 + \epsilon^2 + 7.25
\end{eqnarray*}
From the above analysis, it follows that the preference list of
$A_{\sigma^{h} e_j}$ starts with $a_{\sigma^{h} e_j} b_{\rho
\sigma^{h} e_j}$ for $1\leq j \leq l,\; 0 \leq h \leq p_j-1$
as in Section~\ref{sect:pref-lists}.

Last, we study the preference list of $B_{\sigma^{h} e_j}$.
First we will show that the preference list of man
$B_{\sigma^{h} e_j}$, where $1\leq j \leq l,\;0\leq h\leq
p_j-1$,  starts with
$$b_{\rho\sigma^{(p_l-1)}e_l}b_{\rho\sigma^{(p_l-2)}e_l}\cdots
b_{\rho e_l} b_{\rho\sigma^{(p_{l-1}-1)}e_{l-1}}\cdots b_{\rho
e_{l-1}}\cdots b_{\rho\sigma^{(p_1-1)}e_1}\cdots b_{\rho e_1}.$$
We
obtain the above preference list by comparing distances between
$\hat{B}_{\sigma^{h} e_j}$ and the positions of the women.
\begin{eqnarray*}
e_j,e_k \in Rep(\sigma), & &0 \leq h \leq p_j-1\;\;,\;\;\;\;0 \leq f \leq p_k-1\\
d^2(\hat{B}_{\sigma^{h} e_j},\bar{b}_{\rho\sigma^{f} e_k}) & = &
(\sum_{i=0}^{j-1}2p_i + h+1- 0)^2 +
(1000^{n}-\sum_{i=0}^{k-1}2p_i - f-1)^2\\
& \leq & (\sum_{i=0}^{j-1}2p_i + h+1)^2 +
(1000^{n} - 1)^2 < (1000^{n})^2
\end{eqnarray*}
\begin{eqnarray*}
d^2(\hat{B}_{\sigma^{h} e_j},\bar{a}_{\sigma^{f} e_k}) & = &
(\sum_{i=0}^{j-1}2p_i + h+1- \sum_{i=0}^{k-1}2p_i - f-1)^2 +
(1000^{n}-0)^2\\
& \geq & (1000^{n})^2
\end{eqnarray*}
\begin{eqnarray*}
d^2(\hat{B}_{\sigma^{h} e_j},\bar{c}_{\sigma^{f-1} e_k}) & = &
(\sum_{i=0}^{j-1}2p_i + h+1- \sum_{i=0}^{k-1}2p_i - f-0.3)^2 +
(1000^{n}-0)^2\\
& \geq & (1000^{n})^2\\
\end{eqnarray*}

It immediately follows that the $b$-women are all closer to
$\hat{B}_{\sigma^{h} e_j}$ than any of the $a$-women or $c$-women.
Hence, the preference list of $B_{\sigma^{h} e_j}$ would start with
all the $b$-women coming first. We also note that the $b$-women all
have their $x$-component set to 0. Hence, $B_{\sigma^{h} e_j}$ would
rank the $b$-women by measuring their distance from
$\hat{B}_{\sigma^{h} e_j}$ in the $y$-component. We also note that
$1000^n - 2n > 0$ for $n \geq 1$. Next we compare distances between
$\hat{B}_{\sigma^{h} e_j}$ and the $b$-women only using the
$y$-component. We will use the notation $d_y(\cdot,\cdot)$ to denote
the distance in the $y$-component.
\begin{align*}
e_j,e_{k_1},e_{k_2} \in Rep(\sigma)\;\;,\;\;\;0 \leq h\leq p_j-1\;\;,\;\;\;
&0 \leq g \leq p_{k_1}-1\;\;,\;\;\;0 \leq f \leq p_{k_2}-1
\end{align*}

\noindent For $k_1 = k_2$ and $g > f$, we have
\begin{eqnarray*}
d_y(\hat{B}_{\sigma^{h} e_j},\bar{b}_{\rho\sigma^{g} e_{k_1}}) & = &
1000^{n}-\sum_{i=0}^{k_{1}-1}2p_i - g-1 <
1000^{n}-\sum_{i=0}^{k_{1}-1}2p_i - f -1\\
& = & 1000^{n}-\sum_{i=0}^{k_{2}-1}2p_i - f -1 =
d_y(\hat{B}_{\sigma^{h}
e_j},\bar{b}_{\rho\sigma^{f} e_{k_2}}).
\end{eqnarray*}

\noindent For $k_1 > k_2$, we have
\begin{eqnarray*}
d_y(\hat{B}_{\sigma^{h} e_j},\bar{b}_{\rho\sigma^{f} e_{k_2}})
   - d_y(\hat{B}_{\sigma^{h} e_j},\bar{b}_{\rho\sigma^{g} e_{k_1}}) & = &
       (1000^{n}-\sum_{i=0}^{k_{2}-1}2p_i - f-1) \\
   & & \ - \ (1000^{n}-\sum_{i=0}^{k_{1}-1}2p_i - g -1)\\
& = & \sum_{i=0}^{k_{1}-1}2p_i + g -\sum_{i=0}^{k_{2}-1}2p_i - f = \sum_{i=k_2}^{k_{1}-1}2p_i + g - f\\
& \geq & 2p_{k_{2}} + g - (p_{k_{2}} - 1) = p_{k_2} + 1 + g > 0.
\end{eqnarray*}
From the above discussion, it follows that the preference list of
$B_{\sigma^{h} e_j}$ starts with
$$b_{\rho\sigma^{(p_l-1)}e_l}b_{\rho\sigma^{(p_l-2)}e_l}\cdots
b_{\rho e_l} b_{\rho\sigma^{(p_{l-1}-1)}e_{l-1}}\cdots b_{\rho
e_{l-1}}\cdots b_{\rho\sigma^{(p_1-1)}e_1}\cdots b_{\rho e_1}.$$
Next
we compare the distances of $a$-women and $c$-women from
$\hat{B}_{\sigma^{h} e_j}$. As $a$-women and $c$-women all have
their $y$-component set to 0, $B_{\sigma^{h} e_j}$ would rank the
$b$-women by measuring their distance from $\hat{B}_{\sigma^{h}
e_j}$ in the $x$-component. We will use the notation
$d_x(\cdot,\cdot)$ to denote the distance in the $x$-component. We
consider two cases (i) $h \neq p_i-1,\; 1\leq i \leq l$, (ii) $h =
p_i-1,\;$ for some $i \in \{1,2,\cdots,l\}$.

Case(i) $h \neq p_i-1,\; 1\leq i \leq l$: Now we compute and compare
the distances between $\hat{B}_{\sigma^{h} e_j}$ and the $a$-women
and the $c$-women.
$$
e_j,e_{k} \in Rep(\sigma)\;\;,\;\;\; 0 \leq h\leq p_j-2\;\;,\;\;\;
0 \leq g \leq p_{k}-1\;\;,\\
$$
For $k = j$ and $g = h$, we have
\begin{eqnarray*}
d_x(\hat{B}_{\sigma^{h} e_j},\bar{a}_{\sigma^{g} e_{k}}) & = &
d_x(\hat{B}_{\sigma^{h} e_j},\bar{a}_{\sigma^{h} e_{j}}) \\
& = & |\sum_{i=0}^{j-1}2p_i + h+1 -\sum_{i=0}^{j-1}2p_i - h-1| = 0 \ \ \textrm { and }\\
d_x(\hat{B}_{\sigma^{h} e_j},\bar{c}_{\sigma^{g} e_{k}}) &= &
d_x(\hat{B}_{\sigma^{h} e_j},\bar{c}_{\sigma^{h} e_{j}}) \\
& = & |\sum_{i=0}^{j-1}2p_i + h+1 -\sum_{i=0}^{j-1}2p_i - (h+1)-0.3| = 0.3.
\end{eqnarray*}
For $k = j$ and $g \notin \{h,p_j-1\}$, we have
\begin{eqnarray*}
d_x(\hat{B}_{\sigma^{h} e_j},\bar{a}_{\sigma^{g} e_{k}}) & = &
d_x(\hat{B}_{\sigma^{h} e_j},\bar{a}_{\sigma^{g} e_{j}}) \\
& = & |\sum_{i=0}^{j-1}2p_i + h+1 -\sum_{i=0}^{j-1}2p_i - g-1| = |h-g| \geq 1\ \ \textrm{ and }\\
d_x(\hat{B}_{\sigma^{h} e_j},\bar{c}_{\sigma^{g} e_{k}}) & = &
d_x(\hat{B}_{\sigma^{h} e_j},\bar{c}_{\sigma^{g} e_{j}}) \\
& = & |\sum_{i=0}^{j-1}2p_i + h+1 -\sum_{i=0}^{j-1}2p_i - g-1-0.3| \geq ||h-g|-0.3| \geq 0.7.
\end{eqnarray*}
For $k = j$ and $g = p_j-1$, we have
\begin{eqnarray*}
d_x(\hat{B}_{\sigma^{h} e_j},\bar{a}_{\sigma^{g} e_{k}}) & = &
d_x(\hat{B}_{\sigma^{h} e_j},\bar{a}_{\sigma^{p_j-1} e_{j}}) \\
& = & |\sum_{i=0}^{j-1}2p_i + h+1 -\sum_{i=0}^{j-1}2p_i - (p_j-1)-1|\\
& = & |h-(p_j-1)| \geq 1 \;\;\;(\textrm{because } h \neq p_j-1) \ \ \textrm{ and}\\
d_x(\hat{B}_{\sigma^{h} e_j},\bar{c}_{\sigma^{g} e_{k}}) & = &
d_x(\hat{B}_{\sigma^{h} e_j},\bar{c}_{\sigma^{p_j-1} e_{j}}) \ = \
d_x(\hat{B}_{\sigma^{h} e_j},\bar{c}_{\sigma^{-1} e_{j}})\\
& = &
|\sum_{i=0}^{j-1}2p_i + h+1 -\sum_{i=0}^{j-1}2p_i -0.3| \geq |h+0.7| \geq 0.7.\\
\end{eqnarray*}

\noindent For $k > j$ we have
\begin{eqnarray*}
d_x(\hat{B}_{\sigma^{h} e_j},\bar{a}_{\sigma^{g} e_{k}}) & = &
|\sum_{i=0}^{j-1}2p_i + h+1 -\sum_{i=0}^{k-1}2p_i - g-1| \\
& = & |\sum_{i=j}^{k-1}2p_i + g - h| \\
& \geq & |2p_j + g -(p_j - 1)| \geq |p_j + 1 + g| \geq 2 \ \ \textrm{ and } \\
d_x(\hat{B}_{\sigma^{h} e_j},\bar{c}_{\sigma^{g-1} e_{k}}) & = &
|\sum_{i=0}^{j-1}2p_i + h+1 -\sum_{i=0}^{k-1}2p_i - g-0.3| \\
& = & |\sum_{i=j}^{k-1}2p_i + g - h-0.7| \\
& \geq & |2p_j + g -(p_j - 1)-0.7| \geq |p_j + 1 + g-0.7| \geq 1.3.
\end{eqnarray*}

\noindent For $k < j$ we have
\begin{eqnarray*}
d_x(\hat{B}_{\sigma^{h} e_j},\bar{a}_{\sigma^{g} e_{k}}) & = &
|\sum_{i=0}^{j-1}2p_i + h+1 -\sum_{i=0}^{k-1}2p_i - g-1| = |\sum_{i=k}^{j-1}2p_i + h - g| \\
& \geq & |2p_k + h -(p_k - 1)| \geq |p_k + 1 + h| \geq 2 \ \ \textrm{ and} \\
d_x(\hat{B}_{\sigma^{h} e_j},\bar{c}_{\sigma^{g-1} e_{k}}) & = &
|\sum_{i=0}^{j-1}2p_i + h+1 -\sum_{i=0}^{k-1}2p_i - g-0.3| = |\sum_{i=k}^{j-1}2p_i + h +0.7 - g| \\
& \geq & |2p_k + h +0.7 -(p_k - 1)| \geq |p_k + 1.7 + h| \geq 2.7.
\end{eqnarray*}
It follows from the comparison that $B_{\sigma^{h} e_j}$ prefers
$a_{\sigma^{h} e_j}$ over $c_{\sigma^{h} e_j}$ and $c_{\sigma^{h}
e_j}$ over any other $a$-woman and the $c$-woman. Hence, the initial
part of the preference list of $B_{\sigma^{h} e_j}$ reads
$$b_{\rho\sigma^{(p_l-1)}e_l}b_{\rho\sigma^{(p_l-2)}e_l}\cdots
b_{\rho e_l} b_{\rho\sigma^{(p_{l-1}-1)}e_{l-1}}\cdots b_{\rho
e_{l-1}}\cdots b_{\rho\sigma^{(p_1-1)}e_1}\cdots b_{\rho e_1}
a_{\sigma^{h} e_j}c_{\sigma^{h} e_j}.$$

Case(ii) $h = p_i-1,\;$ for some $i \in \{1,2,\cdots,l\}$: Now we
compare the distances between $\hat{B}_{\sigma^{h} e_j}$ and the
$a$-women and the $c$-women.
$$
e_j,e_{k} \in Rep(\sigma), \ \ h=p_j-1, \ \ 0 \leq g \leq p_{k}-1,
$$
For $k = j$ and $g = h$, we have
\begin{eqnarray*}
d_x(\hat{B}_{\sigma^{h} e_j},\bar{a}_{\sigma^{g} e_{k}}) & = &
d_x(\hat{B}_{\sigma^{p_j-1} e_j},\bar{a}_{\sigma^{p_j-1} e_{j}}) \\
& = & |\sum_{i=0}^{j-1}2p_i + h+1 -\sum_{i=0}^{j-1}2p_i - h-1| = 0 \ \ \textrm{ and}\\
d_x(\hat{B}_{\sigma^{h} e_j},\bar{c}_{\sigma^{g} e_{k}}) & = &
d_x(\hat{B}_{\sigma^{p_j-1} e_j},\bar{c}_{\sigma^{p_j-1} e_{j}}) =
d_x(\hat{B}_{\sigma^{p_j-1} e_j},\bar{c}_{\sigma^{-1} e_{j}}) \\
& = & |\sum_{i=0}^{j-1}2p_i + p_j -\sum_{i=0}^{j-1}2p_i -0.3| = p_j - 0.3.\\
\end{eqnarray*}
For $k > j$ we have
\begin{eqnarray*}
d_x(\hat{B}_{\sigma^{p_j-1} e_j},\bar{a}_{\sigma^{g} e_{k}}) & = &
|\sum_{i=0}^{j-1}2p_i + p_j -\sum_{i=0}^{k-1}2p_i - g-1| \\
& = & |\sum_{i=j}^{k-1}2p_i + g + 1 - p_j| \ \geq \  |2p_j + g +1 -p_j | \\
& \geq & |p_j + 1 + g| \geq p_j + 1 \ \ \textrm{ and} \\
d_x(\hat{B}_{\sigma^{h} e_j},\bar{c}_{\sigma^{g-1} e_{k}}) & = &
|\sum_{i=0}^{j-1}2p_i + p_j -\sum_{i=0}^{k-1}2p_i - g-0.3| \\
& = & |\sum_{i=j}^{k-1}2p_i + g +0.3 - p_j| \ \geq \  |2p_j + g +0.3 -p_j| \\
& \geq & |p_j + 0.3 + g| \geq p_j +0.3.\\
\end{eqnarray*}
For $k < j$ we have
\begin{eqnarray*}
d_x(\hat{B}_{\sigma^{p_j-1} e_j},\bar{a}_{\sigma^{g} e_{k}}) & = &
|\sum_{i=0}^{j-1}2p_i + p_j -\sum_{i=0}^{k-1}2p_i - g-1| \\
& = & |\sum_{i=k}^{j-1}2p_i + p_j - g-1| \ \geq \ |2p_k + p_j -p_k | \\
& \geq & |p_k + p_j| \geq p_j + 1 \ \ \textrm{ and} \\
d_x(\hat{B}_{\sigma^{h} e_j},\bar{c}_{\sigma^{g-1} e_{k}}) & = &
|\sum_{i=0}^{j-1}2p_i + p_j -\sum_{i=0}^{k-1}2p_i - g-0.3| \\
& = & |\sum_{i=k}^{j-1}2p_i + p_j  - g-0.3| \ \geq \ |2p_k + p_j -(p_k - 1) -0.3| \\
& \geq & |p_k + p_j + 0.7| \geq p_j + 1.7.\\
\end{eqnarray*}
From the above inequalities,  it follows that $B_{\sigma^{h} e_j}$
prefers $a_{\sigma^{p_j-1} e_{j}}$ over $c_{\sigma^{-1} e_{j}}$,
and $c_{\sigma^{-1} e_{j}}$ over $a$-women and $c$-women whose
subscript belongs to $\sigma$ cycles different from that of
$c_{\sigma^{-1} e_{j}}$'s. Now we compute and compare distances from
$\hat{B}_{\sigma^{h} e_j}$ to all the $a$-women and $c$-women whose
subscript is on the same $\sigma$ cycle as $a_{\sigma^{p_j-1}
e_{j}}$'s and $c_{\sigma^{-1} e_{j}}$'s.

\noindent For $0\leq g \leq p_j-2$ we have
\begin{eqnarray*}
& & \hspace*{-.65in} d_x(\hat{B}_{\sigma^{p_j-1} e_j},\bar{c}_{\sigma^{g}
e_{j}}) - d_x(\hat{B}_{\sigma^{p_j-1} e_j},\bar{a}_{\sigma^{g}
e_{j}})\\
& = & |\sum_{i=0}^{j-1}2p_i + p_j -\sum_{i=0}^{j-1}2p_i -
g-1-0.3| -
|\sum_{i=0}^{j-1}2p_i + p_j -\sum_{i=0}^{j-1}2p_i - g-1| \\
& = & p_j-g-1.3 - (p_j-g-1) < 0
\end{eqnarray*}
and for $0\leq g \leq p_j-1$ we have
\begin{eqnarray*}
& & \hspace*{-.80in} d_x(\hat{B}_{\sigma^{p_j-1}
e_j},\bar{a}_{\sigma^{g} e_{j}})-d_x(\hat{B}_{\sigma^{p_j-1}
e_j},\bar{c}_{\sigma^{g-1}
e_{j}})\\
&\hspace*{-.25in} = &|\sum_{i=0}^{j-1}2p_i + p_j
-\sum_{i=0}^{j-1}2p_i - g-1| -
|\sum_{i=0}^{j-1}2p_i + p_j -\sum_{i=0}^{j-1}2p_i - g-0.3| \\
& \hspace*{-.25in} = & p_j-g-1 - (p_j-g-0.3) < 0.
\end{eqnarray*}
From the above comparisons, it follows that $B_{\sigma^{p_j-1} e_j}$
prefers $a_{\sigma^{p_j-1} e_{j}}$ over $c_{\sigma^{p_j-2} e_{j}}$,
$c_{\sigma^{g} e_{j}}$ over $a_{\sigma^{g} e_{j}}$ and
$a_{\sigma^{g} e_{j}}$ over $c_{\sigma^{g-1} e_{j}}$ for $0\leq g
\leq p_j-2$. Stringing these preferences together, we obtain a
portion of $B_{\sigma^{p_j-1} e_j}$'s preference list which appears
as
$$a_{\sigma^{p_j-1} e_{j}}c_{\sigma^{p_j-2} e_{j}}a_{\sigma^{p_j-2} e_{j}}c_{\sigma^{p_j-3}
e_{j}}\cdots c_{\sigma e_{j}}a_{\sigma
e_{j}}c_{e_{j}}a_{e_{j}}c_{\sigma^{-1} e_{j}}(= c_{\sigma^{p_j-1}
e_{j}}).$$

 We remind the reader that
$B_{\sigma^{p_j-1} e_j}$'s preference list has all the $b$-women
appearing at the front appended by the above list of $a$-women and
$c$-women. Hence, the initial part of $B_{\sigma^{p_j-1} e_j}$'s
preference list is
\begin{eqnarray*}
b_{\rho\sigma^{(p_l-1)}e_l}\cdots b_{\rho
e_l}b_{\rho\sigma^{(p_{l-1}-1)}e_{l-1}}\cdots b_{\rho
e_{l-1}}\cdots b_{\rho\sigma^{(p_1-1)}e_1}\cdots b_{\rho e_1} \\
a_{\sigma^{(p_j-1)} e_{j}}c_{\sigma^{(p_j-2)}
e_{j}}a_{\sigma^{(p_j-2)} e_{j}}\cdots
c_{e_{j}}a_{e_{j}}c_{\sigma^{-1} e_{j}}(= c_{\sigma^{(p_j-1)}
e_{j}}).
\end{eqnarray*}

The initial part of the preference lists of men $A_{\sigma^s e_i},
C_{\sigma^{s-1} e_i}$ and $B_{\sigma^s e_i}$ are as follows.
\begin{align*}
e_i&\in Rep(\sigma)\;\;,\\
A_{\sigma^m e_i} \;\;&:\;\;\; a_{\sigma^m e_i} b_{\rho \sigma^m e_i}\;\;,\;\;\;0\leq m \leq p_i-1\\
C_{\sigma^{(m-1)} e_i} \;\;&:\;\;\; c_{\sigma^{(m-1)} e_i} a_{\sigma^{m} e_i}\;\;,\;\;\;0\leq m \leq p_i-1\\
B_{\sigma^m e_i} \;\;&:\;\; b_{\rho\sigma^{(p_l-1)}e_l}\cdots
b_{\rho e_l}b_{\rho\sigma^{(p_{l-1}-1)}e_{l-1}}\cdots b_{\rho
e_{l-1}}\cdots\\
&\;\;\;\;\;b_{\rho\sigma^{(p_1-1)}e_1}\cdots b_{\rho e_1}
a_{\sigma^m e_i} c_{\sigma^m e_i}\;\;,\;\;\;0\leq m \leq p_i-2\\
B_{\sigma^{(p_i-1)} e_i} \;\;&:\;\;
b_{\rho\sigma^{(p_l-1)}e_l}\cdots b_{\rho
e_l}b_{\rho\sigma^{(p_{l-1}-1)}e_{l-1}}\cdots b_{\rho e_{l-1}}\cdots
b_{\rho\sigma^{(p_1-1)}e_1}\cdots b_{\rho e_1} a_{\sigma^{(p_i-1)}}
c_{\sigma^{(p_i-2)} e_i}\\
&\;\;\;\;\;a_{\sigma^{(p_i-2)} e_i}\cdots a_{\sigma e_i}c_{e_i}
a_{e_i}c_{\sigma^{(p_i-1)} e_i}
\end{align*}
Note that these are exactly the same as
those in~(\ref{men}) for any appropriate value of the permutation~$\tau$.
In a similar manner, we can obtain preference lists for the women.
The preference lists for the women are as follows.
\begin{align*}
f_i&\in Rep(\rho)\;\;,\\
b_{\rho^m f_i} \;\;&:\;\;\; A_{\rho^{(m-1)} f_i} B_{\rho^m f_i}\;\;,\;\;\;0\leq m \leq q_i-1\\
c_{\rho^{m} f_i} \;\;&:\;\;\; B_{\rho^{m} f_i} C_{\rho^{m} f_i}\;\;,\;\;\;0\leq m \leq q_i-1\\
a_{\rho^m f_i} \;\;&:\;\; C_{\rho^{(q_k-1)}f_k}\cdots C_{ f_k}
C_{\rho^{(q_{k-1}-1)}f_{k-1}}\cdots C_{f_{k-1}} \cdots\\
&\;\;\;\;\;C_{\rho^{(q_1-1)}f_1}\cdots C_{ f_1}
B_{\rho^m f_i} A_{\rho^m f_i}\;\;,\;\;\;0\leq m \leq q_i-2\\
a_{\rho^{(q_i-1)} f_i} \;\;&:\;\;
C_{\rho^{(q_k-1)}f_k}\cdots C_{ f_k}
C_{\rho^{(q_{k-1}-1)}f_{k-1}}\cdots C_{f_{k-1}}
B_{\rho^{(q_i-1)}f_i}
A_{\rho^{(q_i-2)} f_i}\\
&\;\;\;\;\;B_{\rho^{(q_i-2)} f_i}\cdots B_{\rho f_i}A_{f_i}
B_{f_i}A_{\rho^{(q_i-1)} f_i}
\end{align*}

Now note that, by the construction of the $\rho$ cycles, which go in
order from~$1$ to~$n$,
the list
$$C_{\rho^{(q_k-1)}f_k}\cdots C_{ f_k}
C_{\rho^{(q_{k-1}-1)}f_{k-1}}\cdots C_{f_{k-1}} \cdots
C_{\rho^{(q_1-1)}f_1}\cdots C_{ f_1}$$ is identically $C_n \cdots
C_1$. Thus, the preference lists for the women are identical to
those given in~(\ref{women}). Thus, the rest of the proof is exactly
the same as in the $3$-attribute case, starting from the
introduction of the men's lists (\ref{men}) in
Section~\ref{sect:mfo}.

\end{document}